\theoremstyle{plain}
\newtheorem{Th}{Theorem}[section]
\newtheorem{Cor}[Th]{Corollary}
\newtheorem{Lem}[Th]{Lemma}
\newtheorem{Prop}[Th]{Proposition}
\theoremstyle{definition}
\newtheorem{Ex}{Example}[section]
\theoremstyle{remark}
\newtheorem*{Rem}{Remark}
\numberwithin{equation}{section}
\newcommand{\NN}{{\mathbb N}}
\newcommand{\CC}{{\mathbb C}}
\newcommand{\DD}{{\mathbb D}}
\newcommand{\II}{{\mathbb I}}
\newcommand{\ZZ}{{\mathbb Z}}
\newcommand{\RR}{{\mathbb R}}
\begin{document}

\title[Spectral quantization of random walks, and orthogonal polynomials]
{Spectral quantization of discrete random walks on half-line, \\and orthogonal polynomials on the unit circle}

\author{Adam Doliwa and Artur Siemaszko}

\address{Faculty of Mathematics and Computer Science\\
University of Warmia and Mazury in Olsztyn\\
ul.~S{\l}oneczna~54\\ 10-710~Olsztyn\\ Poland} 
\email{doliwa@matman.uwm.edu.pl, artur@uwm.edu.pl}

%
\keywords{quantum walks, discrete-time random walks, orthogonal polynomials, Szegedy's quantization of Markov chains, CMV matrices, Szeg\H{o} map}
\subjclass[2010]{81P68, 05C81, 42C05, 94A11}

\begin{abstract}
We define quantization scheme for discrete-time random walks on the half-line consistent with Szegedy's quantization of finite Markov chains. Motivated by the Karlin and McGregor description of discrete-time random walks in terms of polynomials orthogonal with respect to a measure with support in the segment $[-1,1]$, we represent the unitary evolution operator of the quantum walk in terms of orthogonal polynomials on the unit circle. We find the relation between transition probabilities of the random walk with the Verblunsky coefficients of the corresponding polynomials of the quantum walk. We show that the both polynomials systems and their measures are connected by the classical  Szeg\H{o} map. Our scheme can be applied to arbitrary Karlin and McGregor random walks and  generalizes the so called Cantero--Gr\"{u}nbaum--Moral--Vel\'{a}zquez method. 
We illustrate our approach on example of random walks related to the Jacobi polynomials.
Then we study quantization of random walks with constant transition probabilities where the corresponding polynomials on the unit circle have two-periodic real Verblunsky coefficients. We present geometric construction of the spectrum of such polynomials (in the general complex case) which generalizes the known construction for the Geronimus polynomials. In the Appendix we present the explicit form, in terms of Chebyshev polynomials of the second kind, of polynomials orthogonal on the unit circle and polynomials orthogonal on the real line with coefficients of arbitrary period. 

\end{abstract}
\maketitle

\section{Introduction}

Randomness and probabilistic methods play an important role in modern computer science~\cite{MitzenmacherUpfal}. In quantum computing~\cite{Hirvensalo,NielsenChuang} the  measurement-induced randomness is inherent and, together with entanglement and interference, provides the three pillars of quantum algorithms. The famous Grover's search algorithm~\cite{Grover}, based on the amplitude amplification, can be put within the quantum walk on a finite graph, i.e. the quantum analogue of a Markov chain~\cite{ShenviKempeWhaley,Szegedy,Portugal}. Recent works show the fundamental role of quantum walks in the domain of quantum algorithms~\cite{1sQEW,Childs,Kempe,Kendon,MNRS,QW-review}. 

Quantum walks come in two variants: discrete and continuous time, and both have been shown to constitute a universal model of quantum computation~\cite{Childs,LCETK}. In this work we focus on the discrete-time model with coins, which consists of a bipartite quantum state, containing information about the position and coin states of a walker, and a unitary evolution operator, composed of the subsequent application of a coin operator, and a shift operator. A general quantization procedure for discrete-time finite Markov chains has been proposed by Szegedy~\cite{Szegedy}. One can associate eigenvalues of the quantum evolution operator and the eigenvalues of a self-adjoint matrix, called the discriminant operator, constructed from the stochastic matrix of an initial Markov process.

Orthogonal polynomials on real line (OPRL)~\cite{Chihara,Geronimus,Ismail,Szego} provide an important tool to investigate stochastic processes \cite{KarlinMcGregor,VanDoornSchrijner,Schoutens,dlIglesia}.The corresponding approach to quantum walks was initiated in~\cite{CMGV-1,CMGV-2} and is based on the theory of orthogonal polynomials on the unit circle (OPUC) \cite{Szego,Geronimus,Golinskii,Simon-OPUC}. Theory of OPUC exploits the canonical form of a unitary operator with the so called Cantero--Moral--Vel\'{a}zquez (CMV) matrices~\cite{CMV-1,CMV-2,Simon-CMV}, which for OPUC play the same role as the Jacobi matrices for OPRL.

The aim of the present work is to provide quantum analogs of arbitrary Karlin and McGregor random walks, thus generalizing the Cantero--Gr\"{u}nbaum--Moral--Vel\'{a}zquez (CGMV)  method~\cite{CMGV-1,CMGV-2}. Our scheme, which we call spectral quantization, provides a meaning to the classical Szeg\H{o} map~\cite{Szego} between OPUC with real Verblunsky coefficients and OPRL with the measure concentrated within the segment $[-1,1]$ of the real line. In our work we give clear definition of quantization of a random walk with arbitrary probabilities, what enables to  compare both processes.

Our paper is organized as follows.
In Section~\ref{sec:preliminaries} we recall basic theory of OPUC, their relation with CMV matrices, the Karlin and McGregor representation of discrete-time random walks, and Szegedy's quantization of finite Markov chains. Section~\ref{sec:Sz-CMV-big} is devoted to application of Szegedy's quantization scheme for random walks on half-line in conjunction with realization of unitary operators in terms of CMV matrices. In particular we show the place of the CGMV method within our approach. Then in Section~\ref{sec:Sz-CMV-spec} we discuss the quantum---classical correspondence on the level of the corresponding orthogonal polynomials and their measures. It turns out that our scheme recovers the classical Szeg\"{o} map between OPRL and OPUC. We illustrate the approach on example of the Jacobi polynomials. We remark also on another ``natural'' way to obtain discrete-time random walks from quantum walks. We devoted Section~\ref{sec:constant-prob} to detailed study of the quantization of discrete-time random walks with constant transition probabilities, what results in OPUC with two-periodic Verblunsky coefficients. In particular we show the explicit forms of both families of polynomials in terms of Chebyshev polynomials of the second kind. We close the section with geometric construction of the spectrum of OPUC with two-periodic complex Verblunsky coefficients, which for constant coefficients reduces to the known such construction for the Geronimus polynomials. Finally, motivated by the findings of the last section we give in the Appendix the explicit form of OPUC and OPRL with coefficients of arbitrary period.

\section{Preliminaries} \label{sec:preliminaries}

\subsection{Orthogonal polynomials on the unit circle}
Let $\DD = \{z \colon \, |z|<1 \} \subset \CC$ be the open unit disk, and let $\mu$  be a measure on  the unit circle  $\partial \DD = \{z \colon \, |z|=1 \}$ parametrized by $z=e^{i\theta}$. Unless stated explicitly, throughout the paper we assume that $\mu$ is nontrivial (i.e., supported on an infinite set) probability measure (i.e., $\mu$ is nonnegative and normalized by $\mu(\partial\DD) = 1)$. 

In the Hilbert space $\mathcal{H} = L^2(\partial\DD,d\mu) $ with the inner product antilinear in the left factor, we define the monic polynomials $\Phi_n(z)$, $n=0,1,2,\dots$ by the Gram-Schmidt orthogonalization procedure of the standard basis $1,z,z^2,\dots $. We have then
\begin{equation*}
\langle \Phi_n , \Phi_m \rangle = \frac{1}{\kappa_n^2} \delta_{nm}, \qquad \kappa_n > 0,
\end{equation*}
and the orthonormal polynomials $\varphi_n = {\kappa_n} \Phi_n$ satisfy
$\langle \varphi_n , \varphi_m \rangle = \delta_{nm}$.

If $P_n$ is a polynomial of degree $n$, define $P^*_n$, the reversed polynomial, by 
\begin{equation*}
P_n^*(z) = z^n \overline{P_n(1/\bar{z})}, \quad \text{i.e.} \quad 
P_n(z) = \sum_{j=0}^n c_j z^j \; \Rightarrow \; P_n^*(z) = \sum_{j=0}^n \bar{c}_{n-j} z^j .
\end{equation*}
The orthogonal polynomials $\Phi_n$ are given by the Szeg\H{o} recurrence
\begin{equation*} \Phi_0(z) = 1, \qquad 
\Phi_{n+1}(z) = z\Phi_n(z) - \bar\alpha_n \Phi^*_n(z), \qquad \alpha_n = - \overline{\Phi_{n+1}(0)}, \qquad n\geq 0, 
\end{equation*}
where the Verblunsky coefficients $\alpha_0, \alpha_1, \alpha_2,\dots$ satisfy $|\alpha_j| < 1$. By Verblunsky's theorem the map $\mu \to \{\alpha_j\}_{j=1}^\infty$ sets-up a bijection between the set of nontrivial probability measures on $\partial\DD$ and $\times_{j=1}^\infty \DD$. The Szeg\H{o} recurrence relations for orthonormal polynomials are
\begin{equation} \label{eq:phi-rec}
\left( \begin{array}{c} \varphi_{n+1}(z)\\\varphi_{n+1}^*(z) \end{array} \right) = \frac{1}{\rho_n} 
\left( \begin{array}{cc} z & -\bar{\alpha}_n \\ -\alpha_n z &1 \end{array} \right)
\left( \begin{array}{c} \varphi_{n}(z)\\\varphi_{n}^*(z) \end{array} \right) = A(\alpha_n) 
\left( \begin{array}{c} \varphi_{n}(z)\\\varphi_{n}^*(z) \end{array} \right),
\end{equation}
\begin{equation*} \rho_n = \sqrt{1 - |\alpha_n|^2}, \qquad \varphi_0(z) = \varphi_0^*(z) \equiv 1.
\end{equation*}

The Carath\'{e}odory function
\begin{equation*}
F(z) = \int_{\partial\DD} \, \frac{e^{i\theta} + z}{e^{i\theta}-z} \, d\mu(\theta)
\end{equation*}
is analytic on $\DD$ with non-negative real part, and normalized by $F(0) = 1$. If
\begin{equation*}
d\mu (\theta) = w(\theta)\frac{d\theta}{2\pi} + d\mu_s(\theta)
\end{equation*}
is decomposition of the measure into absolutely continuous and singular parts then:
\begin{enumerate}
	\item The weight is given by $w(\theta) = \lim_{r\to 1^-} \mathrm{Re} F(re^{i\theta})$.
	\item The singular part $\mu_s$ is concentrated on the set $\{ e^{i\theta} \colon \lim_{r\to 1^-} \mathrm{Re} F(re^{i\theta}) = \infty \}$.
	\item The mass of any point is given by $\mu(\{ \theta \}) = \lim_{r\to 1^-}\frac{1-r}{2} F(r e^{i\theta})$.
\end{enumerate}
The Schur function
\begin{equation*}
f(z) = \frac{1}{z} \frac{F(z) - 1}{F(z) + 1}
\end{equation*}
is analytic on $\DD$ with $|f(z)|<1$ for $z\in \DD$. By Geronimus' theorem $f$ has the following continued fraction decomposition
\begin{equation*}
f(z) = \alpha_0 + \cfrac{\rho_0^2 z}{\bar{\alpha}_0 z +
\cfrac{1}{\alpha_1 + \cfrac{\rho_1^2z}{\bar{\alpha}_1 z + \cfrac{1}{\cdots}}}} .
\end{equation*}

\subsection{Cantero--Moral--Vel\'{a}zquez matrices}
One of the key tools in the case of orthogonal polynomials on the real line is the realization of the measure as the spectral measure of the Jacobi matrix, which comes in as a matrix of multiplication by the real variable $x$. In the case of OPUC the corresponding matrix realization of the measure comes in terms of the CMV  martices.
\begin{Rem}
	These matrices appeared earlier in the numerical analysis literature~\cite{Watkins}.
\end{Rem}

Define the CMV basis $\{\chi_n\}_{n=1}^\infty$  by orthonormalizing the sequence $1,z,z^{-1}, z^2, z^{-2}, \dots$, and define matrix $\mathcal{C}$ by
\begin{equation*}
\mathcal{C}_{mn} = \langle \chi_m , z\chi_n \rangle.
\end{equation*}
The matrix is unitary and pentadiagonal 
\begin{equation*}
\mathcal{C} = \left( \begin{array}{ccccccc}
\bar{\alpha}_0 & \bar{\alpha}_1 \rho_0 & \rho_1\rho_0 & 0    & 0& 0 & \cdots\\
\rho_0  & -\bar{\alpha}_1 \alpha_0 & -\rho_1 \alpha_0 & 0& 0& 0 & \cdots \\
0 & \bar{\alpha}_2 \rho_1 & -\bar{\alpha}_2\alpha_1 & \bar{\alpha}_3 \rho_2 & \rho_3 \rho_2 &0& \cdots \\
0 & \rho_2 \rho_1 & -\rho_2\alpha_1 & -\bar{\alpha}_3 \alpha_2 & - \rho_3 \alpha_2 & 0&\cdots \\
0 & 0 & 0 & \bar{\alpha}_4 \rho_3 & -\bar{\alpha}_4\alpha_3 & \bar{\alpha}_5 \rho_4 & \cdots \\
0 & 0 & 0 & \rho_4 \rho_3 & -\rho_4\alpha_3 & -\bar{\alpha}_5 \alpha_4 & \cdots  \\
\cdots & \cdots & \cdots & \cdots & \cdots & \cdots & \cdots
\end{array} \right),
\end{equation*}
and has decomposition $\mathcal{C} = \mathcal{L}\mathcal{M}$, where
\begin{equation*}
\mathcal{L} = \Theta_0 \oplus \Theta_2 \oplus \Theta_4 \oplus \dots \, , \quad
\mathcal{M} = 1 \oplus \Theta_3 \oplus \Theta_ 5\oplus \dots \, , \qquad \text{and} \quad
\Theta_k = \left( \begin{array}{cc}
\bar{\alpha}_k & \rho_k \\ \rho_k & - \alpha_k
\end{array} \right) .
\end{equation*}
The monic orthogonal polynomials associated to the measure $\mu$ used to define $\mathcal{C}$ can be found by
\begin{equation*}
\Phi_n(z) = \det (z\mathbb{I}_n - \mathcal{C}^{(n)}),
\end{equation*}
where $\mathcal{C}^{(n)}$ is restriction of $\mathcal{C}$ to the upper $n\times n$ block.
The CMV basis can be expressed in terms of $\varphi$ and $\varphi^*$ by
\begin{equation} 
\label{eq:chi-phi}
\chi_{2k}(z) = z^{-k}\varphi_{2k}^*(z), \qquad \chi_{2k+1}(z) = z^{-k} \varphi_{2k+1}(z).
\end{equation}
\begin{Rem}
	In order to have formulas consistent it is custom to define $\alpha_{-1} = -1$.
\end{Rem}
When all $\alpha_k \in \DD$ then the vector $e_0 = (1,0,0,0, \dots)^T$ is cyclic for $\mathcal{C}$ in $\ell^2(\NN)$. In order to formulate the analog of Stone's self-adjoint cyclic model theorem recall that a cyclic unitary model is a unitary operator $U$ on a separable Hilbert space $\mathcal{H}$ with a distinguished unit vector $v_0$ such that finite linear combinations of $\{ U^n v_0 \}_{n\in\ZZ}$ are dense in $\mathcal{H}$. Two cyclic unitary models $(\mathcal{H},U,v_0)$ and $(\tilde{\mathcal{H}}, \tilde{U}, \tilde{v}_0)$ are called equivalent if there is unitary $W$ from $\mathcal{H}$ onto $\bar{\mathcal{H}}$ such that
\begin{equation*}
W U W^{-1} = \tilde{U}, \qquad W v_0 = \tilde{v}_0.
\end{equation*}
It turns out that each cyclic unitary model is equivalent to a unique CMV model $(\ell^2(\NN),\mathcal{C},e_0)$.
\begin{Rem}
	Finite $n\times n$ CMV matrices are characterized by $\alpha_k\in\DD$, $k=0,\dots ,n-2$, and $\alpha_{n-1}\in\partial\DD$. Each unitary equivalence class of $n\times n$ unitary matrices with a cyclic vector contains a unique finite CMV matrix with cyclic vector $e_0 = (1,0,\dots,0)^T$. 
\end{Rem}

\subsection{Karlin and McGregor approach to discrete random walks}
A one-dimensional random walk is a Markov chain in which the particle, if it is in state $k$, $k=0,1,2,\dots$, can in single transition either stay in $k$ or move to one of the adjacent states $k\pm 1$.  The one-step transition probabilities are given by the stochastic matrix
\begin{equation*}
\mathcal{P} = \left( \begin{array}{ccccc}
r_0 & p_0 & 0 & 0 & \cdots \\
q_1 & r_1 & p_1 & 0 & \cdots \\
0 & q_2 & r_2 & p_2 & \cdots \\
0 & 0 & q_3 & r_3 & \cdots \\
\vdots & \vdots & \vdots & \vdots & \ddots
\end{array} \right),
\end{equation*}
where we assume $p_k >0$, $q_{k+1}>0$, $r_k\geq 0$ for $k\geq 0$. The definition of the random walk implies that $p_k + r_k + q_k = 1$ for $k\geq 1$ and $p_0 + r_0 = 1$.
\begin{figure}[!ht]
	\begin{center}
		\includegraphics[width=7cm]{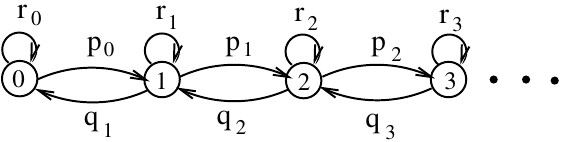}
	\end{center}
	\caption{Discrete random walk on the half-line}
	\label{fig:random-walk}
\end{figure}

Define constants $\pi_k$, $k\geq 0$, by
\begin{equation*}
\pi_0 = 1, \qquad \pi_k = \frac{p_0 p_1 \dots p_{k-1}}{q_1 q_2 \dots q_k}, \quad k\geq 1,
\end{equation*}
and find polynomials $\{Q_k(x)\}_{k=0}^\infty$ by the recurrence relations
\begin{align} \label{eq:3trr}
xQ_k(x) & = q_k Q_{k-1}(x) + r_k Q_k(x) + p_k Q_{k+1}(x), \qquad k\geq 1,\\
Q_0(x) & = 1, \qquad p_0Q_1(x) = x-r_0.
\end{align}
The polynomials are orthogonal with respect to a positive measure in the interval $[-1,1]$ of total mass~$1$ and infinite support, denoted by $\nu$.
The Cauchy--Stieltjes transform of the measure
\begin{equation*}
S(z) = \int_{-1}^1 \frac{d\nu(x)}{x-z} , \qquad z\in\CC\setminus [-1,1],
\end{equation*}
has the following expansion into continued fractions
\begin{equation*}
S(z) = \cfrac{-1}{z-r_0 + \cfrac{p_0 q_1}{z - r_1 + \cfrac{p_1 q_2}{\dots}}}, \qquad \lim_{z\to\infty}S(z) = 0.
\end{equation*}
The $n$-step transition probabilities $P_{ij}(n)$ from state $i$ to state $j$ may be represented as
\begin{equation*}
P_{ij}(n) = \pi_j \int_{-1}^1 x^n Q_i(x) Q_j(x) \, d\nu(x).
\end{equation*}

If the measure is decomposed into absolutely continuous and discrete parts  
\begin{equation*}
 d\nu(x) = u(x)dx + d\nu_s(x),
\end{equation*}
\begin{enumerate}
	\item The weight is given by $u(x) = \lim_{\varepsilon\to 0^+}\frac{1}{\pi} \mathrm{Im}S(x+i\varepsilon)$.
	\item The singular part $\nu_s$ is concentrated on the set $\{ x_0 \colon \lim_{x\to x_0} \mathrm{Im} S(x) = \infty \}$.
	\item The mass of any point is given by $\nu(\{ x_0 \}) = \lim_{\varepsilon\to 0^+} \varepsilon \mathrm{Im} S(x_0 + i\varepsilon )$.
\end{enumerate}
\begin{Rem}
The spectral measure originates from the Jacobi symmetric matrix
\begin{equation} \label{eq:J-CMG}
\mathcal{J} = \left( \begin{array}{ccccc}
r_0 & \sqrt{p_0 q_1} & 0 & 0 & \cdots \\
\sqrt{p_0 q_1} & r_1 & \sqrt{p_1 q_2} & 0 & \cdots \\
0 & \sqrt{p_1 q_2} & r_2 & \sqrt{p_2 q_3} & \cdots \\
0 & 0 & \sqrt{p_1 q_2} & r_3 & \cdots \\
\vdots & \vdots & \vdots & \vdots & \ddots
\end{array} \right),
\end{equation}
which represents the same linear operator but in new basis obtained by suitable scaling of vectors of the starting one.
\end{Rem}

\subsection{Szegedy's quantization of Markov chains}
A discrete-time classical random walk on a finite set of states $V$, where $|V|=N$, can be represented by an $N\times N$ stochastic matrix $P$, whose entry $P_{jk}$ represents the probability of making a transition from $j$ to $k$, in particular $\sum_{k=1}^N P_{jk} = 1$. Such a matrix gives rise to a directed weighted graph $G(V,E)$, where the ordered pair $(j,k)$ of vertices belongs to the edge set $E$ if and only if $P_{jk}>0$. The normalized row eigenvector $\pi\in \RR^N$ of $P$ corresponding to left eigenvalue~$1$ 
\begin{equation*}
\pi P = \pi, \qquad \sum_{k=1}^N \pi_k = 1,
\end{equation*}
represents a stationary state of the random walk.

Definition of Szegedy's quantum walk starts with tensor doubling $\CC^N \otimes \CC^N$ of the state space --- the state $|j \rangle \otimes |k\rangle = | j, k \rangle$ will be interpreted as "particle in position $j$ looks at the position $k$". The first factor is called the position space, and the second factor is called the coin space. The stochastic matrix $P$ allows to define normalized orthogonal vectors
\begin{equation} \label{eq:phi-i}
| \phi_j \rangle =  |j \rangle \otimes \sum_{k=1}^N \sqrt{P_{jk}}\,  |k\rangle = \sum_{k=1}^N \sqrt{P_{jk}}\,  |j, k\rangle. 
\end{equation} 
By $\Pi$ denote the orthogonal projection on the subspace generated by the vectors  $|\phi_j\rangle$, i.e. 
\begin{equation} \label{eq:Pi}
\Pi = \sum_{j=1}^N | \phi_j \rangle \langle \phi_j |,
\end{equation}
then the reflection
\begin{equation} \label{eq:R}
R = 2\Pi - \II,
\end{equation}
which acts on the coin space only, will be called the coin flip operator. Notice that it does not change the vector $|\phi_j\rangle$, which therefore can be considered as quantum analogue of a stationary coin state at the vertex $j$. 

Let $S$ be the operator that swaps the two registers 
\begin{equation} \label{eq:S}
S = \sum_{j,k=1}^N | j  , k \rangle \langle k , j | ,
\end{equation}
then the single step of the quantum walk is defined as the unitary operator being the composition of coin flip and the position swap
\begin{equation} \label{eq:U}
U = S \, R.
\end{equation}
\begin{Prop} \label{prop:Szegedy-Markov}
	The probability of finding the particle in position $k$ after one step of the quantum walk when starting from the state $| \phi_j \rangle$ is $P_{jk}$.
\end{Prop}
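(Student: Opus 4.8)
The plan is to compute the state $U|\phi_j\rangle$ explicitly from the definitions \eqref{eq:phi-i}--\eqref{eq:U} and then read off the position-measurement statistics. The first step is to observe that the vectors $\{|\phi_j\rangle\}_{j=1}^N$ form an orthonormal system in $\CC^N\otimes\CC^N$: from \eqref{eq:phi-i},
\[
\langle \phi_i \, | \, \phi_j \rangle = \langle i | j \rangle \, \sum_{k=1}^N \sqrt{P_{ik}P_{jk}} = \delta_{ij}\sum_{k=1}^N P_{ik} = \delta_{ij},
\]
where the last equality uses $\sum_{k=1}^N P_{jk}=1$. Hence the operator $\Pi$ of \eqref{eq:Pi} is indeed an orthogonal projection onto the span of the $|\phi_j\rangle$ and acts as the identity on each of them, $\Pi|\phi_j\rangle = |\phi_j\rangle$; consequently by \eqref{eq:R} the coin flip operator fixes $|\phi_j\rangle$, that is $R|\phi_j\rangle = (2\Pi-\II)|\phi_j\rangle = |\phi_j\rangle$.

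The second step is to apply the swap $S$. Combining \eqref{eq:S}, \eqref{eq:U} and the previous paragraph,
\[
U|\phi_j\rangle = S\,R\,|\phi_j\rangle = S|\phi_j\rangle = S\sum_{k=1}^N \sqrt{P_{jk}}\,|j,k\rangle = \sum_{k=1}^N \sqrt{P_{jk}}\,|k,j\rangle .
\]
The third step is to interpret ``finding the particle in position $k$'': this is a measurement of the first (position) register in the computational basis, so the sought probability equals $\sum_{l=1}^N \big| \langle k,l \, | \, U|\phi_j\rangle \big|^2$. From the displayed expression for $U|\phi_j\rangle$ only the term $l=j$ contributes, giving $\big|\sqrt{P_{jk}}\big|^2 = P_{jk}$, which is exactly the assertion of the proposition. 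Equivalently one may write the probability as $\big\|(\langle k|\otimes\II)\,U|\phi_j\rangle\big\|^2 = \big\|\sqrt{P_{jk}}\,|j\rangle\big\|^2 = P_{jk}$.

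There is essentially no obstacle in this argument; it is a direct substitution of the definitions. The only place that calls for a word of care is the very first step --- verifying that the $|\phi_j\rangle$ are orthonormal, so that $\Pi$ is genuinely the projection fixing each of them and $R$ therefore leaves $|\phi_j\rangle$ invariant --- and this is precisely where the stochasticity condition $\sum_k P_{jk}=1$ enters. Everything after that is a one-line computation.
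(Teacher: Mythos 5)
Your proposal is correct and follows essentially the same route as the paper: both show $R|\phi_j\rangle=|\phi_j\rangle$ (so $U|\phi_j\rangle=S|\phi_j\rangle=\sum_k\sqrt{P_{jk}}\,|k,j\rangle$) and then sum the squared amplitudes over the coin register to get $P_{jk}$. The only difference is that you spell out the orthonormality of the $|\phi_j\rangle$ explicitly, which the paper treats as part of the preceding setup.
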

\begin{proof}
	The amplitude of finding the particle in the state $|k, \ell \rangle$ equals
	\begin{equation*}
	\langle k,\ell | U \phi_j\rangle
	 = \langle k,\ell | S \phi_j\rangle =
	 \sum_{i=1}^N \sqrt{P_{ji}} \langle k,\ell | i, j \rangle =
	 \sqrt{P_{jk}} \delta_{j\ell},	
	\end{equation*} 
which implies the statement
	\begin{equation*}
\sum_{\ell = 1}^N |\langle k,\ell | U \phi_j\rangle|^2
 = P_{jk}.	
\end{equation*} 
\end{proof}
\begin{Rem}
	Although the above result is very natural and seems to motivate the quantization scheme, we were not able to find it in the literature. 
\end{Rem}
\begin{Rem}
	The  two-step quantum walk operator 
	\begin{equation*}
	U^2 = [S(2\Pi - \II)] [S( 2\Pi - \II)] = [2S\Pi S - \II][2\Pi -\II],
	\end{equation*}
	can be described as composition of the first reflection $R$ (the coin flip) with the second reflection about the subspace generated by the vectors  $S|\phi_j\rangle$. Notice that the second reflection is the position flip with fixed coin. Such an evolution operator is closer to Szegedy's original definition of the quantum walk in the spirit ot Grover's search algorithm~\cite{MNRS}. In fact, in Szegedy's paper~\cite{Szegedy} the second step is made with another stochastic matrix.
\end{Rem}
\begin{figure}[h!]
	\begin{center}
		\includegraphics[width=7cm]{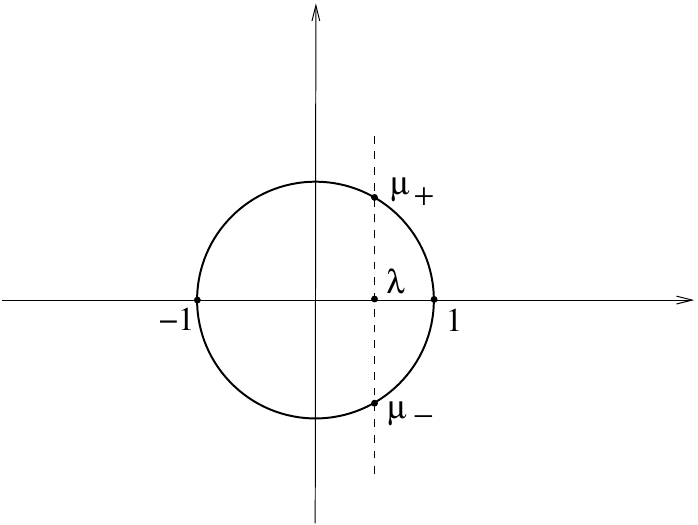}
	\end{center}
	\caption{Spectrum of the quantum walk operator in Szegedy's quantization}
	\label{fig:spect-Sz}
\end{figure}
The spectrum of the quantum walk operator $U$ can be computed directly from its matrix representation in the vertex-coin space. However in~\cite{Szegedy,Childs-CMP} it was shown how the spectrum of intermediate real symmetric matrix $D$, called the discriminant of $P$, with elements $D_{jk} = \sqrt{P_{jk} P_{kj}}$, which acts in the position space,  can be used to simplify the calculation. 
Let
\begin{equation*}
T = \sum_{j=1}^N |\phi_j \rangle \langle j |
\end{equation*}
be the isometry mapping $|j\rangle \in \CC^N$ to $| \phi_j \rangle \in \CC^N \otimes \CC^N$, then $T T^\dagger$ is the projection $\Pi$ onto $\mathrm{span} \{ |\phi_j\rangle \}$ and $T^\dagger T = \II_N$. 
\begin{Prop} \label{prop:spec-Sz}
When $\{ | \lambda \rangle \}$ denotes the complete set of eigenvectors of the $N\times N$ matrix \begin{equation*}
D=T^\dagger U T = T^\dagger S T = \sum_{j,k=1}^N \sqrt{P_{jk} P_{kj}} |j \rangle \langle k|,
\end{equation*}
with eigenvalues $\{ \lambda \}$, then the evolution operator $U$ has the corresponding eigenvectors 
\begin{equation*}
| \mu_\pm \rangle = T|\lambda\rangle - \mu_\pm ST |\lambda \rangle,
\end{equation*} 
with eigenvalues
\begin{equation*}
\mu_\pm = \lambda \pm i\sqrt{1 - \lambda^2} = e^{\pm i \arccos \lambda},
\end{equation*}
as visualized in Figure~\ref{fig:spect-Sz}.
The remaining eigenvalues of $U$ are $\pm 1$ with eigenvectors orthogonal to the subspace spanned by $T|\lambda \rangle$.
\end{Prop}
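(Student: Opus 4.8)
The plan is to reduce the whole statement to two one-line operator identities plus a single $2\times2$ matrix. First I would record the structural facts already in hand: $U = SR = S(2\Pi-\II)$ with $\Pi = TT^\dagger$, while $T^\dagger T = \II_N$, $S^\dagger = S$ and $S^2 = \II$; in particular $\Pi T = T(T^\dagger T) = T$, so indeed $D = T^\dagger U T = T^\dagger S T$. For any unit vector $|v\rangle\in\CC^N$ one then has $|\langle v, Dv\rangle| = |\langle Tv, S\,Tv\rangle| \le \|Tv\|^2 = 1$ because $S$ is a self-adjoint isometry, so $D$ is a real symmetric matrix with $\|D\|\le 1$; hence every eigenvalue $\lambda$ lies in $[-1,1]$, $\sqrt{1-\lambda^2}$ is real, and $|\mu_\pm|^2 = \lambda^2+(1-\lambda^2)=1$, consistent with the unitarity of $U$.

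Next comes the core computation. From $\Pi T = T$ I get $U\,T|\lambda\rangle = S(2\Pi-\II)T|\lambda\rangle = S\,T|\lambda\rangle$. Applying $U$ again and using $S^2=\II$ together with $T^\dagger S T = D$ and $D|\lambda\rangle = \lambda|\lambda\rangle$, I get $U\,S T|\lambda\rangle = S(2\Pi-\II)S T|\lambda\rangle = 2\,S T\,T^\dagger S T|\lambda\rangle - T|\lambda\rangle = 2\lambda\,S T|\lambda\rangle - T|\lambda\rangle$. Thus the (at most two-dimensional) space $\mathcal{V}_\lambda = \mathrm{span}\{T|\lambda\rangle,\ S T|\lambda\rangle\}$ is $U$-invariant, and in the ordered (generally non-orthonormal) basis $(T|\lambda\rangle,\,S T|\lambda\rangle)$ the operator $U|_{\mathcal{V}_\lambda}$ is represented by $\bigl(\begin{smallmatrix}0&-1\\ 1&2\lambda\end{smallmatrix}\bigr)$, whose characteristic polynomial is $\mu^2 - 2\lambda\mu + 1$ with roots $\mu_\pm = \lambda\pm i\sqrt{1-\lambda^2} = e^{\pm i\arccos\lambda}$. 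Solving $(U-\mu_\pm)\bigl(a\,T|\lambda\rangle + b\,S T|\lambda\rangle\bigr)=0$ on $\mathcal{V}_\lambda$ yields $a=1$, $b=-\mu_\pm$, i.e. the eigenvector $|\mu_\pm\rangle = T|\lambda\rangle - \mu_\pm\,S T|\lambda\rangle$.

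Then I would dispose of the remaining spectrum. Since $\{|\lambda\rangle\}$ is an orthonormal basis of $\CC^N$, the closed span $\mathcal{V}$ of all $T|\lambda\rangle$ and $S T|\lambda\rangle$ equals $\mathrm{Ran}\,T + S\,\mathrm{Ran}\,T = \mathrm{Ran}\,\Pi + S\,\mathrm{Ran}\,\Pi$, and by the previous step $\mathcal{V}$ is $U$-invariant; since $U$ is unitary, $\mathcal{V}^\perp$ is $U$-invariant as well. Let $w\in\mathcal{V}^\perp$. Orthogonality of $w$ to every $T|\lambda\rangle$ gives $\Pi w = 0$, hence $R w = -w$ and $U w = -S w$; orthogonality of $w$ to every $S T|\lambda\rangle$ gives, via $S^\dagger=S$, that $S w \perp \mathrm{Ran}\,\Pi$, so $\Pi S w = 0$ and therefore $U(-S w) = -S(2\Pi-\II)S w = S^2 w = w$. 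Thus $U^2 w = w$, so $U$ restricted to $\mathcal{V}^\perp$ is a self-adjoint unitary involution with spectrum contained in $\{+1,-1\}$, and its eigenvectors lie in $\mathcal{V}^\perp\subseteq(\mathrm{Ran}\,\Pi)^\perp$, hence are orthogonal to all $T|\lambda\rangle$. Since $\mathcal{H}=\mathcal{V}\oplus\mathcal{V}^\perp$, matching multiplicities (an eigenvalue $\lambda$ of $D$ of multiplicity $m$ producing $\mu_+$ and $\mu_-$ each with multiplicity $m$) closes the count of all eigenvalues of $U$.

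The genuinely delicate point here is not an estimate but the degenerate bookkeeping. When $\lambda=\pm1$ the pair $T|\lambda\rangle,\ S T|\lambda\rangle$ may become linearly dependent, so $\mathcal{V}_\lambda$ collapses to a one-dimensional eigenspace for $\mu_\pm=\lambda=\pm1$ that overlaps the $\pm1$ part produced from $\mathcal{V}^\perp$; and for distinct $\lambda$ the $U$-invariant subspaces $\mathcal{V}_\lambda$ need not be mutually orthogonal, so independence must be read off the $T$/$ST$ structure rather than assumed. I would handle this by treating $\lambda\in(-1,1)$ (genuine two-dimensional blocks) separately from $\lambda=\pm1$, and by relying on the splitting $\mathcal{H}=\mathcal{V}\oplus\mathcal{V}^\perp$ to force the dimensions to add up. Apart from that, the entire argument is the three-line computation displayed above.
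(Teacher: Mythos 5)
Your proof is correct. The paper itself gives no proof of this Proposition (it defers to the cited works of Szegedy and Childs), and your argument is precisely the standard one used there: the identities $UT|\lambda\rangle = ST|\lambda\rangle$ and $UST|\lambda\rangle = 2\lambda\, ST|\lambda\rangle - T|\lambda\rangle$ reduce everything to a $2\times 2$ block with characteristic polynomial $\mu^2 - 2\lambda\mu + 1$, and the complement $\mathcal{V}^\perp$ carries an involution. One small remark: your worry that the subspaces $\mathcal{V}_\lambda$ for distinct eigenvectors might fail to be mutually orthogonal is unfounded — from $\langle T\lambda, ST\lambda'\rangle = \langle \lambda, D\lambda'\rangle = \lambda'\langle\lambda,\lambda'\rangle$ together with $T^\dagger T = \II_N$ and $S^2=\II$ one sees that orthogonal eigenvectors $|\lambda\rangle \perp |\lambda'\rangle$ give $\mathcal{V}_\lambda \perp \mathcal{V}_{\lambda'}$, so the dimension count is even cleaner than you feared; the only genuine degeneracy is the collapse of $\mathcal{V}_\lambda$ to one dimension when $\lambda=\pm 1$ (where the stated eigenvector formula yields the zero vector), which you correctly flag and which the Proposition itself glosses over.
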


\section{Szegedy's quantization of random walks on half-line} \label{sec:Sz-CMV-big} 
\subsection{The direct application of Szegedy's scheme}
Let us apply Szegedy's quantization scheme to the discrete-time random walks on half-line. The coin space over vertex $k\geq 0$ is spanned by vectors
\begin{equation} \label{eq:vectors-coin}
| 0,0 \rangle, \; | 0, 1 \rangle  , \; \text{for} \; k=0, \; \text{and} \; 
|k, k-1 \rangle, \; |k, k \rangle, \; |k, k+1 \rangle ,  \; \text{for} \;  k >0.
\end{equation}
The corresponding distinguished states read
\begin{equation*}
|\phi_0 \rangle = \sqrt{r_0}\, |0,0\rangle + \sqrt{p_0} \,|0,1\rangle, \; \text{and} \quad 
|\phi_k \rangle = \sqrt{q_k} \,| k, k-1 \rangle + \sqrt{r_k}\, |k,k\rangle + \sqrt{p_k}\, |k,k+1\rangle, \quad k>0;
\end{equation*}
notice that the general formula works also for $k=0$, when taking into account that $q_0 = 0$.
With the lexicographic ordering of the vectors~\eqref{eq:vectors-coin} the quantum evolution operator $U=SR$ has the structure induced by the decompositions
\begin{equation*}
R = R_0 \oplus R_1 \oplus R_2 \oplus \dots
\end{equation*}
where
\begin{equation*}
R_0 = \begin{pmatrix}
2 r_0 - 1 & 2 \sqrt{p_0 r_0} \\
2 \sqrt{p_0 r_0} & 2 p_0 -1 
\end{pmatrix}, \quad \text{and} \quad
R_k = \begin{pmatrix}
2 q_k -1 & 2 \sqrt{q_k r_k} & 2 \sqrt{p_k q_k} \\
2 \sqrt{q_k r_k} & 2 r_k - 1 & 2 \sqrt{p_k r_k} \\
2 \sqrt{p_k q_k} & 2 \sqrt{p_k r_k} & 2 p_k -1 
\end{pmatrix}, \quad k > 0, 
\end{equation*}
and
\begin{equation*}
S = 1 \oplus A \oplus 1 \oplus A \oplus 1 \oplus \dots , \qquad 
A = \begin{pmatrix}
0 & 1 \\ 1 & 0 
\end{pmatrix},
\end{equation*}
as visualized on Figure~\ref{fig:S-walk}.
Define 
\begin{equation*}
|\psi_k \rangle = S|\phi_k \rangle = \sqrt{q_k} \,| k-1, k \rangle + \sqrt{r_k}\, |k,k\rangle + \sqrt{p_k}\, |k+1,k\rangle, \quad k\geq 0,
\end{equation*}
then the elements of the discriminant matrix are given by \begin{equation} \label{eq:D-CMV}
D_{jk} =\langle\phi_j | U \phi_k \rangle = \langle\phi_j | \psi_k \rangle,
\end{equation}
and the matrix  coincides with the Jacobi matrix $\mathcal{J}$~\eqref{eq:J-CMG} of Karlin and McGregor.
\begin{Rem}
	Due to unitarity of $S$ and the orthonormality  $\langle \phi_j | \phi_k \rangle = \delta_{jk}$  we have also $\langle \psi_j | \psi_k \rangle = \delta_{jk}$.
\end{Rem}
\begin{figure}[h!]
	\begin{center}
		\includegraphics[width=13cm]{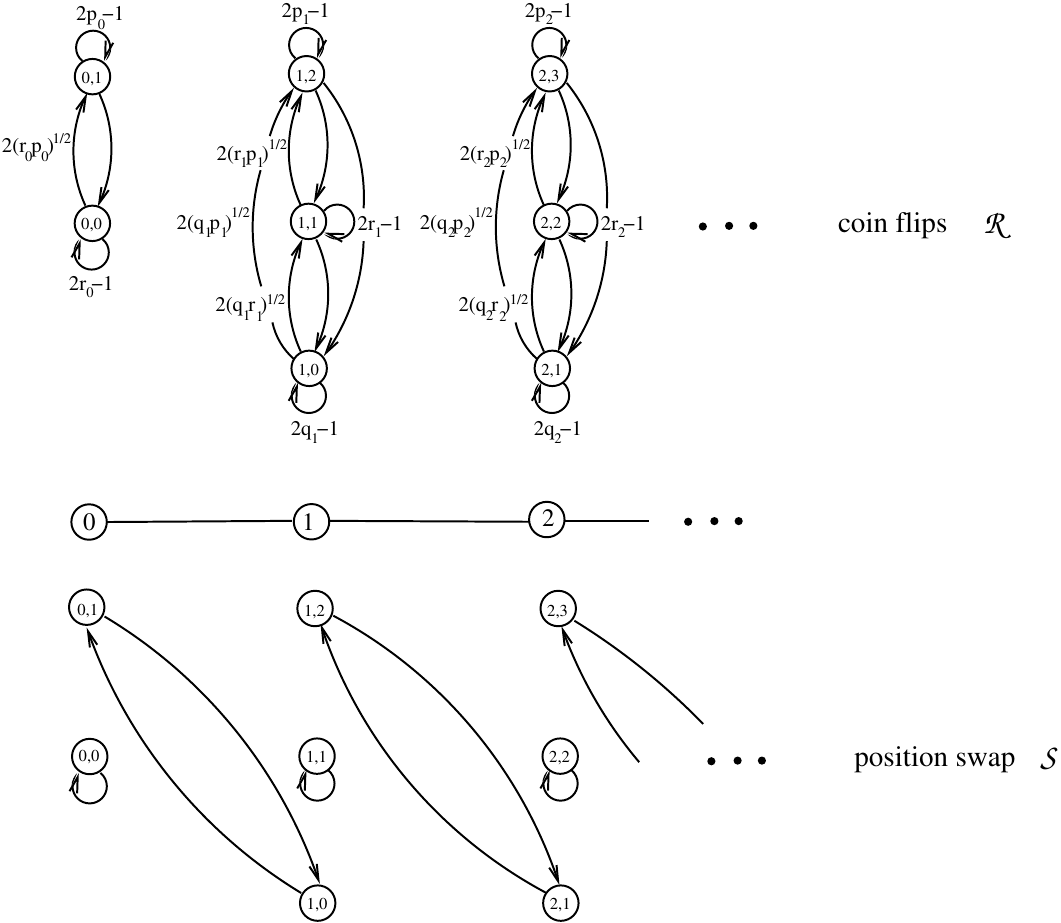}
	\end{center}
	\caption{Szegedy's quantization of the random walk on half-line}
	\label{fig:S-walk}
\end{figure}
\begin{Rem}
	Notice that Proposition~\ref{prop:Szegedy-Markov} applies also to the above quantization of discrete-time random walks.
\end{Rem}
\begin{Ex} \label{ex:r=0}
	When $r_k = 0$, $k=0,1,2,\dots$, then in Szegedy's quantization the vectors $|k,k\rangle$ are invariant (up to a trivial change of sign) with respect to the action of $U$ defined above. After removing them, in the restricted space define $U^\prime = S^\prime R^\prime$
	\begin{equation*}
	R^\prime = 1 \oplus R^\prime_1 \oplus R^\prime_2 \oplus \dots , \qquad R^\prime_k = \begin{pmatrix}
	2q_k - 1 & 2\sqrt{p_k q_k} \\ 2\sqrt{p_k q_k} & 2p_k -1 
	\end{pmatrix},
	\qquad S^\prime = A\oplus A\oplus A\oplus \dots \, .
	\end{equation*}
Then in the CMV picture $S^\prime$ can be identified with the matrix $\mathcal{L}$ for vanishing even Verblunsky coefficients $\alpha_{2k} = 0$, $k=0,1,2,\dots$, while $R^\prime$ plays the role of $\mathcal{M}$ with odd Verblunsky coefficients of the form
\begin{equation*}
\alpha_{2k-1} = q_k - p_k, \quad \text{or} \quad 
p_k = \frac{1}{2}(1-\alpha_{2k-1}), \; q_k = \frac{1}{2}(1+\alpha_{2k-1}) , \quad k = 0,1,2, \dots \,.
\end{equation*}	
The CMV basis is given by the vectors
\begin{equation*}
e_{2k} = |k, k+1\rangle, \qquad e_{2k+1} = | k+1, k \rangle, \qquad k = 0, 1, 2, \dots \,  .
\end{equation*}
See also \cite{CMGV-2} for more detailed discussion of this special reduction in relation to orthogonal polynomials.
\end{Ex}
\begin{Rem}
We would like to point out an important difference between the model of quantum walks studied by us and that considered in \cite{CMGV-1,CMGV-2}. Mainly, in the place of the \emph{swap operator} they use the \emph{shift operator}
	which in our notation reads 
\begin{equation*}
	| k, k\pm 1 \rangle \mapsto | \pm 1, k\pm 2 \rangle , \quad |k,k \rangle \mapsto |k,k\rangle .
\end{equation*}
\end{Rem}
\begin{Rem}
 It should be mentioned that in \cite{CMGV-1,CMGV-2} quantum walks on the full set of integers $\mathbb{Z}$ (with the shift operator) are considered, what leads to two natural extensions of CMV matrices: doubly infinite CMV matrices and block CMV matrices. Such a problem is beyond the scope of the present article.		
\end{Rem}

\subsection{CMV description of Szegedy's quantization of the random walks} \label{sec:Sz-CMV}
Let us go back to the generic case.
By $\mathcal{H}$ denote the Hilbert space with orthonormal basis \eqref{eq:vectors-coin} of the all coin vectors. We will be interested in its subspace $\mathcal{H}_0 = \mathrm{span}\{ | \phi_k \rangle, |\psi_k \rangle \}_{k\geq 0}$, whose importance follows from the following result.
\begin{Lem} For $k\in \NN$
	\begin{align*}
	U^k |\phi_0 \rangle = S(RS)^{k-1} |\phi_0 \rangle  & \in \mathrm{span} \{ |\phi_0 \rangle, |\psi_0 \rangle, \dots , |\phi_{k-1} \rangle,|\psi_{k-1} \rangle \} , \\
	U^{-k} |\phi_0 \rangle = (RS)^k |\phi_0 \rangle& \in \mathrm{span} \{ |\phi_0 \rangle, |\psi_0 \rangle, \dots , |\phi_{k-1} \rangle,|\psi_{k-1} \rangle, |\phi_{k} \rangle \}.
	\end{align*}
\end{Lem}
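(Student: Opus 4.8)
The plan is to first dispose of the two operator identities and then prove the span inclusions by induction, using a few elementary ``local moves'' for the distinguished vectors. Since $R = 2\Pi - \II$ and $S$ are self-adjoint involutions, $R^2 = S^2 = \II$, so $U^{-1} = (SR)^{-1} = RS$ and hence $U^{-k}|\phi_0\rangle = (RS)^k|\phi_0\rangle$. For positive powers, $\Pi|\phi_0\rangle = |\phi_0\rangle$ gives $R|\phi_0\rangle = |\phi_0\rangle$, so $U^k|\phi_0\rangle = (SR)^k|\phi_0\rangle = (SR)^{k-1}S|\phi_0\rangle = S(RS)^{k-1}|\phi_0\rangle$ after reassociation. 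This yields the two stated closed forms.

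For the inclusions I would use three facts about how $R$ and $S$ act on $|\phi_j\rangle$ and $|\psi_j\rangle = S|\phi_j\rangle$: (i) $S|\phi_j\rangle = |\psi_j\rangle$ and $S|\psi_j\rangle = |\phi_j\rangle$ (the latter from $S^2 = \II$); (ii) $R|\phi_j\rangle = |\phi_j\rangle$; and (iii) the key relation
\begin{equation*}
R|\psi_j\rangle = -|\psi_j\rangle + 2\sqrt{p_{j-1}q_j}\,|\phi_{j-1}\rangle + 2 r_j\,|\phi_j\rangle + 2\sqrt{p_j q_{j+1}}\,|\phi_{j+1}\rangle ,
\end{equation*}
which follows from $R = 2\Pi - \II$, so that $R|\psi_j\rangle = 2\sum_i\langle\phi_i|\psi_j\rangle\,|\phi_i\rangle - |\psi_j\rangle$, combined with the identification, made around \eqref{eq:D-CMV}, of the discriminant $D_{ij}=\langle\phi_i|\psi_j\rangle$ with the tridiagonal Jacobi matrix $\mathcal{J}$ of \eqref{eq:J-CMG} (with the convention $q_0 = 0$, so the $|\phi_{-1}\rangle$ term is absent for $j=0$). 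Applying $S$ to (iii) gives $U|\psi_j\rangle = SR|\psi_j\rangle$ as a combination of $|\phi_j\rangle,|\psi_{j-1}\rangle,|\psi_j\rangle,|\psi_{j+1}\rangle$, while $RS|\psi_j\rangle = R|\phi_j\rangle = |\phi_j\rangle$ and $RS|\phi_j\rangle = R|\psi_j\rangle$ is the combination above.

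Now set $W_k = \mathrm{span}\{|\phi_0\rangle,|\psi_0\rangle,\dots,|\phi_{k-1}\rangle,|\psi_{k-1}\rangle\}$ and $W_k' = W_k + \CC|\phi_k\rangle$, so that $W_k\subseteq W_k'\subseteq W_{k+1}$. Together with $U|\phi_j\rangle = |\psi_j\rangle$, the moves above show $U(W_k)\subseteq W_{k+1}$ and $U^{-1}(W_k') = RS(W_k')\subseteq W_{k+1}'$. The base cases are immediate: $U|\phi_0\rangle = |\psi_0\rangle\in W_1$, and $U^{-1}|\phi_0\rangle = R|\psi_0\rangle = -|\psi_0\rangle + 2r_0|\phi_0\rangle + 2\sqrt{p_0 q_1}\,|\phi_1\rangle\in W_1'$. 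The inductive step is then just $U^{k+1}|\phi_0\rangle = U(U^k|\phi_0\rangle)\in U(W_k)\subseteq W_{k+1}$ and $U^{-(k+1)}|\phi_0\rangle = U^{-1}(U^{-k}|\phi_0\rangle)\in RS(W_k')\subseteq W_{k+1}'$, using the Lemma's formulas for the iterated operators.

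I expect the only delicate point to be the index bookkeeping in the inclusions $U(W_k)\subseteq W_{k+1}$ and $RS(W_k')\subseteq W_{k+1}'$ — in particular, tracking why the span for the negative powers must carry the extra vector $|\phi_k\rangle$ while the one for the positive powers need not. This asymmetry comes from the fact that $U$ begins with the ``coin only'' move $|\phi_0\rangle\mapsto|\psi_0\rangle$, whereas $U^{-1} = RS$ advances the position index already at the first step. I do not anticipate any genuine obstacle beyond this accounting.
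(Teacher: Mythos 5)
Your proof is correct and follows essentially the same route as the paper: both derive the closed forms from $R^2=S^2=\II$ and $R|\phi_0\rangle=|\phi_0\rangle$, and both prove the inclusions by induction using the local action of $R$ and $S$ on $|\phi_j\rangle$ and $|\psi_j\rangle$, with the tridiagonal discriminant $\langle\phi_i|\psi_j\rangle$ controlling how far $R|\psi_j\rangle$ spreads. Your version merely makes the Jacobi coefficients and the subspaces $W_k$, $W_k'$ explicit where the paper writes the sums $2\sum_j|\phi_j\rangle\langle\phi_j|\psi_{k-1}\rangle$ directly.
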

\begin{proof}
The first equalities are immediate consequence of the definition of $U$ and of the choice of $|\phi_0 \rangle$ as eigenvector of $R$. The second part will be demonstrated by induction. For $k=1$
\begin{equation*}
U|\phi_0 \rangle = S |\phi_0 \rangle = |\psi_0 \rangle ,\qquad
U^{-1} |\phi_0 \rangle = R S |\phi_0 \rangle = R |\psi_0 \rangle = 
2 |\phi_0 \rangle \langle \phi_0 | \psi_0 \rangle + 
2 |\phi_1 \rangle \langle \phi_1 | \psi_0 \rangle - |\psi_0 \rangle.
\end{equation*}
To show the induction step it is enough to consider the action $U^{\pm 1}$ on the last two vectors. We have
\begin{align*}
U|\phi_{k-1} \rangle & = S |\phi_{k-1} \rangle = |\psi_{k-1} \rangle ,\\
U |\psi_{k-1} \rangle & =S\left( 2 \sum_{j=0}^{k} |\phi_j \rangle \langle \phi_j | \psi_{k-1} \rangle - |\psi_{k-1} \rangle \right) =
2 \sum_{j=0}^{k} |\psi_j \rangle \langle \phi_j | \psi_{k-1} \rangle - |\phi_{k-1} \rangle
\end{align*}
and
\begin{align*}
U^{-1}|\psi_{k-1} \rangle & = R |\phi_{k-1} \rangle = |\phi_{k-1} \rangle ,\\
U^{-1} |\phi_{k} \rangle \; & = R | \psi_k \rangle = 
2 \sum_{j=0}^{k+1} |\phi_j \rangle \langle \phi_j | \psi_{k} \rangle - |\psi_{k} \rangle.
\end{align*}
\end{proof} 
\begin{Prop} \label{prop:pqr-a}
	The CMV basis of the quantum evolution operator for Szegedy's quantization of the random walk on the half-line with the cyclic vector $e_0 = | \phi_0 \rangle$ has the Verblunsky coefficients related with the random walk transition probabilities by the formulas
\begin{align}  \label{eq:qk}
	q_k & = \frac{1}{2} ( 1 + \alpha_{2k-2}) (1 + \alpha_{2k-1}), \\
	\label{eq:rk}
	r_k & = \frac{1}{2} \left( \alpha_{2k}(1-\alpha_{2k-1}) - \alpha_{2k-2}(1+\alpha_{2k-1})\right), \qquad k\geq 0,\\
	\label{eq:pk}
	p_k & = \frac{1}{2} ( 1 - \alpha_{2k-1}) (1 - \alpha_{2k}).
\end{align}
\end{Prop}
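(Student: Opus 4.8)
The plan is to realize $(\mathcal{H}_0,U|_{\mathcal{H}_0},|\phi_0\rangle)$ as a cyclic unitary model, to identify its CMV basis $\{e_n\}$ explicitly in terms of the vectors $|\phi_k\rangle,|\psi_k\rangle$, and then to recover $\mathcal{L}$ and $\mathcal{M}$ as the matrices of $S$ and $R$ in that basis. First note that $\mathcal{H}_0$ is $U$-invariant: $S$ interchanges $|\phi_k\rangle$ with $|\psi_k\rangle$, while $R$ fixes each $|\phi_k\rangle$ and sends $|\psi_k\rangle$ to $2\sum_j|\phi_j\rangle D_{jk}-|\psi_k\rangle\in\mathcal{H}_0$. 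Hence, by the Lemma, $\{U^n|\phi_0\rangle\}_{n\in\ZZ}$ spans $\mathcal{H}_0$, so $(\mathcal{H}_0,U|_{\mathcal{H}_0},|\phi_0\rangle)$ is a cyclic unitary model; by the CMV model theorem recalled above it is equivalent to a unique $(\ell^2(\NN),\mathcal{C},e_0)$ with Verblunsky coefficients $\{\alpha_n\}$, and since $S$, $R$ and $|\phi_0\rangle$ are real in the basis \eqref{eq:vectors-coin}, these $\alpha_n$ are real (so $\alpha_n\in(-1,1)$). Write $\{e_n\}$ for the induced orthonormal basis of $\mathcal{H}_0$, so $e_0=|\phi_0\rangle$.

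Because $\{e_n\}$ is the Gram--Schmidt orthonormalization of $|\phi_0\rangle,U|\phi_0\rangle,U^{-1}|\phi_0\rangle,U^{2}|\phi_0\rangle,U^{-2}|\phi_0\rangle,\dots$, the two inclusions of the Lemma, together with the non-stalling of the process (no nonzero polynomial in $U$ annihilates $|\phi_0\rangle$, since $\mathcal{H}_0$ is infinite-dimensional), yield $\mathrm{span}\{e_0,\dots,e_{2k-1}\}=\mathrm{span}\{|\phi_0\rangle,|\psi_0\rangle,\dots,|\phi_{k-1}\rangle,|\psi_{k-1}\rangle\}$ and $\mathrm{span}\{e_0,\dots,e_{2k}\}=\mathrm{span}\{|\phi_0\rangle,|\psi_0\rangle,\dots,|\phi_{k-1}\rangle,|\psi_{k-1}\rangle,|\phi_k\rangle\}$ for all $k\ge1$. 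Since $S$ permutes $\{|\phi_j\rangle,|\psi_j\rangle\}$ in transpositions it preserves these subspaces, hence each $\mathrm{span}\{e_{2k},e_{2k+1}\}$; since $R$ fixes each $|\phi_j\rangle$ and maps each $|\psi_j\rangle$ into $\mathrm{span}\{|\phi_i\rangle:|i-j|\le1\}+\CC|\psi_j\rangle$ it preserves $\CC e_0$ and each $\mathrm{span}\{e_0,\dots,e_{2k}\}$, hence each $\mathrm{span}\{e_{2k-1},e_{2k}\}$. Thus in $\{e_n\}$ the matrix of $S$ is block diagonal along $(e_0,e_1),(e_2,e_3),\dots$ and that of $R$ is $1\oplus$(blocks along $(e_1,e_2),(e_3,e_4),\dots$). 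Comparing with $\mathcal{C}=SR=\mathcal{L}\mathcal{M}$, the operator $\mathcal{L}^{-1}S=\mathcal{M}R^{-1}$ is block diagonal for both partitions, hence diagonal; using that $S,R$ are symmetric (restrictions to the invariant subspace $\mathcal{H}_0$ of the symmetric operators $S$, $R$, in the real basis $\{e_n\}$) and $\rho_j>0$, one checks this diagonal operator is the identity, so $S=\mathcal{L}$ and $R=\mathcal{M}$. In particular $S$ acts as $\Theta_{2k}$ on $\mathrm{span}\{e_{2k},e_{2k+1}\}$, $R$ acts as $\Theta_{2k-1}$ on $\mathrm{span}\{e_{2k-1},e_{2k}\}$ (all $\Theta$'s real), and $Re_0=e_0$.

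Now the computation. Since $\langle\psi_i|\phi_k\rangle=D_{ki}=0$ for $i\le k-2$, the span identities give $|\phi_k\rangle\in\mathrm{span}\{e_{2k-1},e_{2k}\}$ for $k\ge1$ (while $|\phi_0\rangle=e_0$), and $R|\phi_k\rangle=|\phi_k\rangle$ then forces $|\phi_k\rangle=\pm\bigl[(1+\alpha_{2k-1})e_{2k-1}+\rho_{2k-1}e_{2k}\bigr]\big/\sqrt{2(1+\alpha_{2k-1})}$. Applying the $S$-blocks $Se_{2k}=\alpha_{2k}e_{2k}+\rho_{2k}e_{2k+1}$, $Se_{2k-1}=\rho_{2k-2}e_{2k-2}-\alpha_{2k-2}e_{2k-1}$ (and $Se_0=\alpha_0e_0+\rho_0e_1$) one expands $|\psi_k\rangle=S|\phi_k\rangle$ in $\{e_n\}$, and the discriminant identities $\langle\phi_k|\psi_k\rangle=D_{kk}=r_k$, $|\langle\phi_k|\psi_{k-1}\rangle|^2=D_{k,k-1}^2=p_{k-1}q_k$ (valid by \eqref{eq:D-CMV} and $D=\mathcal{J}$) become, after simplifying $\rho_{2k-1}^2/(1+\alpha_{2k-1})=1-\alpha_{2k-1}$, exactly \eqref{eq:rk} and $p_{k-1}q_k=\tfrac14(1-\alpha_{2k-3})(1-\alpha_{2k-2})(1+\alpha_{2k-2})(1+\alpha_{2k-1})$, with the convention $\alpha_{-1}=-1$. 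Finally, $q_0=0$, $p_0+r_0=1$ and $p_k+q_k+r_k=1$ for $k\ge1$ determine $p_k,q_k$ recursively from the $\alpha_j$: a one-line induction ($p_{k-1}$ known $\Rightarrow q_k=(p_{k-1}q_k)/p_{k-1}\Rightarrow p_k=1-r_k-q_k$) yields \eqref{eq:qk} and \eqref{eq:pk}.

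I expect the main obstacle to be the block matching $S=\mathcal{L}$, $R=\mathcal{M}$ — not merely that each has the correct block shape, but that the residual diagonal sign ambiguity is trivial — together with the boundary bookkeeping ($\alpha_{-1}=-1$, and the disappearance of $\alpha_{-2}$ from the final formulas) needed to make the $k=0,1$ cases conform to the general expressions. A second subtlety is that the discriminant $\mathcal{J}$ only records the products $p_{k-1}q_k$ and the diagonal entries $r_k$, so that separating $p_k$ from $q_k$ genuinely relies on the stochasticity constraints together with $q_0=0$.
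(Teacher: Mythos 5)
Your proof is correct, and its computational core coincides with the paper's: both expand $|\phi_k\rangle$ over $\mathrm{span}\{e_{2k-1},e_{2k}\}$ using the eigenvector condition $R|\phi_k\rangle=|\phi_k\rangle$, apply $S$ to obtain $|\psi_k\rangle$, extract $r_k$ and $p_kq_{k+1}$ from $\langle\phi_k|\psi_k\rangle$ and $\langle\phi_{k+1}|\psi_k\rangle$, and then separate $p_k$ from $q_k$ by stochasticity together with $q_0=0$ (your $k=0,1$ bookkeeping with $\alpha_{-1}=-1$, which makes the spurious $\alpha_{-2}$ drop out of $r_0$, is exactly the paper's). Where you genuinely diverge is in justifying that the coefficients so produced are the Verblunsky coefficients of the CMV model with cyclic vector $|\phi_0\rangle$. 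The paper takes the constructive route: it \emph{defines} $e_{j+1}$ and $\alpha_j$ by the alternating action of $S$ and $R$ on $e_0$ (equations \eqref{eq:Se}--\eqref{eq:Re}), uses $S^2=R^2=\II$ to check that no lower-index vectors reappear, and thereby exhibits $U=SR$ in $\mathcal{L}\mathcal{M}$ form by construction, leaving the appeal to uniqueness of the CMV representation implicit. You instead invoke the cyclic unitary model theorem first and then prove $S=\mathcal{L}$, $R=\mathcal{M}$ by observing that $\mathcal{L}^{-1}S=\mathcal{M}R^{-1}$ is block diagonal for two interlacing partitions, hence diagonal, hence (by symmetry of the factors and $\rho_j>0$) a global sign. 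This makes explicit a step the paper elides, which is a gain in rigor; the one place your sketch is thinner than it should be is the exclusion of the sign $-\II$, for which the cleanest argument is that $\langle e_1|Se_0\rangle=\langle e_1|Ue_0\rangle=\mathcal{C}_{10}=\rho_0>0$ (using $Re_0=e_0$ and the CMV normalization), pinning the diagonal factor to $+\II$; your ``one checks'' should be replaced by this observation.
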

\begin{Rem}
Recall that we put $\alpha_{-1} = -1$ what gives $q_0 = 0$, $r_0 = \alpha_0$, $p_0 = 1- \alpha_0$.
\end{Rem}
\begin{proof}
	Define the orthonormal basis $(e_j)_{j\in\NN_0}$ and the coefficients $\alpha_j$ by the recursive application of the operators $S$ and $R$ on the initial vector $e_0 = | \phi_0 \rangle$:
	\begin{align} \label{eq:Se}
	S(e_{2k}) & = \alpha_{2k} e_{2k} + \rho_{2k} e_{2k+1}, \qquad k\geq 0, \\ \label{eq:Re}
	R(e_{2k+1}) & = \alpha_{2k+1} e_{2k+1} + \rho_{2k+1} e_{2k+2},
\end{align}
where the sign of $\rho_j = \sqrt{1-\alpha_j^2} >0$ fixes the orientation of $e_{j+1}$. 
Notice that 
\begin{equation*}
\alpha_{2k}=\langle e_{2k}|S e_{2k}\rangle, \qquad \alpha_{2k+1}=\langle e_{2k+1}|Re_{2k+1}\rangle.	
\end{equation*}
The condition $S^2 = R^2 = \II$ gives
	\begin{align*}
S(e_{2k+1}) & = \rho_{2k} e_{2k} - \alpha_{2k} e_{2k+1}, \\
R(e_{2k+2}) & = \rho_{2k+1} e_{2k+1} - \alpha_{2k+1} e_{2k+2},
\end{align*}
and inductively confirms that there are no vectors $e_j$ with smaller indices on right hand sides of formulas~\eqref{eq:Se} and \eqref{eq:Re}.

Because $|\phi_k \rangle$ is orthogonal to the subspace $\mathrm{span} \{ |\phi_0 \rangle, |\psi_0 \rangle, \dots , |\psi_{k-2} \rangle, |\phi_{k-1} \rangle \}$
then the decomposition
\begin{equation*}
|\phi_k \rangle = \gamma_{2k-1}e_{2k-1} + \gamma_{2k} e_{2k},
\end{equation*}
with the eigenvector condition $R|\phi_k \rangle = | \phi_k\rangle$ and the normalization $\langle \phi_k | \phi_k \rangle = 1$ allow to fix the coefficients up to a common sign and give
\begin{equation} \label{eq:phi-e}
|\phi_k \rangle = \pm \left( \sqrt{\frac{1+\alpha_{2k-1}}{2}} e_{2k-1} + \sqrt{\frac{1-\alpha_{2k-1}}{2}} e_{2k} \right).
\end{equation}
By application of $S$ on such $|\phi_k \rangle$
we get
\begin{equation} \label{eq:psi-e}
|\psi_k \rangle = \pm \left(  \sqrt{\frac{1+\alpha_{2k-1}}{2}} \left(\rho_{2k-2} e_{2k-2} - \alpha_{2k-2} e_{2k-1}\right) + \sqrt{\frac{1-\alpha_{2k-1}}{2}} \left(\alpha_{2k} e_{2k} + \rho_{2k} e_{2k+1}\right) \right).
\end{equation}
Then from $\langle \phi_k | \psi_k \rangle = r_k$, as shown in~\eqref{eq:D-CMV}, we directly obtain equation~\eqref{eq:rk}. Moreover the condition $\langle \phi_{k+1} | \psi_k \rangle = \sqrt{p_k q_{k+1}}$ implies that the signs in front of all coefficients $\gamma_j$ are the same and gives
\begin{equation} \label{eq:pq}
p_k q_{k+1} = \frac{1}{4} (1-\alpha_{2k-1}) (1 - \alpha_{2k}^2) (1+\alpha_{2k+1}).
\end{equation}
Finally, the initial conditions
\begin{equation*}
\gamma_0 = 1, \qquad q_0 = 0, \qquad  r_0 = \alpha_0, 
\end{equation*}
give positive signs of $\gamma_j$ (and then in equations~\eqref{eq:phi-e}, \eqref{eq:psi-e}), and allow to find from equations \eqref{eq:rk}, \eqref{eq:pq} and the probability normalization condition $p_k + r_k + q_k = 1$ recursively the expressions \eqref{eq:qk} and \eqref{eq:pk}. 
\end{proof}
\begin{Cor}
	In terms of the initial basis of Szegedy's quantization the first two elements of the CMV basis can be expressed as follows
\begin{align*}
e_0 & = \sqrt{\alpha_0} | 0, 0 \rangle + \sqrt{1-\alpha_0} |0,1 \rangle, \\
e_1 & = \frac{1}{\sqrt{1+\alpha_0}}\left( \sqrt{\alpha_0 (1-\alpha_0)} | 0, 0 \rangle  - \alpha_0 | 0, 1 \rangle + |1,0 \rangle \right).
\end{align*}
\end{Cor}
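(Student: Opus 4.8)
The plan is to read off both vectors from objects already constructed in the proof of Proposition~\ref{prop:pqr-a}; no new tools are needed. For $e_0$ one uses that, by construction, the CMV basis is adapted to the cyclic vector $e_0 = |\phi_0\rangle$, and that the distinguished state has the explicit form $|\phi_0\rangle = \sqrt{r_0}\,|0,0\rangle + \sqrt{p_0}\,|0,1\rangle$. Combining this with $r_0 = \alpha_0$, $p_0 = 1-\alpha_0$ (the values recorded in the Remark following Proposition~\ref{prop:pqr-a}, which come from the convention $\alpha_{-1} = -1$) gives the first displayed formula at once.

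For $e_1$ the idea is to invert the first instance of the recurrence \eqref{eq:Se}. Setting $k=0$ there gives $S(e_0) = \alpha_0 e_0 + \rho_0 e_1$ with $\rho_0 = \sqrt{1-\alpha_0^2} = \sqrt{(1-\alpha_0)(1+\alpha_0)}$, hence $e_1 = \rho_0^{-1}\bigl(S(e_0) - \alpha_0 e_0\bigr)$. Now $S(e_0) = S|\phi_0\rangle = |\psi_0\rangle$, and the already-displayed formula for $|\psi_k\rangle$, specialized to $k = 0$ with $q_0 = 0$, reads $|\psi_0\rangle = \sqrt{\alpha_0}\,|0,0\rangle + \sqrt{1-\alpha_0}\,|1,0\rangle$. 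Substituting and collecting the coefficients of $|0,0\rangle$, $|0,1\rangle$, $|1,0\rangle$ in $S(e_0) - \alpha_0 e_0$, one observes that all three carry a common factor $\sqrt{1-\alpha_0}$; dividing by $\rho_0$ cancels this against the $\sqrt{1-\alpha_0}$ hidden in $\rho_0$ and leaves exactly the prefactor $1/\sqrt{1+\alpha_0}$ together with the stated combination.

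I expect no genuine obstacle: the claim is essentially a two-line computation. The only points deserving a moment's attention are the radical bookkeeping — correctly extracting the common $\sqrt{1-\alpha_0}$ from $S(e_0) - \alpha_0 e_0$ and cancelling it against $\rho_0$ — and the observation that the square roots $\sqrt{\alpha_0}$, $\sqrt{1-\alpha_0}$ are meaningful here because $\alpha_0 = r_0 \in [0,1)$ under the standing assumptions on the transition probabilities ($r_0 \geq 0$, $p_0 = 1-\alpha_0 > 0$).
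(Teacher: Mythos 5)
Your computation is correct and follows exactly the route the paper intends (the Corollary is stated without proof as a direct consequence of the construction in Proposition~\ref{prop:pqr-a}): identify $e_0=|\phi_0\rangle$ with $r_0=\alpha_0$, $p_0=1-\alpha_0$, then invert $S(e_0)=\alpha_0 e_0+\rho_0 e_1$ using $S|\phi_0\rangle=|\psi_0\rangle$ and $q_0=0$. The radical bookkeeping checks out, and your remark that $\alpha_0=r_0\in[0,1)$ justifies the square roots is a sensible addition.
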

\begin{Rem}
	We would like to stress that different choices of the initial state (the cyclic vector) may result in changing the CMV space where the quantum evolution takes place. This does not apply for the states being already in $\mathcal{H}_0$, in particular to $|\phi_k \rangle$ or $|\psi_k \rangle$. 
\end{Rem}

\subsection{The orthogonal complement to the CMV subspace}
To conclude the discussion of the CMV description of Szegedy's quantization of the random walks with the cyclic vector $|\phi_0\rangle$ let us find the orthogonal complement in $\mathcal{H}$ of the subspace   $\mathcal{H}_0 = \mathrm{span}\{ | \phi_k \rangle, |\psi_k \rangle \}_{k\geq 0}$. In Example~\ref{ex:r=0} we considered the special case of $r_k = 0$ for all $k \in \NN_0$. To avoid degeneracies we assume in this section that $r_k > 0$ for all $k \in \NN_0$.

By direct calculation one can check the following result.
\begin{Lem}
	The vectors
	\begin{align*}
	| \tilde{\phi}_k \rangle & =\frac{1}{\sqrt{p_k + r_k}}\left( \sqrt{p_k} | k, k \rangle - \sqrt{r_k} | k, k+1 \rangle \right), \qquad k\geq 0, \\
	| \hat{\phi}_k \rangle & =\sqrt{p_k + r_k} | k, k-1\rangle - \sqrt{\frac{q_k}{p_k + r_k}}\left( \sqrt{r_k} | k, k \rangle + \sqrt{p_k} | k, k+1 \rangle \right), \qquad k\geq 1,
	\end{align*}
	are normalized orthogonal eigenvectors of the coin flip operator $R$ with eigenvalues $-1$. Together with $| \phi_k \rangle$ they form orthonormal basis over the vertex $k$. Moreover, the vectors $\{ | \tilde{\phi}_0 \rangle, | \tilde{\phi}_1 \rangle, | \hat{\phi}_1 \rangle, \dots \}$ form an orthonormal basis of the orthogonal complement to the subspace $\mathcal{H}_1 = \mathrm{span}\{ | {\phi}_k \rangle\}_{k\geq 0}$. 
\end{Lem}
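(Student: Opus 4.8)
The plan is to exploit the block structure of $R$ and reduce the entire statement to a one-line computation inside a single vertex block. Since $R = 2\Pi - \II$ acts on the coin space alone, where $\Pi$ is the orthogonal projection onto $\mathcal{H}_1 = \mathrm{span}\{|\phi_j\rangle\}_{j\geq 0}$, its $(-1)$-eigenspace is precisely $\mathcal{H}_1^{\perp}$; hence to see that a unit vector $v$ is a $(-1)$-eigenvector it is enough to check $v \perp |\phi_j\rangle$ for every $j$. The coin space $\mathcal{H}$ is the orthogonal direct sum of the two-dimensional block over vertex $0$ (spanned by $|0,0\rangle,|0,1\rangle$) and the three-dimensional blocks over vertices $k\geq 1$ (spanned by $|k,k-1\rangle,|k,k\rangle,|k,k+1\rangle$), and each $|\phi_k\rangle$ lies in the block over $k$. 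Since $|\tilde\phi_k\rangle$ and $|\hat\phi_k\rangle$ also lie in the block over $k$, orthogonality to $|\phi_j\rangle$ for $j\neq k$ is automatic, and everything decouples vertex by vertex.

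Fixing $k$, I would just expand. For $|\tilde\phi_k\rangle$ the norm squared is $(p_k+r_k)/(p_k+r_k)=1$ and $\langle\phi_k|\tilde\phi_k\rangle=(\sqrt{r_k}\sqrt{p_k}-\sqrt{p_k}\sqrt{r_k})/\sqrt{p_k+r_k}=0$. For $k\geq 1$ and $|\hat\phi_k\rangle$ the norm squared is $(p_k+r_k)+\tfrac{q_k}{p_k+r_k}(r_k+p_k)=p_k+r_k+q_k=1$; the inner product $\langle\phi_k|\hat\phi_k\rangle=\sqrt{q_k(p_k+r_k)}-\sqrt{\tfrac{q_k}{p_k+r_k}}\,(r_k+p_k)=0$; and $\langle\tilde\phi_k|\hat\phi_k\rangle=0$ because the only two surviving terms are $\mp\sqrt{p_k r_k q_k}/(p_k+r_k)$. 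Thus $\Pi|\tilde\phi_k\rangle=\Pi|\hat\phi_k\rangle=0$, so both are genuine $(-1)$-eigenvectors of $R$. Over vertex $0$ the orthonormal pair $|\phi_0\rangle,|\tilde\phi_0\rangle$ spans the two-dimensional block, and over vertex $k\geq 1$ the orthonormal triple $|\phi_k\rangle,|\tilde\phi_k\rangle,|\hat\phi_k\rangle$ spans the three-dimensional block; in particular each is an orthonormal basis of the coin space over that vertex.

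Finally, taking the orthogonal direct sum over all $k$, the space $\mathcal{H}$ splits as $\mathcal{H}_1 \oplus \overline{\mathrm{span}}\big(\{|\tilde\phi_0\rangle\}\cup\{|\tilde\phi_k\rangle,|\hat\phi_k\rangle\}_{k\geq 1}\big)$, so the indicated family is an orthonormal basis of $\mathcal{H}_1^{\perp}$; note that $|\hat\phi_0\rangle$ does not appear because $q_0=0$ and the block over vertex $0$ is only two-dimensional. I expect no genuine analytic obstacle: the content of the lemma is the probability normalization $p_k+r_k+q_k=1$ rewritten in a different basis. The only points demanding care are this $k=0$ bookkeeping and the fact that the lemma describes $\mathcal{H}_1^{\perp}$ and \emph{not} the orthogonal complement of the CMV subspace $\mathcal{H}_0=\mathrm{span}\{|\phi_k\rangle,|\psi_k\rangle\}_{k\geq 0}$: the vectors $|\tilde\phi_k\rangle$ and $|\hat\phi_k\rangle$ are orthogonal to every $|\phi_j\rangle$ but in general not to the $|\psi_j\rangle$, so pinning down $\mathcal{H}_0^{\perp}$ will require a further reduction that isolates the combinations of these vectors which are in addition $S$-orthogonal to all $|\phi_j\rangle$.
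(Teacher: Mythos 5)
Your proposal is correct and matches the paper's approach: the paper simply states that the lemma ``can be checked by direct calculation,'' and your vertex-by-vertex expansion, together with the observation that the $(-1)$-eigenspace of $R=2\Pi-\II$ is exactly $\mathcal{H}_1^\perp$, is precisely that calculation carried out and organized cleanly. Your closing caveat distinguishing $\mathcal{H}_1^\perp$ from $\mathcal{H}_0^\perp$ is also accurate and is exactly what the paper's subsequent proposition addresses.
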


\begin{Prop}
	The orthogonal complement in $\mathcal{H}$ of the subspace   $\mathcal{H}_0$ is spanned by the vectors
\begin{equation} \label{eq:sigma}
| \sigma_k \rangle = 
\sqrt{p_k r_{k+1}} |k, k\rangle - \sqrt{r_k r_{k+1}} \left( | k, k+1 \rangle + | k+1, k \rangle \right) + \sqrt{r_k q_{k+1}} |k+1, k+1 \rangle  , \qquad k\geq 0.
\end{equation}		
\end{Prop}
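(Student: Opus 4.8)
The plan is to prove the two inclusions separately: first that every $|\sigma_k\rangle$ lies in $\mathcal{H}_0^\perp$, and then that the closed linear span of the $|\sigma_k\rangle$ exhausts $\mathcal{H}_0^\perp$.

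The first inclusion is a direct computation. Each $|\sigma_k\rangle$ is supported on the four coin vectors $|k,k\rangle$, $|k,k+1\rangle$, $|k+1,k\rangle$, $|k+1,k+1\rangle$; since $|\phi_j\rangle$ is supported over the vertex $j$ and $|\psi_j\rangle=S|\phi_j\rangle$ over the looked-at site $j$, the pairings $\langle\phi_j|\sigma_k\rangle$ and $\langle\psi_j|\sigma_k\rangle$ vanish for all $j\notin\{k,k+1\}$, and the four remaining ones are killed by one-line cancellations such as $\langle\phi_k|\sigma_k\rangle=\sqrt{r_k}\,\sqrt{p_kr_{k+1}}-\sqrt{p_k}\,\sqrt{r_kr_{k+1}}=0$. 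I would also record here that $S|\sigma_k\rangle=|\sigma_k\rangle$ and, since $|\sigma_k\rangle$ is orthogonal to the range of $\Pi$, $R|\sigma_k\rangle=-|\sigma_k\rangle$; hence $U|\sigma_k\rangle=-|\sigma_k\rangle$, so the $|\sigma_k\rangle$ are $(-1)$-eigenvectors of the quantum walk.

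For the second inclusion I would first pin down $\mathcal{H}_0^\perp$. Because $|\psi_k\rangle=S|\phi_k\rangle$ and $S$ is self-adjoint, a vector $w$ lies in $\mathcal{H}_0^\perp$ precisely when $w$ and $Sw$ both lie in $\mathcal{H}_1^\perp$, the orthogonal complement of $\mathcal{H}_1=\mathrm{span}\{|\phi_k\rangle\}_{k\ge0}$. For such a $w$, the difference $w-Sw$ again lies in $\mathcal{H}_1^\perp$ and is a $(-1)$-eigenvector of $S$; but a general element $\sum_k c_k\tfrac{1}{\sqrt{2}}\bigl(|k,k+1\rangle-|k+1,k\rangle\bigr)$ of $\ker(S+1)$, tested successively against $|\phi_0\rangle,|\phi_1\rangle,\dots$, is forced by $p_k>0$ to have all $c_k=0$, so $Sw=w$. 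Thus $\mathcal{H}_0^\perp=\ker(S-1)\cap\mathcal{H}_1^\perp$, and using the orthonormal basis $\{|k,k\rangle\}_{k\ge0}\cup\{\tfrac{1}{\sqrt{2}}(|k,k+1\rangle+|k+1,k\rangle)\}_{k\ge0}$ of $\ker(S-1)$, every $w\in\mathcal{H}_0^\perp$ has the form
\begin{equation*}
 w=\sum_{k\ge0} a_k|k,k\rangle+\sum_{k\ge0} b_k\,\tfrac{1}{\sqrt{2}}\bigl(|k,k+1\rangle+|k+1,k\rangle\bigr),
\end{equation*}
where the conditions $w\perp|\phi_k\rangle$ force $a_k=-\bigl(\sqrt{q_k}\,b_{k-1}+\sqrt{p_k}\,b_k\bigr)/\sqrt{2r_k}$ (with $q_0=0$), and conversely any $\ell^2$ sequence $(b_k)$ with square-summable companion $(a_k)$ produces such a $w$. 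In this $(b_k)$-parametrization of $\mathcal{H}_0^\perp$ the vector $|\sigma_k\rangle$ corresponds to a nonzero multiple of the $k$-th standard basis sequence, so the $|\sigma_k\rangle$ are linearly independent; completeness then amounts to showing that $w\in\mathcal{H}_0^\perp$ orthogonal to every $|\sigma_k\rangle$ forces $w=0$.

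Substituting the formula for $a_k$ into $\langle\sigma_k|w\rangle=0$ and dividing by $\sqrt{r_kr_{k+1}}$ collapses it to the symmetric three-term recursion
\begin{equation*}
 \beta_k b_{k-1}+\gamma_k b_k+\beta_{k+1}b_{k+1}=0,\qquad \beta_k=\frac{\sqrt{p_kq_k}}{r_k},\qquad \gamma_k=\frac{p_k}{r_k}+\frac{q_{k+1}}{r_{k+1}}+2
\end{equation*}
(with the boundary relation $\gamma_0 b_0+\beta_1 b_1=0$), i.e.\ the null-vector equation of the half-line Jacobi operator with diagonal $(\gamma_k)$ and off-diagonal $(\beta_k)$. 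Its solution space is one-dimensional, and the crux of the proof is that this solution is not square-summable, so $b\equiv0$ and $w=0$. This is where $p_k,q_k,r_k>0$ enters essentially: one has $\gamma_{k-1}>q_k/r_k$ and $\gamma_k>p_k/r_k$, whence $\gamma_{k-1}\gamma_k>\beta_k^2$, and combined with the elementary identity $(1-p_k)(1-q_k)=p_kq_k+r_k$ this keeps the ratios $\rho_k=|b_k|/|b_{k-1}|$ — which satisfy the discrete Riccati recursion $\rho_{k+1}=(\gamma_k-\beta_k/\rho_k)/\beta_{k+1}$, $\rho_1=\gamma_0/\beta_1$ — in the regime where $\sum_k|b_k|^2$ diverges. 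Making this growth estimate uniform is the one genuinely delicate point, since the Jacobi operator may be unbounded as $r_k\to0$; everything preceding it is routine manipulation of the explicit formulas.
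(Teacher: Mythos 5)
Your argument is sound in structure and takes a genuinely different route from the paper's. The paper works inside $\mathcal{H}_1^\perp$ using the explicit eigenbasis $\{|\tilde\phi_k\rangle,|\hat\phi_k\rangle\}$ of the coin flip, imposes the conditions $\langle\psi_k|\Sigma\rangle=0$ to get a three-term recursion for the coefficients $(\tilde a_k,\hat a_k)$, exhibits the formal solutions $|\Sigma^{(L)}\rangle$ parametrized by $\tilde a_k=\delta_{Lk}$, and obtains the finitely supported $|\sigma_k\rangle$ as combinations of consecutive $|\Sigma^{(L)}\rangle$'s. Your observation that $\mathcal{H}_0^\perp=\ker(S-1)\cap\mathcal{H}_1^\perp$ (via $w-Sw\in\ker(S+1)\cap\mathcal{H}_1^\perp=\{0\}$, forced by $p_k>0$) is cleaner and more structural: it explains at once why the $|\sigma_k\rangle$ are $S$-invariant and $R$-antiinvariant (the Corollary following the Proposition), and it replaces the paper's somewhat opaque coefficient recursion by the transparent $(a_k,b_k)$-parametrization with the single constraint $a_k=-(\sqrt{q_k}\,b_{k-1}+\sqrt{p_k}\,b_k)/\sqrt{2r_k}$. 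I checked your reduction of $\langle\sigma_k|w\rangle=0$ to the Jacobi null-vector equation with $\beta_k=\sqrt{p_kq_k}/r_k$, $\gamma_k=p_k/r_k+q_{k+1}/r_{k+1}+2$; the algebra is correct, and the representation of $|\sigma_k\rangle$ as (a multiple of) the $k$-th coordinate sequence in the $(b_k)$-parametrization settles linear independence.

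The one step you leave open — that the null solution of this Jacobi recursion cannot produce a vector of $\mathcal{H}$ — is a genuine gap, but you should know that the paper's own proof does not close it either: the paper parametrizes the \emph{formal} solution space of the orthogonality constraints by the coefficients $\tilde a_k$ and simply asserts that the $|\sigma_k\rangle$ span $\mathcal{H}_0^\perp$, without addressing square-summability at all. Within your framework one can in fact get further than your Riccati sketch: summation by parts against $\bar b_k$ yields the exact identity
\begin{equation*}
\beta_{N+1}\,|b_N|\,|b_{N+1}| \;=\; 2\sum_{k=0}^{N}\bigl(|a_k|^2+|b_k|^2\bigr)+\frac{q_{N+1}}{r_{N+1}}|b_N|^2 ,
\end{equation*}
using that consecutive $b_k$ have opposite signs and that $2|a_k|^2=|\sqrt{q_k}\,b_{k-1}+\sqrt{p_k}\,b_k|^2/r_k$. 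Hence if $w\neq 0$ then $\beta_{N+1}|b_N||b_{N+1}|\geq 2\|w\|^2$ eventually, and since $\beta_{N+1}\leq 1/r_{N+1}$ and $2xy\leq x^2+y^2$ this gives $|b_N|^2+|b_{N+1}|^2\geq c\, r_{N+1}$ for some $c>0$; this contradicts $b\in\ell^2$ whenever $\sum_k r_k=\infty$ (in particular for the constant-probability walks of Section 5). When $\sum_k r_k<\infty$ the operator $J$ may be in the limit-circle regime and a further argument is needed; this residual case is exactly the ``genuinely delicate point'' you flag, and it is elided in the published proof as well.
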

\begin{proof}
Consider vectors $| \Sigma_a \rangle \in \mathcal{H}_1^\perp$ which can be written as
	\begin{equation} \label{eq:Sigma}
	| \Sigma_a \rangle = \tilde{a}_0 |\tilde{\phi}_0 \rangle + 
	\tilde{a}_1 |\tilde{\phi}_1 \rangle + \hat{a}_1 |\hat{\phi}_1 \rangle +
	\tilde{a}_2 |\tilde{\phi}_2 \rangle + \hat{a}_2 |\hat{\phi}_2 \rangle + \dots \, ,
	\end{equation}
and consider closed subspaces defined by relations $\langle \psi_k | \Sigma_a \rangle=0$, $k=0,1,\dots$,
which in the coordinates read
	\begin{equation} \label{eq:a-rec-1}
	- \tilde{a}_{k-1} \sqrt{\frac{r_{k-1} q_k}{p_{k-1} + r_{k-1}}} -
	\hat{a}_{k-1} \sqrt{\frac{q_{k-1}p_{k-1}q_k}{p_{k-1} + r_{k-1}}} + 
	\tilde{a}_k \sqrt{\frac{p_k r_k}{p_k + r_k}} -
	\hat{a}_k r_k \sqrt{\frac{q_k}{p_k + r_k}} +
	\hat{a}_{k+1} \sqrt{p_k (p_{k+1} + r_{k+1})}  = 0.
	\end{equation}
Notice that equations for $k=0,1$  can be obtained from the above ones by the condition $q_0 = 0$, and read
\begin{gather} \label{eq:a-init-0}
	\tilde{a}_0 \sqrt{\frac{p_0 r_0}{p_0 + r_0}} + \hat{a}_1 \sqrt{p_0 (p_1 + r_1)}  = 0,\\ \label{eq:a-init-1}
- \tilde{a}_0 \sqrt{\frac{r_0 q_1}{p_0 + r_0}} + 
\tilde{a}_1 \sqrt{\frac{p_1 r_1}{p_1 + r_1}} -
\hat{a}_1 r_1 \sqrt{\frac{q_1}{p_1 + r_1}} +
\hat{a}_2 \sqrt{p_1 (p_2 + r_2)}  = 0.
\end{gather}
The subspace $\mathcal{H}_0^\perp$ is the intersection of all such hyperplanes. 

The above equations can be simplified, i.e. \eqref{eq:a-rec-1} and \eqref{eq:a-init-1} can be replaced by the following relation
\begin{equation*}
	\tilde{a}_k \sqrt{r_k} + \hat{a}_k \sqrt{p_k q_k} + \hat{a}_{k+1} \sqrt{(p_k + r_k)(p_{k+1} + r_{k+1})} = 0, \qquad k >0.
\end{equation*}
By taking $\hat{a}_k$, $k= 1,2,\dots$, as independent parameters we define vectors $| s_k\rangle$ with coefficients $\hat{a}_j = \delta_{jk}$
\begin{equation*}
	|s_k\rangle = | \hat{\phi}_k \rangle - \sqrt{ \frac{(p_{k-1} + r_{k-1})(p_k + r_k)}{r_{k-1}}} | \tilde{\phi}_{k-1} \rangle - \sqrt{\frac{p_k q_k}{r_k}} |\tilde{\phi}_k \rangle. 
\end{equation*}
The vectors
are linearly independent, and after eventual orthonormalization provide orthonormal basis of $\mathcal{H}_0^\perp$. In working with the natural basis of the full Hilbert space $\mathcal{H}$ it is convenient to rescale the vectors and define 
\begin{equation*}
	|\sigma_k \rangle = - \sqrt{\frac{r_k r_{k+1}}{p_{k+1} + r_{k+1}}} | s_{k+1}  \rangle, \qquad k = 0,1,2, \dots ,
\end{equation*}
what provides formulas~\eqref{eq:sigma}. 
\end{proof}
\begin{Rem}
	Vectors of the subspace $\mathcal{H}_0^\perp$ have the following approximation expansion
	\begin{equation*}
		| \Sigma_a \rangle =  \tilde{a}_0 |\tilde{\phi}_0 \rangle + \sum_{k=1}^\infty\left( 
		\tilde{a}_k |\tilde{\phi}_k \rangle + \hat{a}_k |\hat{\phi}_k \rangle \right) =
		\sum_{k=1}^\infty \hat{a}_k | s_k \rangle .
	\end{equation*}
\end{Rem}
\begin{Cor}
	The actions of the coin flip $R$ and the position swap $S$ on the vectors $|\sigma_k\rangle $  read
	\begin{equation*}
	R |\sigma_k\rangle = - |\sigma_k\rangle, \qquad S |\sigma_k\rangle = |\sigma_k\rangle, \qquad k \geq 0. 
	\end{equation*}
\end{Cor}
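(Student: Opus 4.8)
The plan is to read the two identities off two complementary descriptions of $|\sigma_k\rangle$. For $R|\sigma_k\rangle=-|\sigma_k\rangle$ I would use the intermediate expansion produced in the proof of the preceding Proposition,
\[
|\sigma_k\rangle = \sqrt{r_{k+1}(p_k+r_k)}\,|\tilde{\phi}_k\rangle + \sqrt{\frac{r_k p_{k+1} q_{k+1}}{p_{k+1}+r_{k+1}}}\,|\tilde{\phi}_{k+1}\rangle - \sqrt{\frac{r_k r_{k+1}}{p_{k+1}+r_{k+1}}}\,|\hat{\phi}_{k+1}\rangle ,
\]
which exhibits $|\sigma_k\rangle$ as a linear combination of $|\tilde{\phi}_k\rangle$, $|\tilde{\phi}_{k+1}\rangle$ and $|\hat{\phi}_{k+1}\rangle$. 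By the preceding Lemma each of these three vectors is a $(-1)$-eigenvector of the coin flip $R$, so by linearity $R|\sigma_k\rangle=-|\sigma_k\rangle$. Equivalently, one can argue from the block form $R=R_0\oplus R_1\oplus\dots$ together with the fact that $R$ acts as $+1$ on $|\phi_j\rangle$ and as $-1$ on its orthogonal complement within the coin space over the vertex $j$: the portion of $|\sigma_k\rangle$ supported over the vertex $k$ is proportional to $\sqrt{p_k}\,|k,k\rangle-\sqrt{r_k}\,|k,k+1\rangle$, which is orthogonal to $|\phi_k\rangle$, and likewise the portion over the vertex $k+1$ is orthogonal to $|\phi_{k+1}\rangle$; hence $R$ negates each piece and therefore $|\sigma_k\rangle$ as a whole.

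For $S|\sigma_k\rangle=|\sigma_k\rangle$ I would apply the register swap term by term to the explicit formula \eqref{eq:sigma}, using $S|k,k\rangle=|k,k\rangle$, $S|k+1,k+1\rangle=|k+1,k+1\rangle$ and $S|k,k+1\rangle=|k+1,k\rangle$, $S|k+1,k\rangle=|k,k+1\rangle$. Since the two kets $|k,k+1\rangle$ and $|k+1,k\rangle$ occur in \eqref{eq:sigma} with one and the same coefficient $-\sqrt{r_k r_{k+1}}$, the swap merely interchanges them and fixes the remaining two diagonal terms, so $|\sigma_k\rangle$ is unchanged.

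There is essentially no obstacle here; both identities are routine. The only point that calls for a little care, should one take the direct route for the first identity, is keeping track of which vertex-coin block each basis ket of \eqref{eq:sigma} belongs to and checking the claimed orthogonality relations with $|\phi_k\rangle$ and $|\phi_{k+1}\rangle$.
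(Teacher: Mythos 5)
Your proof is correct and is exactly the direct verification the paper intends (the Corollary is stated without proof there): $S$-invariance follows from the equal coefficients of $|k,k+1\rangle$ and $|k+1,k\rangle$ in \eqref{eq:sigma}, and $R|\sigma_k\rangle=-|\sigma_k\rangle$ follows because $|\sigma_k\rangle$ lies in $\mathcal{H}_1^{\perp}$, on which $R=2\Pi-\mathbb{I}$ acts as $-\mathbb{I}$. Both of your routes to the first identity (via the $(-1)$-eigenvectors $|\tilde{\phi}_j\rangle,|\hat{\phi}_j\rangle$, or via the vertexwise orthogonality to $|\phi_k\rangle$ and $|\phi_{k+1}\rangle$) check out.
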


\section{The spectral measures of the random and quantum walks} \label{sec:Sz-CMV-spec}
In the previous section we identified the restriction of the quantum walk operator $U$ of Szegedy's quantization of the random walk on half-line to the subspace $\mathcal{H}_0$, generated by $U^k|\phi_0\rangle$, $k\in\ZZ$, with the CMV matrix $\mathcal{C}$. Moreover, due to equations \eqref{eq:Se} and \eqref{eq:Re}, the corresponding restrictions of the coin flip $R$ and the position swap $S$ have been identified with the matrices $\mathcal{M}$ and $\mathcal{L}$, correspondingly. These matrices are real and involutive $\mathcal{M}^2=\mathcal{L}^2 = \II$. The discriminant matrix of the quantum walk coincides with the Jacobi matrix ~\eqref{eq:J-CMG} of the initial random walk where the relation between the Jacobi and Verblunsky parameters are given by \eqref{eq:rk} and \eqref{eq:pq} written as
\begin{equation} \label{eq:sk}
s_k = \sqrt{p_k q_{k+1}} = \frac{1}{2}\sqrt{ (1-\alpha_{2k-1}) (1 - \alpha_{2k}^2) (1+\alpha_{2k+1})}. 
\end{equation}

Such equations are known~\cite{Simon-OPUC} as the Geronimus relations~\cite{Geronimus}. They go back to Szeg\H{o}~\cite{Szego} who gave a correspondence which, starting from polynomials orthogonal on the measure supported in the segment $[-1,1]$, allows to find certain natural measure on the unit circle together with the corresponding polynomials. 	
It turns out that the relation is analogous to that appearing~\cite{Szegedy} in Szegedy's quantization of Markov chains given in Proposition~\ref{prop:spec-Sz}.

\subsection{Szeg\H{o} projection and Geronimus relations}
Let us present first, following~\cite{KillipNenciu,Simon-OPUC}, geometric meaning of the Szeg\H{o} projection~\cite{Szego}.  For OPUC with real Verblunsky coefficients, or equivalently the measure $\mu$ being symmetric with respect to complex conjugation one can define the measure $\nu$ on the segment $[-1,1]$ by
\begin{equation*}
\int_{-1}^1 g(x) \,d\nu(x) = 
\int_{\partial\DD} g(\cos\theta)\, d\mu(\theta).
\end{equation*}
The relation between spectral measures 
\begin{equation} \label{eq:mu-nu}
d\mu(\theta) = w(\theta) \frac{d\theta}{2\pi} + d\mu_s , \qquad \text{and} \qquad d\nu(x) = u(x) dx + d\nu_s,
\end{equation}
has the form
\begin{equation} \label{eq:w-u}
u(x) = \frac{w(\arccos x)}{\pi \sqrt{1-x^2}}, \qquad
w(\theta) = \pi |\sin \theta | u (\cos\theta).
\end{equation}
The precise connection formulas between the polynomials orthogonal with respect to both measures~\cite{Szego} and relations between their Verblunsky and Jacobi coefficients are known~\cite{Geronimus}. 
\begin{Prop} \label{prop:polynomials}
	The polynomials orthonormal $p_k(x)$ with respect to the measure $d\nu(x)$ are expressed by the polynomials $\varphi_{k}(z)$ orthonormal with respect to the measure $d \mu(\theta)$ as follows
	\begin{equation} \label{eq:p-phi-Szego}
	p_k(x) = \frac{1}{\sqrt{2(1-\alpha_{2k-1})}} \left( z^{-k}\varphi_{2k}(z) + z^k \varphi_{2k}(z^{-1}) \right), \qquad x = \frac{1}{2}\left(z + z^{-1} \right).
	\end{equation}
\end{Prop}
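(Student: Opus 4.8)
The plan is to prove that $\tilde p_k(x)$, by which I mean the right‑hand side of \eqref{eq:p-phi-Szego}, is (i) a genuine polynomial of degree $k$ in $x=\tfrac12(z+z^{-1})$ with real coefficients, (ii) has positive leading coefficient, and (iii) gives an orthonormal system $\{\tilde p_k\}_{k\geq 0}$ in $L^2([-1,1],d\nu)$; since $\mu$ is nontrivial the measure $\nu$ is nontrivial as well, so its orthonormal polynomials with positive leading coefficients are uniquely determined and (i)--(iii) force $\tilde p_k=p_k$. The whole computation rests on one structural identity: because $\mu$ is symmetric under complex conjugation the Verblunsky coefficients, hence the polynomials $\varphi_n$, are real, so $\varphi_n(1/z)=z^{-n}\varphi_n^*(z)$ as rational functions; consequently $z^{k}\varphi_{2k}(z^{-1})=z^{-k}\varphi_{2k}^*(z)$ identically, and on $\partial\DD$ the Laurent polynomial $f_k(z):=z^{-k}\varphi_{2k}(z)$ satisfies $\overline{f_k(z)}=z^{-k}\varphi_{2k}^*(z)=\chi_{2k}(z)$. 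In particular $\tilde p_k=\bigl(2(1-\alpha_{2k-1})\bigr)^{-1/2}(f_k+\overline{f_k})$ on the circle.

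First I would settle (i) and (ii). As a function of $z$ the combination $z^{-k}\varphi_{2k}(z)+z^{k}\varphi_{2k}(z^{-1})$ is a Laurent polynomial supported on $z^{-k},\dots,z^{k}$ which is manifestly invariant under $z\mapsto z^{-1}$, hence a polynomial of degree $\leq k$ in $x$; it has degree exactly $k$ iff the coefficient of $z^{k}$ is nonzero, and that coefficient equals $\kappa_{2k}+\varphi_{2k}(0)$, where $\kappa_{2k}>0$ is the leading coefficient of $\varphi_{2k}$. From $\Phi_{2k}(0)=-\bar\alpha_{2k-1}$ and reality one gets $\varphi_{2k}(0)=-\kappa_{2k}\alpha_{2k-1}$, so this coefficient is $\kappa_{2k}(1-\alpha_{2k-1})>0$ because $|\alpha_{2k-1}|<1$; together with $z^{k}+z^{-k}=2^{k}x^{k}+\dots$ this shows the leading coefficient of $\tilde p_k$ equals $2^{k}\kappa_{2k}\sqrt{(1-\alpha_{2k-1})/2}>0$. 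Reality of the coefficients of $\tilde p_k$ follows since $\tilde p_k(\cos\theta)$ is real for every $\theta$.

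For (iii), the defining property of $\nu$ gives $\int_{-1}^1\tilde p_k\tilde p_l\,d\nu=\int_{\partial\DD}\tilde p_k(\cos\theta)\tilde p_l(\cos\theta)\,d\mu$, and substituting $\tilde p_k=\bigl(2(1-\alpha_{2k-1})\bigr)^{-1/2}(f_k+\overline{f_k})$ splits the integrand into four terms. Two of them are handled by $\int_{\partial\DD}\overline{f_k}f_l\,d\mu=\langle f_k,f_l\rangle=\delta_{kl}$ and its conjugate $\int_{\partial\DD}f_k\overline{f_l}\,d\mu=\delta_{kl}$; here $\langle z^{-k}\varphi_{2k},z^{-l}\varphi_{2l}\rangle=\delta_{kl}$ because, for $k<l$, multiplying by $z^{l}$ (a unitary on $L^2(d\mu)$) turns it into $\langle z^{l-k}\varphi_{2k},\varphi_{2l}\rangle$, and $z^{l-k}\varphi_{2k}$ is an honest polynomial of degree $l+k\leq 2l-1$, hence orthogonal to $\varphi_{2l}$ (the case $k>l$ being the complex conjugate, and $k=l$ being $\langle\varphi_{2k},\varphi_{2k}\rangle=1$). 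The remaining two terms are $\int_{\partial\DD}f_kf_l\,d\mu=\langle\overline{f_k},f_l\rangle=\langle z^{-k}\varphi_{2k}^*,z^{-l}\varphi_{2l}\rangle$ and its conjugate; for $k\neq l$ the same type of degree count (using also that $\varphi_n^*\perp\{z,\dots,z^{n}\}$) makes this vanish, while for $k=l$ it equals $\langle\varphi_{2k}^*,\varphi_{2k}\rangle=-\bar\alpha_{2k-1}=-\alpha_{2k-1}$, the standard OPUC identity obtained by iterating the Szeg\H{o} recurrence \eqref{eq:phi-rec}. Adding the four contributions gives the bracket $(-\alpha_{2k-1}+1+1-\alpha_{2k-1})\delta_{kl}=2(1-\alpha_{2k-1})\delta_{kl}$, and dividing by $2\sqrt{(1-\alpha_{2k-1})(1-\alpha_{2l-1})}$ yields $\int\tilde p_k\tilde p_l\,d\nu=\delta_{kl}$, finishing the proof.

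The only genuinely non‑formal ingredient is the ``holomorphic'' cross term $\int_{\partial\DD}f_kf_l\,d\mu$: unlike $\int\overline{f_k}f_l\,d\mu$ it is not directly an inner product of two CMV‑basis vectors, and converting it into one requires both the reversal identity $\overline{f_k}=z^{-k}\varphi_{2k}^*$ on $\partial\DD$ and the evaluation $\langle\varphi_n^*,\varphi_n\rangle=-\bar\alpha_{n-1}$. I expect this, together with the degree bookkeeping for $\langle z^{-k}\varphi_{2k},z^{-l}\varphi_{2l}\rangle$, to be the only steps demanding real care; the convention $\alpha_{-1}=-1$ makes the statement hold at $k=0$, where it reduces to $\tilde p_0\equiv 1$.
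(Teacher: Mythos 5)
Your proof is correct, but it takes a genuinely different route from the paper's. The paper derives \eqref{eq:p-phi-Szego} structurally from the $\mathcal{LM}$ factorization of the CMV matrix: for real Verblunsky coefficients $\mathcal{M}$ is a real involution, its $+1$-eigenvectors $c_k^+$ are written explicitly in the CMV basis (formula \eqref{eq:c-pm1}), the restriction of $\tfrac12(\mathcal{C}+\mathcal{C}^t)$ to that eigenspace is shown to be the tridiagonal Jacobi matrix $\mathcal{J}_+$ representing multiplication by $x=\tfrac12(z+z^{-1})$ with $e_0$ still cyclic, and the formula then drops out by writing $c_k^+$ in the functional model via $\chi_{2k-1},\chi_{2k}$ and simplifying with the Szeg\H{o} recurrence and reality. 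You instead verify the claimed right-hand side \emph{a posteriori}: degree and positivity of the leading coefficient via $\varphi_{2k}(0)=-\kappa_{2k}\alpha_{2k-1}$, and orthonormality in $L^2(d\nu)$ by expanding into four circle integrals, the only nontrivial one being $\langle\varphi_{2k}^*,\varphi_{2k}\rangle=-\bar\alpha_{2k-1}$, then invoking uniqueness of orthonormal polynomials with positive leading coefficients. All the individual steps check out (the degree bookkeeping for the cross terms, the identity $\overline{f_k}=z^{-k}\varphi_{2k}^*$ on $\partial\DD$, and the normalization $2(1-\alpha_{2k-1})$ all come out right, including the $k=0$ case with $\alpha_{-1}=-1$). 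What you lose relative to the paper is the byproduct: the operator-theoretic argument simultaneously produces the Geronimus relations for the Jacobi entries $r_k=b_k^{++}$, $s_k=a_{k+1}^{++}$ and ties the identity to the quantization scheme ($R$, $S$ versus $\mathcal{M}$, $\mathcal{L}$), which is the point of the section; your computation establishes only the polynomial identity. What you gain is a self-contained, elementary verification that needs no CMV machinery beyond the basic OPUC facts.
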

We will give its new simple proof based on the structure of the CMV basis. For real Verblunsky coefficients we have $\mathcal{M}^2 = \mathcal{L}^2 = \mathbb{I}$. Denote by $\mathcal{H}^\mathcal{M}_\pm$ the eigenspace of $\mathcal{M}$ corresponding to eigenvalue $\pm 1$ and eigenvectors 
\begin{equation} \label{eq:c-pm1}
c^\pm_k = \sqrt{\frac{1\pm\alpha_{2k-1}}{2}} \, e_{2k-1} \pm \sqrt{\frac{1 \mp \alpha_{2k-1}}{2}} \, e_{2k}, \qquad k>0;
\end{equation}
compare with equation~\eqref{eq:phi-e}.
\begin{Rem}
	Notice that the above formula applies also to $c^+_0 = e_0$. 
\end{Rem}
The transition amplitudes due to application of $\mathcal{L}$ on the eigenvectors 
\begin{equation*}
\mathcal{L} c_k^+ = a^{++}_k c^+_{k-1} + a^{-+}_k c^-_{k-1} + b^{++}_k c^+_{k} + b^{-+}_k c^-_{k} + c^{++}_k c^+_{k+1} + c^{-+}_k c^-_{k+1},
\end{equation*}
read as follows
\begin{align*}
a^{++}_k = \langle c_{k-1}^+ | \mathcal{L} c_k ^+ \rangle & = \; \; \frac{1}{2} \sqrt{ (1-\alpha_{2k-3})(1-\alpha_{2k-2}^2)(1+\alpha_{2k-1})},\\
a^{-+}_k = \langle c_{k-1}^- | \mathcal{L} c_k ^+ \rangle & = - \frac{1}{2} \sqrt{ (1+\alpha_{2k-3})(1-\alpha_{2k-2}^2)(1+\alpha_{2k-1})} ,\\
b^{++}_k = \; \; \langle c_{k}^+ | \mathcal{L} c_k ^+ \rangle \; & = \; \; \frac{1}{2} \Big( \alpha_{2k} (1-\alpha_{2k-1}) - \alpha_{2k-2} (1+\alpha_{2k-1})  \Big) , 
\\
b^{-+}_k = \;\; \langle c_{k}^- | \mathcal{L} c_k ^+ \rangle \; & = - \frac{1}{2} \left(\alpha_{2k-2}  \sqrt{1-\alpha_{2k-1}^2} + \alpha_{2k-2} \sqrt{1-\alpha_{2k-1}^2}  \right) , \\
c^{++}_k = \langle c_{k+1}^+ | \mathcal{L} c_k ^+ \rangle & = \; \;\frac{1}{2} \sqrt{ (1-\alpha_{2k-1})(1-\alpha_{2k}^2)(1+\alpha_{2k+1})} = a^{++}_{k+1} , \\
c^{-+}_k = \langle c_{k+1}^- | \mathcal{L} c_k ^+ \rangle & = \; \; \frac{1}{2} \sqrt{ (1-\alpha_{2k-1})(1-\alpha_{2k}^2)(1-\alpha_{2k+1})} .	
\end{align*}
By direct calculation using the splitting $\mathcal{C} = \mathcal{L} \mathcal{M}$, reality consequences $\mathcal{L}^t = \mathcal{L}$, $\mathcal{M}^t = \mathcal{M}$, and the fact that $\mathcal{M}c^\pm_k = \pm c^\pm_k$ one obtains 
\begin{equation} \label{eq:JMLC}
\frac{1}{2} \left(\mathcal{C} + \mathcal{C}^t \right)c_k^+ = a^{++}_k c^+_{k-1} +  b^{++}_k c^+_{k} + a^{++}_{k+1} c^+_{k+1},
\end{equation}
The restriction of the above self-adjoint operator~\cite{KillipNenciu,Simon-OPUC} 
to $\mathcal{H}^\mathcal{M}_+$ has still $e_0$ as the cyclic vector, and is of the form
\begin{equation*} 
\mathcal{J_+} = \left( \begin{array}{ccccc}
r_0 & s_0 & 0 & 0 & \cdots \\
s_0 & r_1 & s_1 & 0 & \cdots \\
0 & s_1 & r_2 & s_2 & \cdots \\
0 & 0 & s_2 & r_3 & \cdots \\
\vdots & \vdots & \vdots & \vdots & \ddots
\end{array} \right),
\end{equation*}
where the coefficients $r_k= b_k^{++}$, $s_k = a_{k+1}^{++}$ are given by the Geronimus relations~\eqref{eq:rk} and \eqref{eq:sk}.

Recall that action of $\mathcal{J}_+$ is represented by multiplication by the variable $x$ in the restricted space, and by equation \eqref{eq:JMLC} it is represented by multiplication by $\frac{1}{2}(z+z^{-1})$ in the full space (both with $e_0$ as the cyclic vector).	
Due to equations~\eqref{eq:chi-phi} and \eqref{eq:c-pm1}, in the cyclic model the vectors $c_k^+$ of the orthonormal basis of $\mathcal{H}_+^\mathcal{M}$ are represented by the Laurent polynomials
	\begin{equation} 
	\sqrt{\frac{1+\alpha_{2k-1}}{2}} \chi_{2k-1}(z) + \sqrt{\frac{1-\alpha_{2k-1}}{2}} \chi_{2k}(z) =  \sqrt{\frac{1+\alpha_{2k-1}}{2}} z^{-k} \varphi^*_{2k}(z) + \sqrt{\frac{1-\alpha_{2k-1}}{2}} z^{-k} \varphi_{2k+1}(z).
	\end{equation}
	The above representation can be simplified by using the recurrence formulas~\eqref{eq:phi-rec} which imply
	\begin{equation*}
	\varphi_{2k+1}(z) = \frac{1}{\rho_{2k}} (z \varphi_{2k}(z) - \bar{\alpha}_{2k} \varphi^*_{2k}(z)),
	\end{equation*}
	and by reality of the Verblunsky coefficients which gives 
	\begin{equation*}
	\varphi^*_{2k}(z) = z^{2k} \varphi_{2k}(z^{-1}).
	\end{equation*}
Then the expressions for the polynomial representation of the basis vectors~\eqref{eq:c-pm1} take the form of the right hand side of equation \eqref{eq:p-phi-Szego}. Invariance of the Laurent polynomial with respect to $z\leftrightarrow z^{-1}$ implies that it is a polynomial in the variable $z+z^{-1}$. This also gives the new simple proof of Proposition~\ref{prop:polynomials}. 

\begin{Rem}
Similarly, the action of $\mathcal{L}$ on the second eigenvectors $(c_k^-)_{k\in\NN}$, which span $\mathcal{H}_-^\mathcal{M}$, by formula analogous to \eqref{eq:JMLC}, gives another Jacobi matrix for the cyclic vector $c_1^-$. The corresponding measure and polynomials were found by Szeg\H{o}~\cite{Szego}. By symmetry reason, one can apply the same reasoning for eigenspaces $\mathcal{H}_-^\mathcal{L}$ of the matrix $\mathcal{L}$. The corresponding measures were found in~\cite{KillipNenciu}, and the orthogonal polynomials were found in~\cite{BerriochaCachafeiroGarcia-Amor} from the recurrence relations and Jacobi matrices. We mention that the approach via CMV matrices, as presented in this section, gives the same result with less effort. Finally, without entering into details we remark that the above modifications of both the measure and polynomials can be incorporated into more general context of Darboux-type (Christoffel and Geronimus) transformations of CMV matrices~\cite{CMMV}. 
\end{Rem}
\begin{Rem}
	In literature one can find~\cite{DerevyaginVinetZhedanov} also another map between OPUC and OPRL based on the theory of CMV matrices.
\end{Rem}

\subsection{Example of Jacobi polynomials}
The celebrated Jacobi polynomials are orthogonal with respect to the measure on $[-1,1]$
\begin{equation*}
d\nu(x) = (1-x)^\alpha (1+x)^\beta dx , \qquad \alpha, \beta > -1,
\end{equation*}
and for certain values of the parameters reduce to the Gegenbauer $(\alpha = \beta)$, Legendre $(\alpha = \beta = 0)$, Chebyshev polynomials of the first $(\alpha = \beta = - \frac{1}{2})$ or the second $(\alpha = \beta = \frac{1}{2})$ kind. Also other well known families of orthogonal polynomials named after Laguerre and Hermite can be derived as certain limiting cases of the Jacobi polynomials.

When defined by the Rodrigues formula
\begin{equation*}
P^{(\alpha,\beta)}_n(x) = \frac{(-1)^n}{n! 2^n} (1-x)^{-\alpha}(1+x)^{-\beta} \left( (1-x)^{\alpha+n}(1+x)^{\beta + n}
	\right)^{(n)}, \qquad n\geq 0,
\end{equation*}
they are normalized by
\begin{gather*}
\int_{-1}^1 P^{(\alpha,\beta)}_m(x) P^{(\alpha,\beta)}_n(x) 
(1-x)^{\alpha}(1+x)^{\beta} dx = \delta_{mn} h^{(\alpha,\beta)}_n, \\
h^{(\alpha,\beta)}_n = \frac{2^{\alpha + \beta + 1} \Gamma(\alpha + n + 1) \Gamma(\beta + n + 1)}{n! \Gamma(\alpha + \beta + n + 1) \Gamma(\alpha + \beta + 2n + 1)}.
\end{gather*}
The leading coefficient is given by
\begin{equation*}
P^{(\alpha,\beta)}_n(x) = \frac{1}{2^n} \begin{pmatrix} \alpha + \beta + 2n \\ n \end{pmatrix} x^n + \dots ,
\end{equation*}
and their values at the ends of the segment read
\begin{equation*}
P^{(\alpha ,\beta)}_n(1) = \begin{pmatrix} \alpha + n \\ n \end{pmatrix}, \qquad P^{(\alpha ,\beta)}_n(-1) = (-1)^n \begin{pmatrix} \beta + n \\ n \end{pmatrix} .
\end{equation*}
The three-term recurrence relation is of the form
\begin{gather*}
\begin{split}
x P^{(\alpha ,\beta)}_n(x) = 
\frac{2(n+1)(n+\alpha + \beta + 1)}{(2n + \alpha + \beta + 1)(2n + \alpha + \beta + 2)} P^{(\alpha ,\beta)}_{n+1}(x) +
\qquad \qquad \\ +
\frac{\beta^2 - \alpha^2}{(2n + ,\alpha + \beta)(2n + \alpha + \beta + 2)} P^{(\alpha ,\beta)}_{n}(x) + 
\frac{2(n+\alpha)(n + \beta)}{(2n + \alpha + \beta)(2n + \alpha + \beta + 1)} P^{(\alpha ,\beta)}_{n-1}(x),
\end{split}\\
P^{(\alpha ,\beta)}_{-1}(x) \equiv 0, \qquad 
P^{(\alpha ,\beta)}_0(x) \equiv 1.
\end{gather*}
The corresponding polynomials defined by
\begin{equation*}
Q_n^{(\alpha,\beta)}(x) = \frac{P^{(\alpha ,\beta)}_n(x)}{P^{(\alpha ,\beta)}_n(1)},
\end{equation*}
satisfy three term recurrence
\begin{equation*}
x Q_n^{(\alpha,\beta)}(x) = p_n^{(\alpha,\beta)} Q_{n+1}^{(\alpha,\beta)}(x) + r_n^{(\alpha,\beta)} Q_{n}^{(\alpha,\beta)}(x) + q_n^{(\alpha,\beta)} Q_{n-1}^{(\alpha,\beta)}(x) ,
\end{equation*}
with the coefficients
\begin{align} \label{eq:pk-J}
p_n^{(\alpha,\beta)} & = \frac{2(n+\alpha + 1)(n+\alpha + \beta + 1)}{(2n + \alpha + \beta + 1)(2n + \alpha + \beta + 2)} ,\\
r_n^{(\alpha,\beta)} & = \frac{\beta^2 - \alpha^2}{(2n + \alpha + \beta)(2n + \alpha + \beta + 2)} , \\ \label{eq:qk-J}
q_n^{(\alpha,\beta)} & = \frac{2n(n+ \beta)}{(2n + \alpha + \beta)(2n + \alpha + \beta + 1)},
\end{align}
where also 
\begin{equation*}
p_n^{(\alpha,\beta)} + r_n^{(\alpha,\beta)} + q_n^{(\alpha,\beta)} = 1, \qquad p_0^{(\alpha,\beta)} > 0, \quad q_0^{(\alpha,\beta)} = 0, \qquad p_n^{(\alpha,\beta)} > 0, \qquad q_n > 0 \quad \text{for} \quad n>0.
\end{equation*}
In order to be random walk polynomials they have to satisfy the recurrence with $r_n^{(\alpha,\beta)}\geq 0$, which needs $\alpha = \beta$ or $\beta \geq |\alpha|$. 

The corresponding quantum walks are governed by circular analogs of the Jacobi polynomials~\cite{Szego}, obtained by the Szeg\H{o} transform, are given thus by the weight
\begin{equation*}
w(\theta) = (1-\cos \theta)^{\alpha + 1/2} (1+\cos\theta)^{\beta + 1/2}, \qquad \alpha, \beta > -1, \qquad \theta\in[0,2\pi].
\end{equation*}
The Verblunsky coefficients for the measure have been found in~\cite{Golinskii-J,Badkov} and read as follows
\begin{equation} \label{eq:circ-J-V}
\alpha_{n} = - \frac{\alpha + \frac{1}{2} + (-1)^{n+1}(\beta + \frac{1}{2})}{n + \alpha + \beta + 2}.
\end{equation}	
Indeed, inserting the coefficients \eqref{eq:circ-J-V} into formulas~\eqref{eq:qk}-\eqref{eq:pk} we recover the above probabilities \eqref{eq:pk-J}-\eqref{eq:qk-J}.

\subsection{Direct (or naive) derivation of quantum walks from CMV matrices}
The $\mathcal{L}\mathcal{M}$ factorization of CMV matrices suggests yet another derivation of quantum walks on half-line. At first glance it looks promising, but after presenting it we would like to give arguments towards its rejection. 

The key idea is to interpret one of the matrices as providing the coin flip operator with the second acting as the position flip operator. 
Consider coins having two states over vertex $k > 0$ with the basis $\{ e_{2k-1}, e_{2k} \}$, and one-dimensional state space over $k=0$ spanned by $e_0$. The matrix $\mathcal{M}$ gives then coin flips, in particular $\Theta_{2k-1}$ flips the coin over vertex $k>0$. The matrix $\mathcal{L}$ flips positions, in particular $\Theta_{2k}$ provides flip between the coin state $e_{2k}$ over vertex $k$ and the state $e_{2k+1}$ over vertex $k+1$, see Figure~\ref{fig:CMV-walk-2}.
\begin{figure}[h!]
	\begin{center}
		\includegraphics[width=12cm]{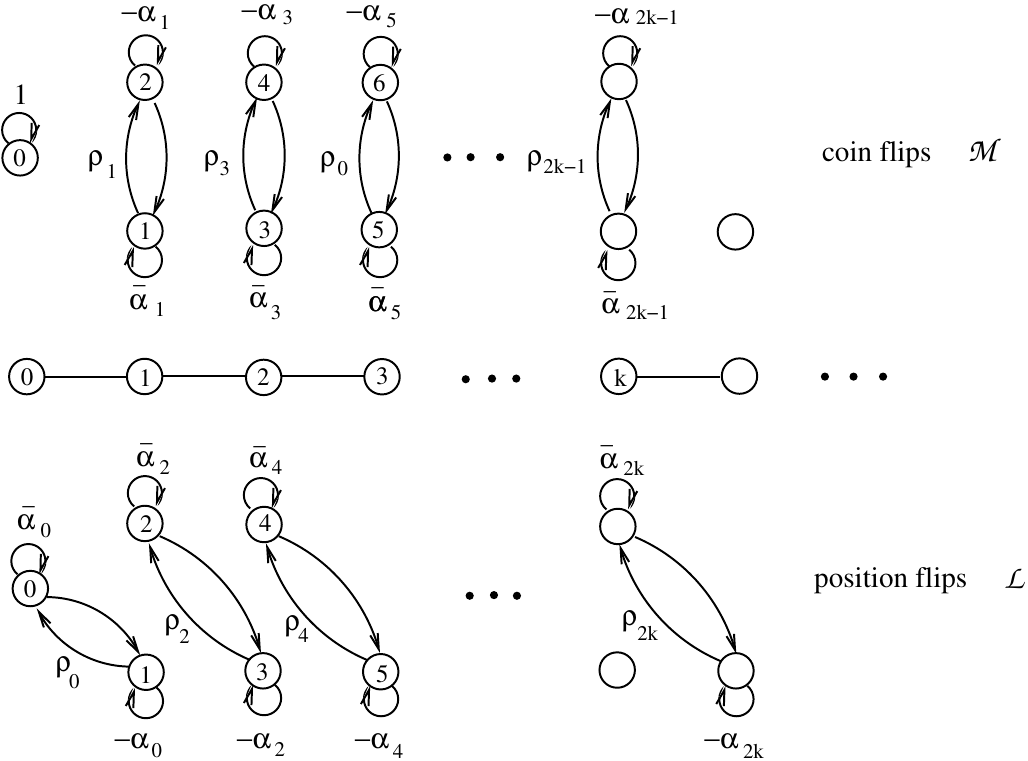}
	\end{center}
	\caption{Quantum walk on the half-line splitted into coin flips described by matrix $\mathcal{M}$ and position flips given by $\mathcal{L}$}
	\label{fig:CMV-walk-2}
\end{figure}
In order to measure the transition probabilities from vertex $k$ of the half-line graph we need to fix initial state of the coin there. When a choice of a coin in each vertex has been made the formulas below provide a way how to associate a discrete random walk on a half line with a given quantum walk with $\mathcal{C}=\mathcal{L}\mathcal{M}$ serving as unitary evolution operator. 
\begin{Lem} \label{lem:prob-2}
	When the particle is in vertex $k$ of the half-line graph with the coin in the state 
	\begin{equation*}
	c_k =	\gamma_{2k-1}e_{2k-1} + \gamma_{2k} e_{2k}, \qquad |\gamma_{2k-1}|^2 + |\gamma_{2k}|^2 = 1, \qquad k=1,2,3,\dots ,
	\end{equation*}
	then after application of the position flip $\mathcal{L}$ the particle
	\begin{enumerate}
		\item moves to vertex $k-1$ with probability
		\begin{equation*}
		q_k = |\gamma_{2k-1}|^2 (1-|\alpha_{2k-2}|^2),
		\end{equation*}
		\item moves to vertex $k+1$ with probability
		\begin{equation*}
		p_k = |\gamma_{2k}|^2 (1 - |\alpha_{2k}|^2),
		\end{equation*}
		\item stays at vertex $k$ with probability
		\begin{equation*}
		r_k = |\gamma_{2k-1}|^2 |\alpha_{2k-2}|^2 + |\gamma_{2k}|^2 |\alpha_{2k}|^2.
		\end{equation*}
	\end{enumerate}
\end{Lem}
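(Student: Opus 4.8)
The plan is to compute directly the action of the position-flip matrix $\mathcal{L}$ on the coin state $c_k$ sitting over vertex $k$, and to read off the branching probabilities as the squared moduli of the resulting amplitudes grouped by vertex. Recall from the $\mathcal{L}\mathcal{M}$ factorization that $\mathcal{L} = \Theta_0 \oplus \Theta_2 \oplus \Theta_4 \oplus \dots$, where each
\begin{equation*}
\Theta_{2k} = \begin{pmatrix} \bar{\alpha}_{2k} & \rho_{2k} \\ \rho_{2k} & -\alpha_{2k} \end{pmatrix}
\end{equation*}
acts on the pair of basis vectors $\{ e_{2k}, e_{2k+1} \}$. The crucial bookkeeping point is that, under the identification of Figure~\ref{fig:CMV-walk-2}, $e_{2k}$ is a coin state \emph{over vertex $k$} while $e_{2k+1}$ is a coin state \emph{over vertex $k+1$}; consequently $e_{2k-1}$ (the other basis vector of the coin over vertex $k$) is the partner of $e_{2k-2}$ and is moved by $\Theta_{2k-2}$, whose action on $e_{2k-1}$ is $S(e_{2k-1}) = \rho_{2k-2} e_{2k-2} - \alpha_{2k-2} e_{2k-1}$, landing partly on vertex $k-1$ (via $e_{2k-2}$) and partly back on vertex $k$ (via $e_{2k-1}$).

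First I would apply $\mathcal{L}$ to $c_k = \gamma_{2k-1} e_{2k-1} + \gamma_{2k} e_{2k}$ termwise. The term $\gamma_{2k-1} e_{2k-1}$ is acted on by $\Theta_{2k-2}$, giving $\gamma_{2k-1}(\rho_{2k-2} e_{2k-2} - \alpha_{2k-2} e_{2k-1})$; the first summand is a coin state over vertex $k-1$, the second is a coin state over vertex $k$. The term $\gamma_{2k} e_{2k}$ is acted on by $\Theta_{2k}$, giving $\gamma_{2k}(\bar{\alpha}_{2k} e_{2k} + \rho_{2k} e_{2k+1})$; the first summand is a coin state over vertex $k$, the second over vertex $k+1$. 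Collecting by vertex, the amplitude to be at vertex $k-1$ is $\gamma_{2k-1}\rho_{2k-2}$, the amplitude to be at vertex $k+1$ is $\gamma_{2k}\rho_{2k}$, and the vertex-$k$ component is $-\gamma_{2k-1}\alpha_{2k-2} e_{2k-1} + \gamma_{2k}\bar{\alpha}_{2k} e_{2k}$ — and since $e_{2k-1}$ and $e_{2k}$ are orthonormal, the probability of staying is the sum of the two squared moduli. Taking $|\cdot|^2$ of each and using $\rho_j^2 = 1 - |\alpha_j|^2$ yields $q_k = |\gamma_{2k-1}|^2(1-|\alpha_{2k-2}|^2)$, $p_k = |\gamma_{2k}|^2(1-|\alpha_{2k}|^2)$, and $r_k = |\gamma_{2k-1}|^2|\alpha_{2k-2}|^2 + |\gamma_{2k}|^2|\alpha_{2k}|^2$, as claimed. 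One should also check consistency, i.e. $p_k + q_k + r_k = |\gamma_{2k-1}|^2 + |\gamma_{2k}|^2 = 1$, which follows immediately.

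The computation is entirely routine; the only genuine subtlety — and what I expect to be the main point requiring care — is the indexing convention tying the CMV basis vectors $e_{2k-1}, e_{2k}, e_{2k+1}$ to the vertices $k-1, k, k+1$ of the half-line, and in particular which $\Theta$ block moves which basis vector. Once that dictionary is fixed (as it is by the preceding subsection and Figure~\ref{fig:CMV-walk-2}), the lemma is an immediate one-step unfolding of $\mathcal{L}$. For completeness one might remark that the boundary case $k=0$, where the coin space is one-dimensional and spanned by $e_0$, is handled separately and is the reason the statement is restricted to $k \geq 1$.
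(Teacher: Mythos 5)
Your proof is correct and follows exactly the paper's argument: the authors likewise compute $\mathcal{L}c_k = \gamma_{2k-1}\rho_{2k-2}\,e_{2k-2} - \gamma_{2k-1}\alpha_{2k-2}\,e_{2k-1} + \gamma_{2k}\bar\alpha_{2k}\,e_{2k} + \gamma_{2k}\rho_{2k}\,e_{2k+1}$ and read off the probabilities by grouping amplitudes by vertex. Your extra care with the dictionary between the CMV basis vectors and the vertices, and the check that $p_k+q_k+r_k=1$, are welcome but not a departure from the paper's route.
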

\begin{proof}
	After the position flip the state of the particle-coin reads
	\begin{equation} \label{eq:p-flip-c}
	\mathcal{L}c_k = \gamma_{2k-1}\rho_{2k-2} e_{2k-2} - \gamma_{2k-1} \alpha_{2k-2} e_{2k-1} + \gamma_{2k} \bar\alpha_{2k} e_{2k} + \gamma_{2k} \rho_{2k} e_{2k+1} ,
	\end{equation}
	which implies the statement.
\end{proof}
\begin{Rem}
	Notice that the above transition probabilities do not depend either on the phases of the Verblunsky coefficients $\alpha_{2k-2}$ and $\alpha_{2k}$ or on the phases of the initial state of the coin. We considered the action of the position flip $\mathcal{L}$ only, because the action of the coin flip can be compensated by different initial state.
\end{Rem}

In order to compare quantum walks with discrete random walks one needs to establish a correspondence between them. It is natural to demand the same one-step transition probabilities. In Szegedy's quantization this requirement is satisfied due to Proposition~\ref{prop:Szegedy-Markov}. Because dependence of the result on the state of the coin it is important to include this additional degree of freedom into the definition of the correspondence.

Let us assume that all the Verblunsky coefficients are real and let us take as the initial coin state the eigenvector of the local coin flip operator $\mathcal{M}$ with eigenvalue~$1$, like in the spectral quantization.  Then in Lemma~\ref{lem:prob-2} we take
\begin{equation*}
c_k = \sqrt{\frac{1 + \alpha_{2k-1}}{2}} \, e_{2k-1} + \sqrt{\frac{1 - \alpha_{2k-1}}{2}} \, e_{2k},
\end{equation*}
which results in the transition probabilities
	\begin{align} \label{eq:qk-direct}
q_k & = \frac{1}{2}(1+\alpha_{2k-1})(1-\alpha_{2k-2}^2)\\
p_k & = \frac{1}{2}(1-\alpha_{2k-1})(1-\alpha_{2k}^2)\\	
r_k & = \frac{1}{2}\left( \alpha_{2k-2}^2(1+\alpha_{2k-1}) + \alpha_{2k}^2 (1-\alpha_{2k-1}) \right),
\label{eq:rk-direct}
\end{align}
with the convention $\alpha_{-1} = -1$.
\begin{Rem}
	The above relations between probabilities and Verblunsky coefficients differ from those obtained in the spectral quantization \eqref{eq:qk}-\eqref{eq:pk}. They agree only for vanishing even Verblunsky coefficients $\alpha_{2k}=0$, see Example~\ref{ex:r=0}.
\end{Rem}
\begin{Rem}
Equations \eqref{eq:qk-direct}-\eqref{eq:rk-direct} can be reversed, and with the initial condition $\alpha_0 = \sqrt{r_0}$ other Verblunsky coefficients can be computed recursively. However the equations impose restrictions on the transition probabilities. For example one obtains
\begin{equation*}
\frac{q_1}{p_0} = \frac{1+\alpha_1}{2},
\end{equation*}
which implies that in order to have $\alpha_1 < 1$ one needs $q_1<p_0$.
\end{Rem}

\section{Quantization of discrete random walks with constant transition probabilities} \label{sec:constant-prob}
In this section we present in detail spectral quantization of random walks with constant transition probabilities. The corresponding orthogonal polynomials on the unit circle have real Verblunsky coefficients of period two. In order to present their properties we study more general problem of orthogonal polynomials with complex coefficients of such period. These results motivate our findings concerning polynomials with coefficients of arbitrary period, which we give in the Appendix.

\subsection{Orthogonal polynomial representation of discrete random walks with constant transition probabilities}
	Let us consider the random walk with probabilities
	\begin{equation} \label{eq:pqr-const}
	p_0 , \qquad r_0 = 1-p_0 , \qquad p_k  = p, \qquad q_k = q, \qquad r_k = 1 - p - q, \qquad k\geq 1.
	\end{equation}
Such a choice is motivated by the example given in the classical paper \cite{KarlinMcGregor} and by the following result, which can be checked by direct calculation using Proposition~\ref{prop:pqr-a}.
\begin{Prop} \label{prop:pqr-2p}
The transition coefficients of the random walk which leads by spectral quantization to two-periodic Verblunsky coefficients $\alpha_{2k} = a$, $\alpha_{2k+1} = b$, $k=0,1,2,\dots$ are given by \eqref{eq:pqr-const} with
\begin{gather*}
p_0 = 1-a, \qquad r_0 = a, \\
p = \frac{1}{2}(1-a)(1-b), \qquad r = -ab, \qquad q = \frac{1}{2} (1+a)(1+b).
\end{gather*} 
\end{Prop}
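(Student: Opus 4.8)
The plan is to verify Proposition~\ref{prop:pqr-2p} by direct substitution of the two-periodic Verblunsky data $\alpha_{2k}=a$, $\alpha_{2k+1}=b$ (for $k\geq 0$, with the convention $\alpha_{-1}=-1$) into the general formulas \eqref{eq:qk}--\eqref{eq:pk} of Proposition~\ref{prop:pqr-a}. Concretely, for $k\geq 1$ we have $\alpha_{2k-2}=a$, $\alpha_{2k-1}=b$, $\alpha_{2k}=a$, so
\begin{align*}
q_k &= \tfrac12(1+\alpha_{2k-2})(1+\alpha_{2k-1}) = \tfrac12(1+a)(1+b),\\
p_k &= \tfrac12(1-\alpha_{2k-1})(1-\alpha_{2k}) = \tfrac12(1-b)(1-a),\\
r_k &= \tfrac12\bigl(\alpha_{2k}(1-\alpha_{2k-1}) - \alpha_{2k-2}(1+\alpha_{2k-1})\bigr)
     = \tfrac12\bigl(a(1-b) - a(1+b)\bigr) = -ab,
\end{align*}
which are independent of $k$ and coincide with the claimed values of $p,q,r$.

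Next I would treat the boundary case $k=0$ separately, since here $\alpha_{2k-1}=\alpha_{-1}=-1$. Plugging this into the same three formulas gives $q_0=\tfrac12(1+a)(1-1)=0$, $p_0=\tfrac12(1-(-1))(1-a)=1-a$, and $r_0=\tfrac12\bigl(a(1-(-1)) - (-1)(1-1)\bigr)=a$, exactly matching the Remark following Proposition~\ref{prop:pqr-a} and the assertion $p_0=1-a$, $r_0=a$ in the statement. Finally, as a consistency check, I would note that $p_0+r_0=1$ and $p+q+r=\tfrac12(1-a)(1-b)+\tfrac12(1+a)(1+b)-ab = \tfrac12(2+2ab)-ab = 1$, so the probability normalization $p_k+r_k+q_k=1$ holds for all $k$, and the positivity conditions $p_k>0$, $q_{k+1}>0$ are equivalent to $|a|<1$, $|b|<1$ together with $a,b$ real, i.e.\ to the admissibility of the Verblunsky coefficients.

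This proof is purely computational, so there is no real obstacle; the only thing to be careful about is the bookkeeping of indices (distinguishing $\alpha_{2k-2}$ from $\alpha_{2k}$, both equal to $a$, and handling the $\alpha_{-1}=-1$ convention at the boundary). One may also remark that the same substitution into \eqref{eq:sk} yields $s_k=\sqrt{p_kq_{k+1}}=\tfrac12\sqrt{(1-b)(1-a^2)(1+b)}=\tfrac12\sqrt{(1-a^2)(1-b^2)}$, a constant, consistent with the fact that the associated Jacobi matrix $\mathcal{J}_+$ has constant entries off the first row and thus corresponds (up to the boundary modification at vertex $0$) to a free walk, which is precisely the Geronimus situation alluded to in the Section's introduction.
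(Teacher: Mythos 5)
Your proof is correct and is exactly the argument the paper intends: the text states the proposition ``can be checked by direct calculation using Proposition~\ref{prop:pqr-a},'' and your substitution of $\alpha_{2k-2}=a$, $\alpha_{2k-1}=b$, $\alpha_{2k}=a$ into \eqref{eq:qk}--\eqref{eq:pk}, with the $\alpha_{-1}=-1$ convention handling the $k=0$ boundary, is precisely that calculation. The normalization check and the remark on $s_k=\tfrac12\rho_a\rho_b$ are consistent with Corollary~\ref{cor:P}.
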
	
The Cauchy--Stieltjes transform of the measure for coefficients given by \eqref{eq:pqr-const} was calculated	in~\cite{GdI}  and reads 
\begin{equation*}
S(z) = \frac{-2p(z-1+p_0) - p_0(1-p-q-z) + p_0 \sqrt{(1-p-q-z)^2 - 4pq}}{2(p_0 - p) (1-z) (z-\xi)},
\end{equation*}
where 
\begin{equation*}
\xi = 1 - p_0 - \frac{p_0 q}{p_0 - p}.
\end{equation*}
Then the absolutely continuous part of the measure is given by
\begin{equation*}
u(x) = 
\frac{p_0 \sqrt{(x-\sigma_-)(\sigma_+-x)}}{2\pi q(p_0 - p)(1-x)(x-\gamma)} , \qquad x\in [\sigma_-, \sigma_+] \subset [-1,1], \qquad \sigma_\pm = 1 - (\sqrt{p} \mp \sqrt{q})^2,
\end{equation*}
and the discrete part is concentrated at $x=1$, and $x=\gamma$
\begin{equation*}
d\nu_s =   \nu(\{1\}) \delta(x-1) +  \nu(\{\xi\}) \delta(x-\xi),
\end{equation*}
with non-zero mass
\begin{equation*}
\nu(\{1\}) = \frac{q-p}{q-p + p_0}
\end{equation*} 
at $x=1$ existing for $q-p>0$ only, and at $x=\xi$ when $p_0 \neq p$
\begin{equation*}
\nu(\{ \xi \}) = \frac{(p_0 - p)^2 - pq}{(p_0 - p)(p_0 - p + q)}
\end{equation*}	
for $(p_0 - p)^2 > pq$ only.
\begin{Cor} \label{cor:u-ab}
	In the special case described in Proposition~\ref{prop:pqr-2p} the corresponding formulas read
	\begin{gather}
	u(x) = \frac{\sqrt{(x-\sigma_-)(\sigma_+ - x)}}{\pi (1+b) (1-x)(x+1)}, \qquad \text{where} \quad \sigma_\pm  = -ab \pm \rho_a \rho_b , \\
	\nu(\{ 1 \}) = \frac{a+b}{1+b}, \quad \text{when} \; a+b>0,  \qquad \xi = -1, \quad \nu(\{-1\}) = \frac{b-a}{1+b}  \quad \text{when} \; b>a.
	\end{gather}
\end{Cor}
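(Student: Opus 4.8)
The plan is a direct substitution: insert $p_0=1-a$, $p=\tfrac12(1-a)(1-b)$, $q=\tfrac12(1+a)(1+b)$ from Proposition~\ref{prop:pqr-2p} into the general expressions for $\sigma_\pm$, $\xi$, $u$, $\nu(\{1\})$ and $\nu(\{\xi\})$ recorded above, and simplify, writing $\rho_a=\sqrt{1-a^2}$, $\rho_b=\sqrt{1-b^2}$. Nothing conceptual is needed beyond careful bookkeeping.

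First I would record the elementary identities that govern every step:
\begin{gather*}
p_0-p=\tfrac12(1-a)(1+b),\qquad 1-p-q=r=-ab,\qquad pq=\tfrac14(1-a^2)(1-b^2)=\tfrac14\rho_a^2\rho_b^2,\\
q-p=a+b,\qquad q-p+p_0=p_0-p+q=1+b,\qquad (p_0-p)^2-pq=\tfrac12(1-a)(1+b)(b-a).
\end{gather*}
Since $\sigma_\pm=1-(\sqrt p\mp\sqrt q)^2=(1-p-q)\pm 2\sqrt{pq}$ and $2\sqrt{pq}=\rho_a\rho_b$ (the positive root, admissible as $\rho_a,\rho_b>0$), one reads off $\sigma_\pm=-ab\pm\rho_a\rho_b$; moreover $(1-|ab|)^2-\rho_a^2\rho_b^2=(|a|-|b|)^2\ge0$ gives $\rho_a\rho_b\le 1-|ab|$, hence $[\sigma_-,\sigma_+]\subseteq[-1,1]$ as asserted. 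Plugging the values into $\xi=1-p_0-\tfrac{p_0q}{p_0-p}$ makes the fraction $\tfrac{p_0q}{p_0-p}$ collapse to $1+a$, so $\xi=1-(1-a)-(1+a)=-1$.

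Next the atoms. From $q-p=a+b$ the mass at $x=1$ is present precisely when $a+b>0$, and then $\nu(\{1\})=\tfrac{q-p}{q-p+p_0}=\tfrac{a+b}{1+b}$. From $(p_0-p)^2-pq=\tfrac12(1-a)(1+b)(b-a)$ and $(p_0-p)(p_0-p+q)=\tfrac12(1-a)(1+b)^2$ (and $p_0-p>0$), the mass at $x=\xi=-1$ is present precisely when $b>a$, and then $\nu(\{-1\})=\tfrac{b-a}{1+b}$. Finally, substituting into the density and simplifying its prefactor with $p_0-p=\tfrac12(1-a)(1+b)$ — the overall constant being in any case pinned by $\nu([-1,1])=1$ — one obtains $u(x)=\dfrac{\sqrt{(x-\sigma_-)(\sigma_+-x)}}{\pi(1+b)(1-x)(x+1)}$, which is the claim.

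There is no real obstacle; the only care needed is (i) choosing the branch so that $2\sqrt{pq}=\rho_a\rho_b\ge0$, (ii) keeping track of which of the two atoms actually occurs, according to the signs of $a+b$ and $b-a$ (and one may note that the resulting formulas stay internally consistent for every $a,b\in(-1,1)$, not only those producing a genuine random walk, where $r=-ab\ge0$), and (iii) as a sanity check, confirming $\nu(\{1\})+\nu(\{-1\})+\int_{\sigma_-}^{\sigma_+}u\,dx=1$ via the standard evaluation $\int_{\sigma_-}^{\sigma_+}\frac{\sqrt{(x-\sigma_-)(\sigma_+-x)}}{c-x}\,dx=\pi\bigl(c-m-\mathrm{sgn}(c-m)\sqrt{(c-\sigma_-)(c-\sigma_+)}\bigr)$ for $c\notin[\sigma_-,\sigma_+]$, with $m=\tfrac12(\sigma_-+\sigma_+)=-ab$.
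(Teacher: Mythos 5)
Your route --- direct substitution of the values from Proposition~\ref{prop:pqr-2p} into the general formulas quoted just above the Corollary --- is exactly what the paper intends (it offers no separate proof), and your computations of $\sigma_\pm=-ab\pm\rho_a\rho_b$, of $\xi=-1$, and of both point masses, together with the conditions $a+b>0$ and $b>a$ for their presence, are all correct. The one place where your write-up glosses over a real difficulty is the density. Substituting into the prefactor as printed, $p_0/\bigl(2\pi q(p_0-p)\bigr)$, gives $2/\bigl(\pi(1+a)(1+b)^2\bigr)$, \emph{not} the stated $1/\bigl(\pi(1+b)\bigr)$; the two differ by precisely the factor $q=\tfrac12(1+a)(1+b)$, so ``simplifying the prefactor'' does not produce the claim. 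A quick test at $a=b=0$ shows the Corollary is the correct version: there the Jacobi parameters are $r_k=0$, $s_0=1/\sqrt{2}$, $s_k=1/2$ ($k\ge1$), whose spectral measure is the arcsine law $dx/(\pi\sqrt{1-x^2})$ of total mass $1$, matching the Corollary, whereas the quoted general density integrates to $2$. So the general formula as transcribed from the cited source carries a typo (the prefactor should be $p_0/\bigl(2\pi(p_0-p)\bigr)$, which does reduce to $1/\bigl(\pi(1+b)\bigr)$ via $p_0-p=\tfrac12(1-a)(1+b)$). Consequently your parenthetical remark that the constant is ``in any case pinned by $\nu([-1,1])=1$'' is not a sanity check but the actual argument for the density's normalization: you should promote it to the main line and carry out the integral you quote in item (iii), which, combined with the two atoms, does confirm total mass $1$ for the Corollary's constant and not for the one obtained by blind substitution.
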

The orthogonal polynomials corresponding to the random walk \eqref{eq:pqr-const} satisfy recurrence
	\begin{align*}
	xQ_k(x) & = q Q_{k-1}(x) + (1-p-q) Q_k(x) + p Q_{k+1}(x), \qquad k\geq 1,\\
	Q_0(x) & \equiv 1, \qquad Q_1(x) = \frac{x- 1 + p_0}{p_0}.
	\end{align*}
Upon introducing 
	\begin{equation*}
	y = \frac{x-r}{2\sqrt{pq}}, \qquad \tilde{Q}_k(y) = \left( \frac{p}{q} \right)^{k/2} Q_k (x),
	\end{equation*}
we obtain the recurrence 
	\begin{gather} \label{eq:Ch-rec-Q}
	2 y \tilde{Q}_k(y) = \tilde{Q}_{k+1}(y) + \tilde{Q}_{k-1}(y), \\
	\tilde{Q}_0(y) \equiv 1, \qquad \tilde{Q}_1(y) = \frac{2p}{p_0}y + \left( \frac{q}{p} \right)^{1/2} \frac{p_0 - p - q}{p_0}.
	\end{gather}
It can be solved in terms of the Chebyshev polynomials of the second kind $U_k(y)$, defined by the same recurrence \eqref{eq:Ch-rec-Q} but with initial conditions
	\begin{equation*}
	U_0(y) \equiv 1, \qquad U_1(y) = 2y ,
	\end{equation*}
and leads to the final result
	\begin{equation} \label{eq:OPRL-1p}
	Q_k(x) = \left(\frac{q}{p} \right)^{k/2} \left[ U_k\left(\frac{x-r}{2\sqrt{pq}}\right) +  \left(\left( \frac{p}{q} \right)^{1/2}\frac{x-1 + p_0}{p_0} - \frac{x-r}{\sqrt{pq}}\right) U_{k-1}\left(\frac{x-r}{2\sqrt{pq}}\right)\right], \qquad k\geq 0.
	\end{equation}
\begin{Rem}
	We use the fact that the recurrence for Chebyshev polynomials of the second kind can be equivalently initiated by 
	\begin{equation*}
U_{-1}(y)\equiv 0, \qquad U_0(y) \equiv 1.
	\end{equation*}
Another standard definition of the polynomials, which we will use in the next section, is given by
\begin{equation}
\label{eq:Cheb-sec-sin}
U_k(y) = \frac{\sin ((k+1)\omega)}{\sin\omega}, \qquad y = \cos \omega .
\end{equation}
\end{Rem}
	\begin{Rem}
	See Section~\ref{sec:per-OPRL} for generalization of the above representation to the case of OPRL with coefficients of arbitrary period.
	\end{Rem}
\begin{Cor} \label{cor:P}
	The polynomials 
\begin{equation*}
P_k(x) = \left( \frac{p}{q} \right)^{k/2} Q_k (x)
\end{equation*}
satisfy recurrence
\begin{gather*}
xP_k(x) = \sqrt{pq}P_{k+1}(x) + r P_k(x) + \sqrt{pq}P_{k-1}(x), \qquad k\geq 1 ,\\
P_0(x) \equiv 1, \qquad P_1(x) = \left( \frac{p}{q} \right)^{1/2} \frac{x-1+p_0}{p_0},
\end{gather*}
governed by the symmetric Jacobi matrix~\eqref{eq:J-CMG} constructed from the random walk with coefficients~\eqref{eq:pqr-const}. 	In the special case described in Proposition~\ref{prop:pqr-2p} the corresponding recurrence reads
\begin{gather*}
xP_k(x) = \frac{1}{2}\rho_a \rho_b P_{k+1}(x) -ab P_k(x) + \frac{1}{2}\rho_a \rho_b P_{k-1}(x), \qquad k\geq 1,\\
P_0(x) \equiv 1, \qquad P_1(x) = \frac{1-b}{\rho_a \rho_b}(x-a),
\end{gather*}
and gives polynomials
\begin{equation} \label{eq:Pk}
P_k(x) =  U_k\left(\frac{x+ab}{\rho_a \rho_b}\right) - \frac{(1+b)(x+a)}{\rho_a \rho_b} U_{k-1}\left(\frac{x+ab}{\rho_a \rho_b}\right) , \qquad k\geq 0.
\end{equation}
\end{Cor}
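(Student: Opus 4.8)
The plan is to prove every part of the statement by direct substitution into material already available; no genuinely new idea is needed.

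\emph{The two recurrences.} First I would insert $Q_k(x)=(q/p)^{k/2}P_k(x)$ into the three-term recurrence \eqref{eq:3trr} with the constant coefficients \eqref{eq:pqr-const} and divide by $(q/p)^{k/2}$: the coefficient of $P_{k+1}$ becomes $p\,(q/p)^{1/2}=\sqrt{pq}$, the coefficient of $P_{k-1}$ becomes $q\,(p/q)^{1/2}=\sqrt{pq}$, and the diagonal term stays $1-p-q=r$ for $k\ge1$, which is precisely the rescaling that symmetrizes the stochastic matrix of the walk into the Jacobi matrix \eqref{eq:J-CMG} (cf.\ the Remark following \eqref{eq:J-CMG}); together with $P_0=Q_0\equiv1$ and $P_1=(p/q)^{1/2}Q_1=(p/q)^{1/2}(x-1+p_0)/p_0$ this yields the first recurrence. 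Specializing to the two-periodic case of Proposition~\ref{prop:pqr-2p} (so $p_0=1-a$, $p=\tfrac12(1-a)(1-b)$, $q=\tfrac12(1+a)(1+b)$, $r=-ab$) one gets $pq=\tfrac14(1-a^2)(1-b^2)=\tfrac14\rho_a^2\rho_b^2$; since $|a|,|b|<1$ every factor $1\pm a$, $1\pm b$ is positive, hence $\sqrt{pq}=\tfrac12\rho_a\rho_b$. Moreover $x-1+p_0=x-a$ and, after rationalizing, $(p/q)^{1/2}/p_0=\frac{1}{1-a}\sqrt{\tfrac{(1-a)(1-b)}{(1+a)(1+b)}}=\frac{(1-a)(1-b)}{(1-a)\,\rho_a\rho_b}=\frac{1-b}{\rho_a\rho_b}$, so $P_1=\frac{1-b}{\rho_a\rho_b}(x-a)$, which is the second recurrence.

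\emph{The closed form.} For \eqref{eq:Pk} I would put $y=\tfrac{x+ab}{\rho_a\rho_b}$, i.e.\ $x=\rho_a\rho_b\,y-ab$, so that the special-case recurrence becomes the Chebyshev recurrence $2yP_k=P_{k+1}+P_{k-1}$ of \eqref{eq:Ch-rec-Q}. Its general solution is $P_k=A\,U_k(y)+B\,U_{k-1}(y)$ with $U_{-1}\equiv0$ (both $k\mapsto U_k(y)$ and $k\mapsto U_{k-1}(y)$ solve the recurrence); $P_0\equiv1$ forces $A=1$, and matching $P_1=2y+B$ with $\frac{1-b}{\rho_a\rho_b}(x-a)$ gives $B=\frac{(1-b)(x-a)-2(x+ab)}{\rho_a\rho_b}=-\frac{(1+b)(x+a)}{\rho_a\rho_b}$, the last step being the elementary identity $2(x+ab)-(1+b)(x+a)=(1-b)(x-a)$. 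Since $B$ depends on $x$ but not on $k$, the sequence $U_k(y)+B\,U_{k-1}(y)$ genuinely solves the three-term recurrence, so by uniqueness of solutions with prescribed $k=0,1$ data it equals $P_k$ for all $k\ge0$, which is \eqref{eq:Pk}. Alternatively the same expression drops out by substituting the present values straight into \eqref{eq:OPRL-1p}, whose bracketed $U_{k-1}$-coefficient $(p/q)^{1/2}\tfrac{x-1+p_0}{p_0}-\tfrac{x-r}{\sqrt{pq}}$ is exactly $P_1-2y=B$.

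\emph{Main difficulty.} There is essentially none: the statement is routine bookkeeping. The only points that need care are fixing the branches of the square roots via the hypothesis $|a|,|b|<1$ (so that $\sqrt{pq}=\tfrac12\rho_a\rho_b$ and $(p/q)^{1/2}/p_0=(1-b)/(\rho_a\rho_b)$ carry no sign ambiguity), and noticing the linear identity above that collapses the Chebyshev coefficient $B$ into the clean form $-(1+b)(x+a)/(\rho_a\rho_b)$.
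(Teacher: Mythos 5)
Your proposal is correct and follows essentially the same route as the paper: the paper also symmetrizes via the rescaling $(p/q)^{k/2}Q_k$, reduces the constant-coefficient recurrence to the Chebyshev recurrence in the variable $y=(x-r)/(2\sqrt{pq})$, and obtains the closed form as the combination $U_k+B\,U_{k-1}$ fixed by the $k=0,1$ data (equation \eqref{eq:OPRL-1p}), of which \eqref{eq:Pk} is the specialization $p_0=1-a$, $r=-ab$, $2\sqrt{pq}=\rho_a\rho_b$. All of your intermediate identities, in particular $(1-b)(x-a)-2(x+ab)=-(1+b)(x+a)$ and $(p/q)^{1/2}/p_0=(1-b)/(\rho_a\rho_b)$, check out.
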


\subsection{Orthogonal polynomials with two-periodic Verblunsky coefficients} \label{sec:OPUC-2p}
We will study in detail orthogonal polynomials on the unit circle with two-periodic Verblunsky coefficients. Although we need such polynomials with real coefficients only, we consider first general case of complex coefficients. 
\subsubsection{The spectral measure}
Because the spectral measure is well described in literature~\cite{GeronimoVanAssche,Simon-OPUC,Khrushchev} in the general multi-periodic case we will start with brief presentation of the corresponding results in the two-periodic case, i.e. with Verblunsky coefficients of the form
\begin{equation*}
\alpha_{2k} = a, \qquad \alpha_{2k+1} = b, \qquad k= 0,1,2,\dots \, , \qquad a,b \in \DD.
\end{equation*}
The Schur function $f(z)$ is given then by two-periodic continued fraction and satisfies quadratic equation
\begin{equation*}
(\bar{a}z + \bar{b})[f(z)]^2 + \left( \frac{1}{z} + \bar{a} b - a \bar{b} - z \right) f(z) - \frac{a}{z} - b = 0.
\end{equation*}
The corresponding Carath\'{e}odory function $F(z)$ satisfies analogous equation with solution
\begin{equation*}
F(z) = \frac{-B(z) + \sqrt{B(z)^2 - 4A(z)C(z)}}{2A(z)}
\end{equation*} 
where
\begin{gather*}
A(z)  = - z^2 (b + 1) - z (a + a\bar{b} - \bar{a} - \bar{a} b) + 1 + \bar{b},\quad
B(z)  = -2 (b z^2 + (a + \bar{a})z +\bar{b}), \\
C(z)  =  - z^2 (b - 1) - z (a - a\bar{b} - \bar{a} + \bar{a} b) - 1 + \bar{b},
\end{gather*}
and with the sign fixed by condition $F(0) = 1$. 
This allows to find the absolutely continuous part of the measure, which equals
\begin{equation} \label{eq:2p-w}
w(\theta) = \frac{\sqrt{\rho_a^2 \rho_b^2 - [\mathrm{Re}(e^{i\theta} + a\bar{b})]^2}}{|\mathrm{Im}(e^{i\theta}-\bar{a})(1+b)|}  \qquad  \text{when} \quad 
\theta \in (\theta_+, \theta_-) \cup (2\pi - \theta_{-}, 2\pi - \theta_+)
\end{equation}
where $\theta_{\pm}\in [0,\pi]$, 
$\cos\theta_\pm = \pm\rho_a \rho_b - \mathrm{Re}(a \bar{b})$, and vanishes otherwise; here $\rho_a = \sqrt{1-|a|^2}$, $ \rho_b = \sqrt{1 - |b|^2}$.

The eventual points $z_\pm$ of the discrete spectrum are roots of the equation $A(z) = 0$ and read
\begin{equation*}
z_\pm = e^{i\gamma_\pm} = \frac{-(a + a\bar{b} - \bar{a} - \bar{a} b) \pm \sqrt{(a + a\bar{b} - \bar{a} - \bar{a} b)^2 + 4 |1+b|^2}}{2(b+1)}
\end{equation*}
The point $z_\pm$ is in the discrete spectrum if and only if $\mathrm{Re} (a + b z_{\pm}) \gtrless 0$, and the corresponding weight reads
\begin{equation*}
\mu(\{ z_\pm\}) =\left| \frac{2 \mathrm{Re} (a + b e^{i\gamma_{\pm}})}{ (b + 1) \sin [(\gamma_+ - \gamma_-)/2]}\right|.
\end{equation*}

\begin{Rem}
	In \cite{PeherstorferSteinbauer,Simon-OPUC} one can find general expressions for the spectral measure in the periodic case which our result is a special case. 
\end{Rem}

\begin{Cor}
In the special case of real Verblunsky coefficients $a,b\in \RR$ the identification $
\sigma_\pm = \cos\theta_\pm$
and application of equation~\eqref{eq:mu-nu} the absolutely continuous part of the measure \eqref{eq:2p-w} on the circle give the absolutely continuous part of the measure on the segment, as described in Corollary~\ref{cor:u-ab}. As for the discrete spectrum we have $z_\pm = \pm 1$ and $\mu(\{\pm 1\}) = 2 \nu(\{\pm 1\})$ as it should be expected.
\end{Cor}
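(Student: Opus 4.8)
The plan is to obtain all four assertions by substituting $a,b\in\RR$ into the general two-periodic formulas derived above (for $\theta_\pm$, $w(\theta)$, $z_\pm$, and $\mu(\{z_\pm\})$) and then matching the outcome, through the Szeg\H{o}-map relations \eqref{eq:mu-nu}--\eqref{eq:w-u}, against the data recorded in Corollary~\ref{cor:u-ab}. The only facts used are $\bar a=a$, $a\bar b=ab$, $1+b>0$, and $\sin(\arccos x)=\sqrt{1-x^2}\ge 0$, all valid when $a,b$ are real with $|a|,|b|<1$.

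For the absolutely continuous part I would start from the band edges: for real $a,b$,
\[
\cos\theta_\pm=\pm\rho_a\rho_b-\mathrm{Re}(a\bar b)=-ab\pm\rho_a\rho_b=\sigma_\pm ,
\]
so $x=\cos\theta$ carries the two conjugate arcs in \eqref{eq:2p-w} onto the single segment $[\sigma_-,\sigma_+]$. Then I would simplify the weight: $\mathrm{Re}(e^{i\theta}+a\bar b)=\cos\theta+ab$ and $\mathrm{Im}\big((e^{i\theta}-\bar a)(1+b)\big)=(1+b)\sin\theta$ turn \eqref{eq:2p-w} into
\[
w(\theta)=\frac{\sqrt{\rho_a^2\rho_b^2-(\cos\theta+ab)^2}}{(1+b)\,|\sin\theta|}.
\]
Substituting this into $u(x)=w(\arccos x)/(\pi\sqrt{1-x^2})$ and cancelling one factor $\sqrt{1-x^2}$ gives $u(x)=\sqrt{\rho_a^2\rho_b^2-(x+ab)^2}\,/\,\big(\pi(1+b)(1-x^2)\big)$; the identities $\rho_a^2\rho_b^2-(x+ab)^2=(\sigma_+-x)(x-\sigma_-)$ and $1-x^2=(1-x)(1+x)$ then yield exactly the weight $u$ of Corollary~\ref{cor:u-ab}, on exactly the interval $[\sigma_-,\sigma_+]$.

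For the discrete part I would observe that the linear term $a+a\bar b-\bar a-\bar a b$ in the numerator of $z_\pm$ vanishes for real $a,b$, so $z_\pm=\pm\sqrt{4(1+b)^2}\,/\,\big(2(1+b)\big)=\pm 1$; hence $\gamma_+=0$, $\gamma_-=\pi$, and $\xi=z_-=-1$, as in Corollary~\ref{cor:u-ab}. The membership test $\mathrm{Re}(a+bz_\pm)\gtrless 0$ becomes $a+b>0$ for $z_+=1$ and $a-b<0$ (i.e.\ $b>a$) for $z_-=-1$, which are precisely the existence conditions for the two atoms of $\nu$. Finally, inserting $\gamma_+=0$, $\gamma_-=\pi$ (so $\sin[(\gamma_+-\gamma_-)/2]=-1$) and $\mathrm{Re}(a+b(\pm 1))=a\pm b$ into the mass formula gives $\mu(\{1\})=2(a+b)/(1+b)$ and $\mu(\{-1\})=2(b-a)/(1+b)$; comparing with $\nu(\{1\})=(a+b)/(1+b)$ and $\nu(\{-1\})=(b-a)/(1+b)$ from Corollary~\ref{cor:u-ab} gives $\mu(\{\pm 1\})=2\,\nu(\{\pm 1\})$.

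I expect no serious obstacle: the whole argument is bookkeeping, and the points that need attention are only (i) taking in \eqref{eq:2p-w} the branch of the square root fixed by $F(0)=1$; (ii) disposing of the absolute values --- each of $1+b$, of $a+b$ (when the atom at $1$ occurs), and of $b-a$ (when the atom at $-1$ occurs) is nonnegative exactly under the stated inequalities, so all moduli drop out; and (iii) reading the symbol $\gtrless$ so that the spectral gap containing $z=1$ carries the condition $\mathrm{Re}(a+b)>0$ while the gap containing $z=-1$ carries $\mathrm{Re}(a-b)<0$. One should also confirm that $x=\cos\theta$ carries the support $(\theta_+,\theta_-)\cup(2\pi-\theta_-,2\pi-\theta_+)$ of the a.c.\ part of $\mu$ onto $[\sigma_-,\sigma_+]$, so that the a.c.\ parts of $\mu$ and $\nu$ are supported on corresponding sets. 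None of these is a genuine difficulty; the statement is essentially a consistency check between the circle and segment pictures.
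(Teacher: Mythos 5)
Your proof is correct and follows exactly the route the paper intends for this corollary (which is stated without an explicit proof): direct substitution of real $a,b$ into the two-periodic formulas for $w(\theta)$, $\theta_\pm$, $z_\pm$ and the point masses, followed by comparison with Corollary~\ref{cor:u-ab} through the Szeg\H{o}-map relations \eqref{eq:mu-nu}--\eqref{eq:w-u}. All the bookkeeping checks out, including the sign and absolute-value issues you explicitly flag, and the factor $2$ in $\mu(\{\pm 1\})=2\nu(\{\pm 1\})$ is faithfully inherited from the paper's stated formula for $\mu(\{z_\pm\})$.
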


\subsubsection{The polynomials}
In the literature the case of constant complex Verblunsky coefficients (one-periodic) leads to the so called Geronimus polynomials~\cite{Geronimus}. It is known that such polynomials can be described in terms of the Chebyshev polynomials~\cite{Golinskii} of the second kind. We will generalize the construction to the two-periodic case.

In order to use the recurrence \eqref{eq:phi-rec} to find the polynomials we diagonalize the transfer matrix
\begin{equation*}
T_2(z) = A(b)A(a) = \frac{1}{\rho_a \rho_b}\left( \begin{array}{cc}
z^2 + a \bar{b} z & -\bar{a} z - \bar{b} \\
-b z^2 - a z & \bar{a} b z +1
\end{array} \right).
\end{equation*}
Its eigenvalues read
\begin{equation*}
\lambda_\pm(z) = \frac{z}{2}\left[ \Delta_2(z) \pm \sqrt{\Delta_2^2(z) - 4 } \right],
\end{equation*}
where
\begin{equation*}
\Delta_2 (z) = \frac{1}{z} \mathrm{Tr}[T_2(z)]= \frac{1}{\rho_a \rho_b}\left( z + a \bar{b} + \bar{a} b + \frac{1}{z} \right)
\end{equation*}
is called the discriminant. The polynomials are given then by
\begin{equation*}
\varphi_{2k}(z) = A_+(z) \lambda_+(z)^k + A_-(z) \lambda_-(z)^k, \qquad \varphi_{2k+1}(z) = B_+(z) \lambda_+(z)^k + B_-(z) \lambda_-(z)^k,
\end{equation*}
where $A_\pm(z)$, $B_\pm(z)$ do not depend on $k$, and can be expressed in terms of $\varphi_{j}(z)$, $j=0,1,2,3$.

It is convenient to define the function $\omega(z)$ by $\cos \omega(z) = \frac{\Delta_2(z)}{2}
$,
which gives $\lambda_\pm(z) = z e^{\pm i \omega(z)}$ and allows to write
\begin{equation*}
\frac{\lambda_+(z)^k - \lambda_-(z)^k}{\lambda_+(z) - \lambda_-(z)} =
z^{k-1}\frac{e^{i k\omega(z)} - e^{-i k\omega(z)}}{e^{i\omega(z)} - e^{-i\omega(z)}} = z^{k-1}U_{k-1}(\cos\omega(z)), 
\end{equation*}
where we used one of standard definitions~\eqref{eq:Cheb-sec-sin} of the Chebyshev polynomials of the second kind.
After simple calculation which involves also the Chebyshev recurrence, we obtain the expressions 
\begin{align} \label{eq:2p-U}
\varphi_{2k}(z) & = z^k U_k(\cos\omega(z)) - z^{k-1} U_{k-1}(\cos\omega(z)) \frac{1+\bar{b} + z\bar{a}(1+b)}{\rho_a \rho_b},\\
\varphi_{2k+1}(z)  &= z^k U_k(\cos\omega(z)) \frac{ z - \bar{a}}{\rho_a}- z^{k}U_{k-1}(\cos\omega(z)) \frac{1+\bar{b}}{\rho_b} .
\end{align}

\begin{Cor}
	In the special case of real Verblunsky coefficients
	\begin{equation*}
	\frac{\Delta_2(e^{i\theta})}{2} = \frac{\cos\theta + ab}{\rho_a \rho_b},
	\end{equation*}
and the orthonormal polynomials on the segment calculated by the Szeg\H{o} formula \eqref{eq:p-phi-Szego} from the polynomials given by~\eqref{eq:2p-U} agree, up to a normalization factor $\sqrt{2/(1-b)}$ for $k>0$, with the polynomials $P_k(x)$ as described by \eqref{eq:Pk} in Corollary~\ref{cor:P}.
\end{Cor}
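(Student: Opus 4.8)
This Corollary follows by a direct computation, so the plan is simply to carry it out carefully in three steps, using the explicit polynomials \eqref{eq:2p-U}, the Szeg\H{o} formula \eqref{eq:p-phi-Szego}, and the expression \eqref{eq:Pk} for $P_k$. First I would specialize the discriminant to real coefficients: setting $\bar a = a$, $\bar b = b$ gives $\Delta_2(z) = \frac{1}{\rho_a\rho_b}\left(z + 2ab + z^{-1}\right)$, and evaluating at $z = e^{i\theta}$, where $z + z^{-1} = 2\cos\theta$, yields $\Delta_2(e^{i\theta})/2 = (\cos\theta + ab)/(\rho_a\rho_b)$ --- the first displayed identity. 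The structural point I would emphasize for the remainder is that $\Delta_2$, hence $\cos\omega(z) = \Delta_2(z)/2$, depends on $z$ only through $z + z^{-1}$; consequently $\cos\omega(z)$ is invariant under $z \leftrightarrow z^{-1}$, and under the Szeg\H{o} substitution $x = \frac{1}{2}(z + z^{-1})$ it equals precisely $(x + ab)/(\rho_a\rho_b)$, which is the argument of the Chebyshev polynomials in \eqref{eq:Pk}.

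Second, I would rewrite \eqref{eq:2p-U} for real $a,b$ using $1 + \bar b + z\bar a(1+b) = (1+b)(1+az)$, so that, writing $C := \cos\omega(z)$,
\[
z^{-k}\varphi_{2k}(z) = U_k(C) - U_{k-1}(C)\,\frac{(1+b)(z^{-1}+a)}{\rho_a\rho_b}.
\]
Replacing $z$ by $z^{-1}$, which leaves $C$ unchanged, and multiplying by $z^k$ gives
\[
z^{k}\varphi_{2k}(z^{-1}) = U_k(C) - U_{k-1}(C)\,\frac{(1+b)(z+a)}{\rho_a\rho_b}.
\]
Adding the two: the $U_k(C)$ terms double, while the $U_{k-1}(C)$ terms combine through $(z^{-1}+a)+(z+a) = (z+z^{-1}) + 2a = 2x + 2a$, so
\[
z^{-k}\varphi_{2k}(z) + z^{k}\varphi_{2k}(z^{-1}) = 2\left( U_k(C) - \frac{(1+b)(x+a)}{\rho_a\rho_b}\,U_{k-1}(C)\right) = 2P_k(x),
\]
where the last equality is just \eqref{eq:Pk} with $C = (x+ab)/(\rho_a\rho_b)$.

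Third, I would insert this into the Szeg\H{o} formula \eqref{eq:p-phi-Szego}. For the two-periodic coefficients $\alpha_{2k-1} = b$ whenever $k\geq 1$, hence
\[
p_k(x) = \frac{1}{\sqrt{2(1-b)}}\left( z^{-k}\varphi_{2k}(z) + z^k\varphi_{2k}(z^{-1})\right) = \frac{2}{\sqrt{2(1-b)}}\,P_k(x) = \sqrt{2/(1-b)}\,P_k(x),
\]
which is the asserted relation for $k>0$. For $k=0$ the convention $\alpha_{-1} = -1$ makes the denominator $\sqrt{2\cdot 2} = 2$ and $\varphi_0 \equiv 1$ gives $p_0 \equiv 1 \equiv P_0$; this explains why the normalization factor appears only for $k>0$.

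I do not anticipate a genuine obstacle: the only points requiring care are (i) verifying that the $z \mapsto z^{-1}$ substitution built into \eqref{eq:p-phi-Szego} does not disturb the Chebyshev arguments, which holds precisely because $\Delta_2$ is a function of $z + z^{-1}$, and (ii) treating the $k=0$ case separately, where the convention $\alpha_{-1} = -1$ enters the normalization. The clean value of the resulting constant, $\sqrt{2/(1-b)}$, is itself a consistency check against the comparison of first-degree polynomials in Corollary~\ref{cor:P}. Throughout I would take \eqref{eq:2p-U}, with its normalization and initial data, as already established.
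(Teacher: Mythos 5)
Your computation is correct and is precisely the direct verification the paper intends (the Corollary is stated without proof there): specializing $\Delta_2$ to real coefficients, exploiting the invariance of $\cos\omega(z)$ under $z\leftrightarrow z^{-1}$ to sum the two copies of \eqref{eq:2p-U}, and matching the result $2P_k(x)$ against the prefactor $1/\sqrt{2(1-b)}$ of \eqref{eq:p-phi-Szego}. Your separate treatment of $k=0$ via the convention $\alpha_{-1}=-1$ correctly accounts for why the factor $\sqrt{2/(1-b)}$ appears only for $k>0$.
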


\subsubsection{Geometric construction of the spectrum}
\begin{figure}[h!]
	\begin{center}
		\includegraphics[width=12cm]{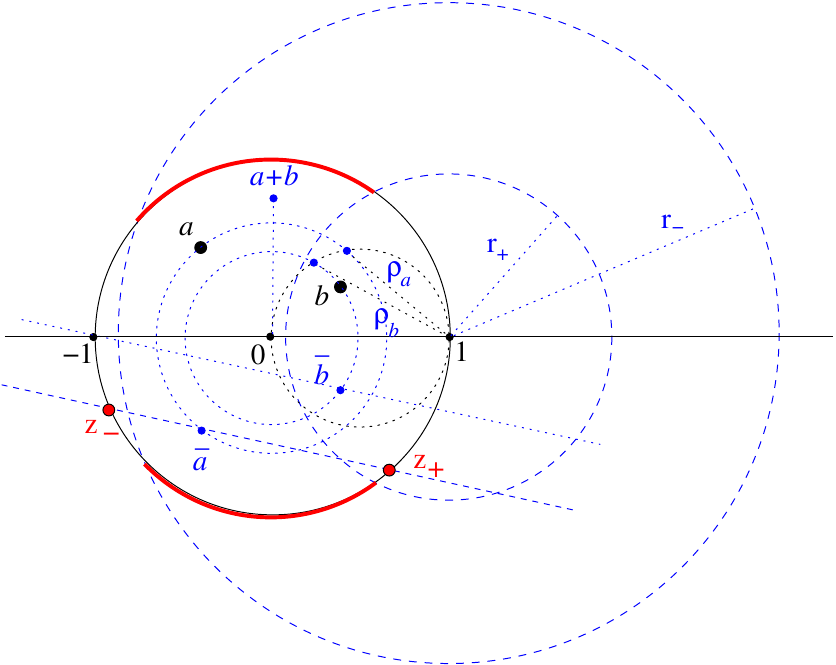}
	\end{center}
	\caption{Geometric construction of the spectrum of OPUC with two-periodic Verblunsky coefficients $a$ and $b$}
	\label{fig:spectrum}
\end{figure}
In the case $p=1$, i.e. for constant Verblunsky coefficients,
there exists~\cite{Geronimus} a simple geometric construction of  the edges of the continuous spectrum on the unit circle and of the discrete spectrum. Let us present analogous construction for $p=2$. 

\begin{Prop} The bands of the continuous spectrum for two-periodic Verblunsky coefficients are symmetrical with respect to the real axis. The edges $e^{\pm i \theta_+}$  (or $e^{\pm i \theta_-}$) are intersection points of the unit circle $\partial\DD$ with the circle centered in $1$ and with radius $r_+ = \sqrt{|a + b|^2 + (\rho_a - \rho_b)^2}$ (with radius $r_- = \sqrt{|a + b|^2 + (\rho_a + \rho_b)^2}$ respectively). The points $z_\pm$ of the discrete spectrum are intersections of the unit circle $\partial\DD$ and the line through the points $\bar{a}$ and $1 + \bar{a} + \bar{b}$.
\end{Prop}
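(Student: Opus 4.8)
The plan is to extract the geometric content directly from the formulas already derived for the spectral measure: the band edges $e^{\pm i\theta_\pm}$ are governed by the condition $\cos\theta_\pm = \pm\rho_a\rho_b - \mathrm{Re}(a\bar b)$ from \eqref{eq:2p-w}, and the discrete spectrum points $z_\pm$ are the roots of $A(z)=0$. So the whole statement reduces to two elementary lemmas in plane geometry: (i) a point $e^{i\theta}$ on $\partial\DD$ lies on the circle centered at $1$ of radius $r$ iff $|e^{i\theta}-1|^2 = r^2$, i.e. iff $2 - 2\cos\theta = r^2$, i.e. iff $\cos\theta = 1 - r^2/2$; (ii) the roots of the quadratic $A(z)=0$ (which, being the denominator of the Carathéodory function on the relevant sheet, has both roots on $\partial\DD$ in the discrete-spectrum case) that also lie on $\partial\DD$ can be characterized as the intersection of $\partial\DD$ with a certain affine line.

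First I would handle the continuous spectrum. Using (i), the circle centered at $1$ with radius $r_+ = \sqrt{|a+b|^2 + (\rho_a - \rho_b)^2}$ meets $\partial\DD$ exactly where $\cos\theta = 1 - \tfrac12\big(|a+b|^2 + (\rho_a-\rho_b)^2\big)$. The task is therefore purely algebraic: show
\begin{equation*}
1 - \tfrac12\big(|a+b|^2 + (\rho_a-\rho_b)^2\big) = \rho_a\rho_b - \mathrm{Re}(a\bar b),
\end{equation*}
and the analogous identity with $(\rho_a-\rho_b)^2$ replaced by $(\rho_a+\rho_b)^2$ and $+\rho_a\rho_b$ replaced by $-\rho_a\rho_b$. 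Expanding $|a+b|^2 = |a|^2 + |b|^2 + 2\mathrm{Re}(a\bar b)$ and $(\rho_a\mp\rho_b)^2 = \rho_a^2 + \rho_b^2 \mp 2\rho_a\rho_b = (1-|a|^2)+(1-|b|^2) \mp 2\rho_a\rho_b$, the left side becomes $1 - \tfrac12\big(2 + 2\mathrm{Re}(a\bar b) \mp 2\rho_a\rho_b\big) = \mp\,(-\rho_a\rho_b) - \mathrm{Re}(a\bar b) = \pm\rho_a\rho_b - \mathrm{Re}(a\bar b)$, matching $\cos\theta_\pm$ exactly. The symmetry of the bands under $\theta \mapsto -\theta$ is immediate from $\overline{w(\theta)} = w(-\theta)$ (equivalently, from $w(\theta)$ depending on $\theta$ only through $\cos\theta$ and $|\mathrm{Im}(\cdots)|$, which is even), so the edges come in conjugate pairs, consistent with being intersection points of two circles each symmetric about the real axis.

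Next, the discrete spectrum. The points $z_\pm = e^{i\gamma_\pm}$ are the roots of $A(z) = -z^2(b+1) - z(a + a\bar b - \bar a - \bar a b) + 1 + \bar b = 0$. I would show that a point $z$ on $\partial\DD$ satisfies $A(z)=0$ iff it lies on the line through $\bar a$ and $1 + \bar a + \bar b$. The natural route: divide $A(z)=0$ by $(1+b)$ (assuming $b\neq -1$; the degenerate case is handled separately) and use $\bar z = 1/z$ for $z\in\partial\DD$ to rewrite the relation as a real-linear condition on $z$. Concretely, on $\partial\DD$ one has $z \cdot A(z)^{*}$-type manipulations: note $\overline{A(z)} = -\bar z^2(\bar b + 1) - \bar z(\bar a + \bar a b - a - a\bar b) + 1 + b$, and multiplying by $z^2$ and using $|z|=1$ relates $\overline{A(z)}$ to $A(z)$ up to a unimodular factor, so $A(z)=0$ on $\partial\DD$ is equivalent to a single real-linear equation $\mathrm{Im}\big(\overline{(1+b)}(z - \bar a)\big) \cdot(\text{something}) = 0$ — the cleanest formulation being that $\dfrac{z - \bar a}{1 + \bar a + \bar b - \bar a} = \dfrac{z-\bar a}{1+\bar b}$ is real, i.e. $z$, $\bar a$, and $1+\bar a+\bar b$ are collinear. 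I would verify the direction of this vector matches $1+\bar b$ (the displacement from $\bar a$ to $1+\bar a + \bar b$) and confirm via Vieta on $A$: the product of roots is $\dfrac{1+\bar b}{-(1+b)}$, which has modulus $1$ and argument $-2\arg(1+b)$, consistent with $z_+ z_- = e^{i(\gamma_+ + \gamma_-)}$ and with the line construction.

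The main obstacle I anticipate is the discrete-spectrum computation: turning "$z$ is a root of the complex quadratic $A(z)$ that happens to lie on $\partial\DD$" into the clean collinearity statement requires carefully exploiting $|z|=1$ to halve the degrees of freedom, and one must be attentive to the degenerate cases ($b = -1$, or $A$ having a double root, or the "only if $\mathrm{Re}(a+bz_\pm)\gtrless 0$" selection condition which decides which of the two circle–line intersection points is actually a mass point). By contrast the continuous-spectrum part is a one-line algebraic identity once the circle-intersection lemma (i) is in place. Throughout I would treat the already-established formulas for $w(\theta)$, $z_\pm$, and $\mu(\{z_\pm\})$ as given, so that the proposition is genuinely just a geometric repackaging rather than a new spectral computation.
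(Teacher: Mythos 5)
Your proposal is correct and follows essentially the same route as the paper: the band edges are identified by rewriting $\cos\theta_\pm=\pm\rho_a\rho_b-\mathrm{Re}(a\bar b)$ as $|1-z|^2=|a+b|^2+(\rho_a\mp\rho_b)^2$, and the discrete spectrum is handled by using $\bar z=1/z$ on $\partial\DD$ to reduce $A(z)=0$ to the real-linear condition $\mathrm{Im}\bigl[(1+b)z+a(1+\bar b)\bigr]=0$, which is exactly the line through $\bar a$ with direction $1+\bar b$. The degenerate cases you worry about do not arise, since $|a|,|b|<1$ guarantees $b\neq-1$ and that the line meets $\partial\DD$ in two distinct points.
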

\begin{proof}
	To show the first part rewrite equation defining the edges of the continuous spectrum as
	\begin{equation*}
	|1-z|^2 = |a + b|^2 + (\rho_a \pm \rho_b)^2.
	\end{equation*}
	The second part is a consequence of the equation satisfied by the points of the discrete spectrum
	\begin{equation*}
	\mathrm{Im}[(1+b)z + a(1+\bar{a})] = 0,	\end{equation*}
	which defines the line.
\end{proof}

\begin{Cor}
	The bands of the continuous spectrum and the discrete spectrum points can be constructed from $a$ and $b$ using ruler and compass. For example, by the Pythagoras theorem, $\rho_a$ is the length of the segment between $1$ and the intersection point of the circle $C(0,|a|)$ with the circle $C(1/2,1/2)$. Similarly, $|a + b|$, $|\rho_a \mp \rho_b |$ and $r_\pm$ are sides of a rectangular triangle. The procedure  reads as follows:
		\begin{enumerate}
		\item from $|a|$ and $|b|$ find $\rho_a$ and $\rho_b$,
		\item from $|a + b|$, $\rho_a$,  $\rho_b$ find $r_+$ and $r_-$ and construct edges of the continuous spectrum,
		\item draw the line through $\bar{a}$ in direction of $1+\bar{b}$ to find eventual points $z_+$ and $z_-$ of the discrete spectrum.
	\end{enumerate}
\end{Cor}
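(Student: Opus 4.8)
The plan is to derive both claims from the explicit descriptions of the band edges and of the discrete spectrum obtained just above in this subsubsection, rewriting each defining relation---with the help of the constraint $|z|=1$---as the equation of the asserted geometric locus: a circle centered at $1$ in the first case, a straight line in the second. No new analytic input is needed; everything reduces to elementary algebra on $\partial\DD$.

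For the continuous spectrum I would first note that the support condition $\rho_a^2\rho_b^2-[\mathrm{Re}(e^{i\theta}+a\bar b)]^2\ge 0$ is equivalent to $|\cos\theta+\mathrm{Re}(a\bar b)|\le\rho_a\rho_b$, which depends on $\theta$ only through $\cos\theta$; hence the support is invariant under $\theta\mapsto-\theta$, so the four edges occur in the conjugate pairs $\{e^{i\theta_+},e^{-i\theta_+}\}$ and $\{e^{i\theta_-},e^{-i\theta_-}\}$, which is the claimed symmetry. Next, for $z=e^{i\theta}$ one has $|1-z|^2=2-2\cos\theta$, while expanding $r_\pm^2=|a+b|^2+(\rho_a\mp\rho_b)^2$ with $\rho_a^2=1-|a|^2$, $\rho_b^2=1-|b|^2$ and $|a+b|^2=|a|^2+|b|^2+2\mathrm{Re}(a\bar b)$ gives $r_\pm^2=2\bigl(1+\mathrm{Re}(a\bar b)\mp\rho_a\rho_b\bigr)$. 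Substituting the band-edge relation $\cos\theta_\pm=\pm\rho_a\rho_b-\mathrm{Re}(a\bar b)$ then shows $|1-e^{\pm i\theta_\pm}|^2=r_\pm^2$, and conversely $\{z:|z|=1,\ |1-z|=r_\pm\}=\{z:|z|=1,\ \mathrm{Re}\,z=1-r_\pm^2/2=\cos\theta_\pm\}$ is exactly that conjugate pair of edges. So $\partial\DD\cap C(1,r_\pm)$ is precisely the corresponding band boundary.

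For the discrete spectrum, the candidate points are the roots of $A(z)=0$ lying on $\partial\DD$. On $\partial\DD$ one may replace $\bar z$ by $1/z$; dividing $A(z)=0$ by $z$ and regrouping, I expect to reach $z(1+b)-\bar z(1+\bar b)=\bar a(1+b)-a(1+\bar b)$, i.e. $\mathrm{Im}\bigl[(z-\bar a)(1+b)\bigr]=0$. Since $1+b\neq 0$ (as $|b|<1$), this says that $z-\bar a$ is a real multiple of $1/(1+b)$, hence of $\overline{1+b}=1+\bar b$; that is, $z$ lies on the line through $\bar a$ with direction $1+\bar b$, equivalently the line through $\bar a$ and $\bar a+(1+\bar b)=1+\bar a+\bar b$. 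Thus every discrete point lies on $\partial\DD\cap L$ with $L$ this line, and the side conditions $\mathrm{Re}(a+bz_\pm)\gtrless 0$ recalled earlier merely decide which of the (at most two) intersection points actually belong to the spectrum, so they do not affect the construction.

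The computations are elementary, so the part that needs care is bookkeeping rather than difficulty: keeping the $\pm$ labels of $\theta_\pm$, $r_\pm$ and $\mp\rho_a\rho_b$ consistently matched; verifying that $\partial\DD\cap C(1,r_\pm)$ and $\partial\DD\cap L$ consist exactly of the claimed conjugate pair of points with no spurious extra solutions (for the circles this uses $r_\pm\le 2$, which follows from $\cos\theta_\pm\in[-1,1]$); and checking, using that the product of the roots of $A$ has modulus $|1+\bar b|/|1+b|=1$, that restricting to roots of $A$ on $\partial\DD$ loses no discrete eigenvalue. The accompanying Corollary is then immediate, since $\rho_a,\rho_b$ (obtained from $|a|,|b|$ via $\rho_a^2+|a|^2=1$, read off a semicircle of diameter $1$), then $r_\pm$ (the hypotenuse of a right triangle with legs $|a+b|$ and $|\rho_a\mp\rho_b|$), and finally the intersections of $\partial\DD$ with $C(1,r_\pm)$ and with the line through $\bar a$ and $1+\bar a+\bar b$ are all ruler-and-compass constructions.
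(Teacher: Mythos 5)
Your proposal is correct and follows essentially the same route as the paper: the Corollary is read off from the preceding Proposition, whose proof consists precisely of rewriting the band-edge condition as $|1-z|^2=|a+b|^2+(\rho_a\mp\rho_b)^2$ and the root condition $A(z)=0$ on $\partial\DD$ as the line equation $\mathrm{Im}\bigl[(1+b)(z-\bar a)\bigr]=0$, after which the ruler-and-compass steps are the standard Thales/Pythagoras constructions. Your version merely fills in the algebra (e.g.\ $r_\pm^2=2(1+\mathrm{Re}(a\bar b)\mp\rho_a\rho_b)=2-2\cos\theta_\pm$) and the existence check $r_\pm\le 2$, which the paper leaves implicit.
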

\begin{Rem}
	The edges $e^{\pm i \theta_+}$ (or $e^{\pm i \theta_-}$ ) of the continuous spectrum can be constructed equivalently as intersection points of $\partial\DD$ and the line through $-1$ and tangent to the circle with the center at $0$ and halved radius $r_+/2$ ( respectively $r_-/2$).
\end{Rem}
\begin{Rem}
	When $a=b$ then we get the Geronimus construction, see Figure~\ref{fig:spect-G}. The edges of the continuous spectrum are intersection of $\partial \DD$ with the circle centered at $1$ and with the radius $r_+ = 2|a|$; equivalently they are second intersection points of as intersection points of $\partial\DD$ and the lines through $-1$ and tangent to the circle with the center at $0$ and radius $|a|$.  The two branches of continuous spectrum close then (we have $r_- = 2$) at $z_- = -1$. The eventual point of the discrete spectrum
	\begin{equation*}
	z_+ = \frac{1+\bar{a}}{1+a}
	\end{equation*} 
	is the second intersection point of $\partial\DD$ and the line through $-1$ and $\bar{a}$.
\end{Rem}
\begin{figure}[h!]
	\begin{center}
		\includegraphics[width=8cm]{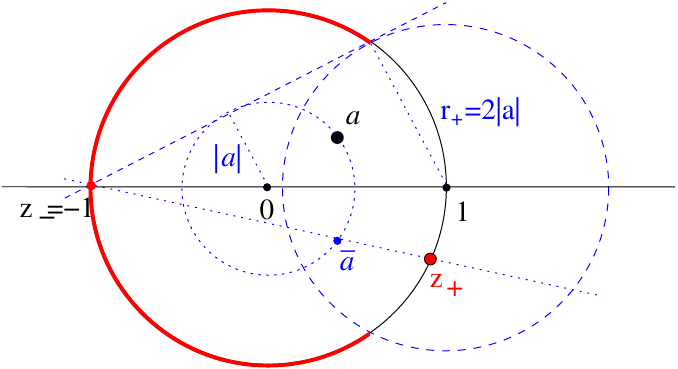}
	\end{center}
	\caption{Geometric construction of the spectrum of OPUC with constant Verblunsky coefficients $a$}
	\label{fig:spect-G}
\end{figure}
\section{Conclusions}
We gave a quantization scheme for discrete-time random walks on the half-line. It results in the well-known Szeg\H{o} map between polynomials orthogonal on the $[-1,1]$ segment and the polynomials orthogonal on the unit circle. The key-point of the scheme is the relation between the spectral measures of both polynomial systems. 

We didn't touch in this work the relation of the theory of orthogonal polynomials with integrable systems. Orthogonal polynomials on the unit circle are closely related to the Ablowitz--Ladik  hierarchy of integrable equations~\cite{AblowitzLadik}, see also another interpretation~\cite{OLPUC-Manas}, in the same way as the orthogonal polynomials on the real line are related with the Toda chain equations~\cite{Moser,Toda-TL}. It would be therefore interesting to apply the techniques of the theory of integrable systems to study quantum walks. In particular, the geometric meaning of the Ablowitz--Ladik hierarchy in terms of discrete curves~\cite{DS-AL} could provide an insight into possible realization of quantum information processing with the help of discrete curves such as RNA or DNA chains. Our spectral description of quantum walks in terms of orthogonal polynomials, together with previous related works~\cite{CMGV-1,CMGV-2}, allows to exploit various classical results~\cite{Chihara,Geronimus,Ismail,Khrushchev,NikiforovSuslovUvarov,Simon-OPUC,Szego} and new approaches to the theory of orthogonal polynomials, see for example~\cite{DTMVZ,VilenkinKlimyk,WvA}, to study related problems of quantum computation.

\appendix

\section{Orthogonal polynomials with periodic coefficients}
In this additional section we write down explicit formulas, in terms of Chebyshev polynomials of the second kind, for orthogonal polynomials with periodic coefficients. They generalize known expressions for OPUC with constant \cite{Golinskii-J} and two-periodic (see Section~\ref{sec:OPUC-2p}) Verblunsky coefficients. We provide also the corresponding formulas for OPRL which generalize known results given in~\cite{BeckermannGilewiczLeopold}. The direct tools of our proof can be extended to the case of eventually periodic coefficients.

We do not enter into other aspects of the theory, as the spectral representation of the OPUC with periodic Verblunsky coefficients is presented in detail in~\cite{Simon-OPUC}. The measure for OPRL whose recurrence coefficients are periodic is given in~\cite{GeronimoVanAssche}.

\subsection{OPUC with periodic Verblunsky coefficients}
Consider $p$-periodic Verblunsky coefficients $\alpha_{n+p} = \alpha_{n}$ and define the corresponding transfer matrix
\begin{equation*}
T_p(z) = A_{p-1}(z)  \dots  A_1(z) A_0(z), 
\end{equation*}
where we recall that
\begin{equation*} 
A_k(z) = \frac{1}{\rho_k}\left( \begin{array}{cc} z & -\bar\alpha_k \\ - \alpha_k z & 1 \end{array} \right), \qquad \rho_k = \sqrt{1 - |\alpha_k|^2}. 
\end{equation*}
Because $\det A_k(z) = z$ we infer that $\det T_p(z) = z^p$.
The discriminant 
\begin{equation*}
\Delta_p(z) = \frac{1}{z^{p/2}} \, \mathrm{Tr} [ T_p(z)].
\end{equation*} 
plays pivotal role (see for example Chapter~11 of monograph~\cite{Simon-OPUC}) in the theory of ``periodic" orthogonal polynomials. In particular, it is real-valued for $z\in\partial\DD$, and the absolutely continuous part of the spectral measure does not vanish if and only if $-2 < \Delta_{p}(e^{i\theta}) < 2$.

The following formula can be checked by simple calculation, but actually it follows form the Cayley--Hamilton theorem.
\begin{Lem}
	The discriminant and the transfer matrix for $p$-periodic Verblunsky coefficients satisfy equation
	\begin{equation} \label{eq:TpTpD}
	z^{-p/2} T_p(z) + z^{p/2} T_p^{-1}(z) = \Delta_p(z) \II_2. 
	\end{equation}	 
\end{Lem}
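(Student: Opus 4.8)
The plan is to recognize that \eqref{eq:TpTpD} is, up to the bookkeeping factor $z^{p/2}$, nothing but the Cayley--Hamilton identity for the $2\times 2$ matrix $T_p(z)$. Two ingredients are already at hand: from $\det A_k(z)=z$ we get $\det T_p(z)=z^p$, and by the very definition of the discriminant $\mathrm{Tr}[T_p(z)] = z^{p/2}\Delta_p(z)$.

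First I would write the characteristic polynomial of $M:=T_p(z)$, namely $\lambda^2 - \mathrm{Tr}(M)\lambda + \det(M)$, and invoke Cayley--Hamilton: $M^2 - \mathrm{Tr}(M)\,M + \det(M)\,\II_2 = 0$. Since $\det M = z^p\neq 0$ for $z\neq 0$, the matrix $M$ is invertible, so multiplying by $M^{-1}$ gives
\begin{equation*}
M - \mathrm{Tr}(M)\,\II_2 + \det(M)\,M^{-1} = 0,
\end{equation*}
that is, $T_p(z) + z^p\, T_p(z)^{-1} = \mathrm{Tr}[T_p(z)]\,\II_2$. Dividing through by $z^{p/2}$ and substituting $\mathrm{Tr}[T_p(z)] = z^{p/2}\Delta_p(z)$ yields exactly \eqref{eq:TpTpD}.

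If one prefers to avoid the multivalued factor $z^{p/2}$ altogether, I would instead argue via the adjugate: for any $2\times 2$ matrix $M$ one checks by inspection that $M + \mathrm{adj}(M) = \mathrm{Tr}(M)\,\II_2$, and $M^{-1} = \det(M)^{-1}\mathrm{adj}(M)$ gives $z^{p/2}T_p(z)^{-1} = z^{-p/2}\mathrm{adj}(T_p(z))$; hence
\begin{equation*}
z^{-p/2}T_p(z) + z^{p/2}T_p(z)^{-1} = z^{-p/2}\bigl(T_p(z) + \mathrm{adj}(T_p(z))\bigr) = z^{-p/2}\mathrm{Tr}[T_p(z)]\,\II_2 = \Delta_p(z)\,\II_2.
\end{equation*}
There is essentially no obstacle here; the only point deserving a word of care is that $T_p(z)$ has polynomial entries whereas $z^{p/2}$ is a branch-dependent prefactor when $p$ is odd, so the identity should be read on $z\in\partial\DD$ (with a fixed branch of $z^{1/2}$), or equivalently as the genuine polynomial identity $T_p(z)+\mathrm{adj}(T_p(z)) = \mathrm{Tr}[T_p(z)]\,\II_2$ after clearing the common factor $z^{-p/2}$.
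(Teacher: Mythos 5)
Your proof is correct and is exactly the route the paper indicates: the paper states that the identity ``follows from the Cayley--Hamilton theorem'' (using $\det T_p(z)=z^p$ and $\mathrm{Tr}[T_p(z)]=z^{p/2}\Delta_p(z)$), which is precisely what you spell out. Your extra remark about the adjugate form and the branch of $z^{p/2}$ for odd $p$ is a sensible clarification but does not change the argument.
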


Let us give the explicit form of the corresponding orthogonal polynomials, in terms of the Chebyshev polynomials of the second kind 
\begin{equation} \label{eq:Ch-rec}
U_{k+1}(y) = 2y U_k(y) - U_{k-1}(y), \qquad U_{-1}(y) \equiv 0, \qquad U_0(y) \equiv 1 .
\end{equation}
\begin{Prop} \label{prop:p-OPUC}
	The orthonormal polynomials corresponding to $p$-periodic Verblunsky coefficients are given by
	\begin{equation*} 
	\begin{pmatrix} \varphi_{kp+j}(z) \\ \varphi_{kp+j}^*(z) \end{pmatrix} = z^{kp/2} \left[  U_k \left( \frac{\Delta_p(z)}{2}\right) A_{j-1}(z) \dots A_0(z) - z^{p/2}  U_{k-1} \left( \frac{\Delta_p(z)}{2} \right) A^{-1}_j(z) \dots A^{-1}_{p-1}(z) \right]
	\begin{pmatrix}  \varphi_{0}(z) \\ \varphi_{0}^*(z) \end{pmatrix},
	\end{equation*}
	where $ j=0, 1 , \dots , p-1$, and $k = 0, 1, 2, \dots $.
\end{Prop}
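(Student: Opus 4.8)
The plan is to iterate the Szeg\H{o} recurrence, reduce the whole computation to the $k$-th power of the transfer matrix $T_p(z)$, and then express that power through Chebyshev polynomials using the Cayley--Hamilton identity \eqref{eq:TpTpD}. First I would note that $p$-periodicity of the Verblunsky coefficients, $\alpha_{n+p}=\alpha_n$, gives $A_{n+p}(z)=A_n(z)$ for all $n$, so iterating \eqref{eq:phi-rec} $kp+j$ times from $\begin{pmatrix}\varphi_0\\\varphi_0^*\end{pmatrix}$ and grouping the factors into blocks of one period yields
\begin{equation*}
\begin{pmatrix} \varphi_{kp+j}(z) \\ \varphi_{kp+j}^*(z) \end{pmatrix}
= A_{j-1}(z)\cdots A_0(z)\, T_p(z)^k \begin{pmatrix} \varphi_{0}(z) \\ \varphi_{0}^*(z) \end{pmatrix},
\end{equation*}
with the convention that the product $A_{j-1}(z)\cdots A_0(z)$ is the identity when $j=0$. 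Thus the proposition is equivalent to an explicit formula for $T_p(z)^k$ multiplied on the left by $A_{j-1}(z)\cdots A_0(z)$.

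Second, I would normalize: set $M(z)=z^{-p/2}T_p(z)$. Since $\det A_k(z)=z$ we have $\det T_p(z)=z^p$, hence $\det M(z)=1$, while $\mathrm{Tr}\,M(z)=\Delta_p(z)$ by the definition of the discriminant. Equation~\eqref{eq:TpTpD} is precisely the Cayley--Hamilton relation for $M$, namely $M+M^{-1}=\Delta_p\,\II_2$, equivalently $M^2=\Delta_p M-\II_2$. A straightforward induction on $k$ using the Chebyshev recurrence \eqref{eq:Ch-rec} then gives
\begin{equation*}
M(z)^k = U_k\!\Big(\tfrac{\Delta_p(z)}{2}\Big)\,\II_2 - U_{k-1}\!\Big(\tfrac{\Delta_p(z)}{2}\Big)\,M(z)^{-1},
\end{equation*}
the cases $k=0,1$ being immediate from $U_{-1}=0$, $U_0=1$ (one may equally write $M^k=U_{k-1}M-U_{k-2}\II_2$, which is the same identity after substituting $M^{-1}=\Delta_p\II_2-M$). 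Multiplying by $z^{kp/2}$ and using $M^{-1}=z^{p/2}T_p^{-1}$ gives $T_p(z)^k=z^{kp/2}U_k\,\II_2-z^{(k+1)p/2}U_{k-1}\,T_p(z)^{-1}$.

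Third, I would substitute this back and simplify the $T_p^{-1}$ term by a telescoping cancellation: since $T_p(z)=A_{p-1}(z)\cdots A_0(z)$, so that $T_p(z)^{-1}=A_0^{-1}(z)\cdots A_{p-1}^{-1}(z)$, the factors $A_0,\dots,A_{j-1}$ in $A_{j-1}(z)\cdots A_0(z)\,T_p(z)^{-1}$ cancel against $A_0^{-1},\dots,A_{j-1}^{-1}$, leaving exactly $A_j^{-1}(z)\cdots A_{p-1}^{-1}(z)$. Collecting the two terms produces
\begin{equation*}
A_{j-1}(z)\cdots A_0(z)\,T_p(z)^k
= z^{kp/2}\Big[\,U_k\big(\tfrac{\Delta_p}{2}\big)A_{j-1}(z)\cdots A_0(z) - z^{p/2}U_{k-1}\big(\tfrac{\Delta_p}{2}\big)A_j^{-1}(z)\cdots A_{p-1}^{-1}(z)\,\Big],
\end{equation*}
which is the stated identity. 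I would also point out that the same argument, with $T_p$ replaced by the one-period product beginning at index $j$, or preceded by a finite non-periodic block, covers the eventually periodic case.

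The work here is essentially bookkeeping rather than a deep obstacle. The one point that needs care is the square-root normalization: the factor $z^{p/2}$ must be carried consistently through the formula for $M^k$, and one should remark that in the final expression it enters only through combinations that recombine into genuine (Laurent) polynomials, since $U_k$ has the parity of $k$ and $\Delta_p(z)$ carries the factor $z^{-p/2}$. The second point requiring attention is the handling of the degenerate ranges $j=0$ (empty left product equal to $\II_2$) and $k=0$ ($U_{-1}=0$), so that the single displayed formula genuinely reproduces every $\varphi_n$ with $n=kp+j\ge 0$.
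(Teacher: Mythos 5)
Your proposal is correct and follows essentially the same route as the paper: both reduce the general case to the $j=0$ case via the telescoping cancellation $A_{j-1}\cdots A_0\,T_p^{-1}=A_j^{-1}\cdots A_{p-1}^{-1}$, and both establish the $j=0$ formula by induction on $k$ using the Cayley--Hamilton identity \eqref{eq:TpTpD} together with the Chebyshev recurrence \eqref{eq:Ch-rec}. The only cosmetic difference is that you isolate the identity $M^k=U_k\II_2-U_{k-1}M^{-1}$ for $M=z^{-p/2}T_p$ as a separate step, whereas the paper runs the same induction directly on the column vector of polynomials.
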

\begin{proof}
	We will concentrate on the case $j=0$, which reads
	\begin{equation*} 
	\begin{pmatrix} \varphi_{kp}(z) \\ \varphi_{kp}^*(z) \end{pmatrix} = z^{kp/2} \left[  U_k \left( \frac{\Delta_p(z)}{2}\right) \II_2 - z^{p/2}  U_{k-1} \left( \frac{\Delta_p(z)}{2} \right) T_p^{-1}(z) \right]
	\begin{pmatrix}  \varphi_{0}(z) \\ \varphi_{0}^*(z) \end{pmatrix},
	\end{equation*}
	because the final formula for other $j$ follows from
	\begin{equation*}
	\begin{pmatrix} \varphi_{kp+j}(z) \\ \varphi_{kp+j}^*(z) \end{pmatrix} = 
	A_{j-1}(z) \dots A_0(z)\begin{pmatrix} \varphi_{kp}(z) \\ \varphi_{kp}^*(z) \end{pmatrix}.
	\end{equation*}
	
	Let us proceed by mathematical induction with respect to $k$ (recall that we keep $j=0$). The starting point $k=0$ holds due to the form of the initial Chebyshev polynomials. The induction step follows from formula~\eqref{eq:TpTpD} and the recurrence \eqref{eq:Ch-rec}
	\begin{multline*} \qquad \qquad
	\begin{pmatrix} \varphi_{kp+p}(z) \\ \varphi_{kp+p}^*(z) \end{pmatrix} = z^{kp/2} \left[  U_k \left( \frac{\Delta_p(z)}{2}\right) T_p(z) - z^{p/2}  U_{k-1} \left( \frac{\Delta_p(z)}{2} \right)  \II_2 \right]
	\begin{pmatrix}  \varphi_{0}(z) \\ \varphi_{0}^*(z) \end{pmatrix} \stackrel{\eqref{eq:TpTpD}}{=} \\ 
	z^{(k+1)p/2} \left[ \left(  \Delta_p(z) U_k \left( \frac{\Delta_p(z)}{2}\right) - U_{k-1} \left( \frac{\Delta_p(z)}{2}\right) \right)\II_2  - z^{p/2}  U_{k} \left( \frac{\Delta_p(z)}{2} \right) T_p^{-1}(z) \right]
	\begin{pmatrix}  \varphi_{0}(z) \\ \varphi_{0}^*(z) \end{pmatrix} \stackrel{\eqref{eq:Ch-rec}}{=} \\
	z^{(k+1)p/2} \left[  U_{k+1} \left( \frac{\Delta_p(z)}{2}\right) \II_2  - z^{p/2}  U_{k} \left( \frac{\Delta_p(z)}{2} \right) T_p^{-1}(z) \right]
	\begin{pmatrix}  \varphi_{0}(z) \\ \varphi_{0}^*(z) \end{pmatrix} . \qquad \qquad \qquad
	\end{multline*}
\end{proof}
\begin{Rem}
	Notice that we didn't use the particular form of the initial polynomials $\varphi_{0}(z)$ and $\varphi_{0}^*(z)$. Our technique applies then, with appropriate shift in the final formula, also to orthogonal polynomials with eventually periodic (this nomenclature comes from the theory of continued fractions) Verblunsky coefficients, i.e. $\alpha_{n+p} = \alpha_n$ for $n\geq n_0$. In particular, as the initial polynomials we should take $\varphi_{n_0}(z)$ and $\varphi_{n_0}^*(z)$.
\end{Rem}

\subsection{OPRL with periodic recurrence relations} \label{sec:per-OPRL}
The technique used to express OPUC with periodic coefficients can be easily transferred to study the analogous problem for OPRL. It turns out that our final formula can be considered as a generalization and explanation of the main result of~\cite{BeckermannGilewiczLeopold}. We remark that our study does not concern Jacobi matrices with periodic coefficients~\cite{Moerbeke} whose theory involves more advanced algebro-geometric tools~\cite{BBEIM}. 

Following~\cite{DamanikKillipSimon} let us write the recurrence relation~\eqref{eq:3trr} in the matrix form
\begin{equation*}
\begin{pmatrix} Q_{k+1}(x) \\ Q_k(x) \end{pmatrix} =
\frac{1}{p_k} \begin{pmatrix} x - r_k & -q_k  \\ p_k & 0 \end{pmatrix}
\begin{pmatrix} Q_k(x) \\ Q_{k-1}(x) \end{pmatrix} = B_k(x) \begin{pmatrix} Q_k(x) \\ Q_{k-1}(x) \end{pmatrix}.
\end{equation*}
In the periodic case 
\begin{equation*}
p_{k+L} = p_k, \qquad q_{k+L} = q_k, \qquad r_{k+L} = r_k, \qquad k \geq 0.
\end{equation*}
define the corresponding transfer matrix
\begin{equation*}
T_L(x) = B_{L-1}(x) \dots B_1(x) B_0(x),
\end{equation*}
with
\begin{equation*}
\det T_L(x) = \frac{q_{(L)}}{p_{(L)}}, \qquad p_{(L)} =  p_0 p_1 \dots p_{L-1}, \qquad q_{(L)} =  q_0 q_1 \dots q_{L-1}.
\end{equation*}
The natural definition of the discriminant 
\begin{equation*}
\Delta_L(x) = p_{(L)} \mathrm{Tr}[T_L(x)], 
\end{equation*}
leads, by the Cayley--Hamilton theorem, to the identity
\begin{equation} \label{eq:TpTpD-R}
p_{(L)} T_L(x) + q_{(L)} T_L^{-1}(x) = \Delta_L(x) \II_2.
\end{equation}	 
The following result is direct analogue of Proposition~\ref{prop:p-OPUC} and its proof goes along the same lines as above, so we skip it.
\begin{Prop}
	The orthogonal polynomials corresponding to $L$-periodic three-term recurrence coefficients are given by
	\begin{equation} \label{eq:per-OPRL}
	\begin{split}
	\begin{pmatrix} Q_{kL+j}(x) \\ Q_{kL+j-1}(x) \end{pmatrix} & = \left(\frac{q_{(L)}}{p_{(L)}} \right)^{k/2} \left[  U_k \left( \frac{\Delta_L(x)}{2\sqrt{p_{(L)}q_{(L)}}} \right)  B_{j-1}(x) \dots B_0(x) \right. - \\ &
	\left.
	\left(\frac{q_{(L)}}{p_{(L)}} \right)^{1/2}  U_{k-1} \left( \frac{\Delta_L(x)}{2\sqrt{p_{(L)}q_{(L)}}} \right) B^{-1}_j(x) \dots B^{-1}_{L-1}(x) \right]
	\begin{pmatrix}  Q_0(x) \\ Q_{-1}(x) \end{pmatrix}, 
	\end{split}
	\end{equation}	
	where $ j=0, 1 , \dots , L-1$, and $k = 0, 1, 2, \dots $.
\end{Prop}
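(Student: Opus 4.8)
The plan is to copy, essentially line for line, the proof of Proposition~\ref{prop:p-OPUC}, under the dictionary $A_k(z)\rightsquigarrow B_k(x)$, $T_p(z)\rightsquigarrow T_L(x)$, $z^{p/2}\rightsquigarrow \eta:=(q_{(L)}/p_{(L)})^{1/2}$, $\Delta_p(z)/2\rightsquigarrow y:=\Delta_L(x)/(2\sqrt{p_{(L)}q_{(L)}})$, and \eqref{eq:TpTpD}$\rightsquigarrow$\eqref{eq:TpTpD-R}. Here $y$ is an honest polynomial in $x$ and $\eta$ a positive constant, so every manipulation below is an identity of matrices whose entries are rational in $x$ (polynomial wherever it is used), and no range restriction on the Chebyshev argument is needed.

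First I would reduce to $j=0$. Iterating the matrix form of the three-term recurrence \eqref{eq:3trr} and using $L$-periodicity of $p_k,q_k,r_k$ gives, for $0\le j\le L-1$,
\[
\begin{pmatrix} Q_{kL+j}(x) \\ Q_{kL+j-1}(x) \end{pmatrix}
= B_{j-1}(x)\cdots B_0(x)\, T_L(x)^k \begin{pmatrix} Q_0(x) \\ Q_{-1}(x) \end{pmatrix},
\]
because the product of the $kL+j$ one-step transfer matrices groups as $B_{j-1}\cdots B_0$ followed by $k$ copies of $T_L(x)=B_{L-1}(x)\cdots B_0(x)$. Since $T_L(x)^{-1}=B_0(x)^{-1}\cdots B_{L-1}(x)^{-1}$, one has $B_{j-1}\cdots B_0\cdot T_L^{-1}=B_j^{-1}\cdots B_{L-1}^{-1}$, so the formula \eqref{eq:per-OPRL} follows immediately once I establish the single matrix identity
\[
T_L(x)^k = \eta^{\,k}\left[\, U_k(y)\,\II_2 - \eta\, U_{k-1}(y)\, T_L(x)^{-1}\,\right].
\]

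That identity I would prove by induction on $k$. The base case $k=0$ is immediate from $U_0\equiv 1$ and $U_{-1}\equiv 0$. For the step, note that $\det B_k(x)=q_k/p_k$ gives $\det T_L(x)=q_{(L)}/p_{(L)}=\eta^2$, and that \eqref{eq:TpTpD-R}, written as $p_{(L)}T_L(x)+q_{(L)}T_L(x)^{-1}=\Delta_L(x)\II_2$, rearranges (divide by $p_{(L)}$, pull out $\eta$, and use $\eta\cdot 2y=\Delta_L(x)/p_{(L)}$) to
\[
T_L(x) = \eta\left[\, 2y\,\II_2 - \eta\, T_L(x)^{-1}\,\right].
\]
Multiplying the inductive hypothesis on the left by $T_L(x)$, substituting this expression for $T_L(x)$, and applying the Chebyshev recurrence \eqref{eq:Ch-rec} in the form $2y\,U_k(y)-U_{k-1}(y)=U_{k+1}(y)$ collects the result into $\eta^{\,k+1}\bigl[\,U_{k+1}(y)\,\II_2-\eta\,U_k(y)\,T_L(x)^{-1}\,\bigr]$, the claim for $k+1$. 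This is word for word the displayed three-line computation in the proof of Proposition~\ref{prop:p-OPUC}.

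I do not anticipate a real obstacle; the only points needing care are the bookkeeping of the scalar prefactors $\eta$ and the normalization of the Chebyshev variable — that it is $\Delta_L(x)/(2\sqrt{p_{(L)}q_{(L)}})$ rather than $\Delta_L(x)/2$, which is forced by $\det T_L(x)=\eta^2$ — together with the routine check that \eqref{eq:TpTpD-R} really is the Cayley--Hamilton relation for $T_L(x)$. As in the corollary following Proposition~\ref{prop:p-OPUC}, the particular initial data $Q_{-1}\equiv 0$, $Q_0\equiv 1$ never enter the argument, so the same proof with an index shift also covers eventually periodic recurrence coefficients.
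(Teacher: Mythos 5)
Your proposal is correct and is precisely the argument the paper intends: the authors explicitly skip the proof, stating it ``goes along the same lines'' as Proposition~\ref{prop:p-OPUC}, and your dictionary $A_k\rightsquigarrow B_k$, $z^{p/2}\rightsquigarrow (q_{(L)}/p_{(L)})^{1/2}$, \eqref{eq:TpTpD}$\rightsquigarrow$\eqref{eq:TpTpD-R} together with the induction via the Cayley--Hamilton identity and the Chebyshev recurrence \eqref{eq:Ch-rec} is exactly that transcription. Your verification that $\det T_L(x)=q_{(L)}/p_{(L)}$ forces the Chebyshev argument to be $\Delta_L(x)/(2\sqrt{p_{(L)}q_{(L)}})$, and the observation that the initial data $Q_0$, $Q_{-1}$ never enter, are both consistent with the paper.
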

\begin{Rem}
Notice that in the proof we didn't use the particular form of the polynomials $Q_0(x)$ and $Q_{-1}(x)$. Our reasoning applies then also to orthogonal polynomials with eventually periodic coefficients 
\begin{equation*}
p_{k+L} = p_k, \qquad q_{k+L} = q_k, \qquad r_{k+L} = r_k, \qquad k \geq k_0.
\end{equation*}
One should appropriately shift by $k_0$ the final formula \eqref{eq:per-OPRL} and modify the initial polynomials to $Q_{k_0}(x)$ and $Q_{k_0 -1}(x)$.
\end{Rem}
\begin{Rem}
	With $k_0 = 1$ and $L=1$ we recover formula~\eqref{eq:OPRL-1p}.
\end{Rem}

\bibliographystyle{amsplain}

\providecommand{\bysame}{\leavevmode\hbox to3em{\hrulefill}\thinspace}

\end{document}